\makeatletter\@addtoreset {equation}{section}\makeatother
\newtheorem {thm}{Theorem}[section]
\newtheorem {lem}[thm]{Lemma}
\newtheorem {cor}[thm]{Corollary}
\theoremstyle{definition}
\theoremstyle{assumption}
\newtheorem {assume}[thm]{Assumption}
\theoremstyle{observation}
\theoremstyle{remark}
\newtheorem{rem}[thm]{Remark}
\theoremstyle{example}
\newtheorem{ex}[thm]{Example}
\def\p{\partial}
\def\E{\operatorname{\mathbb E}}
\def\P{\operatorname{\mathbb P}}
\def\R{{\mathbb R}}
\def\N{{\mathbb N}}
\def\Z{{\mathbb Z}}
\def\T{{\mathbb T}}
\def\cX{\mathcal{X}}
\def\lbl{\label}
\def\be{\begin{equation}}
\def\ee{\end{equation}}
\def\p{\partial}
\def\1{\operatorname{\mathbf 1}}
\def\XXint#1#2#3{{\setbox0=\hbox{$#1{#2#3}{\int}$ }
\vcenter{\hbox{$#2#3$ }}\kern-.6\wd0}}
\def\tl{\tilde}
\newcommand{\diag}{{\operatorname{diag}}}
\newcommand{\bA}{\mathbf{A}}
\newcommand{\cut}[1]{\left\Vert#1\right\Vert_{\infty\to 1}}
\newcommand{\iu}{{i\mkern1mu}}
\begin{document}

\title[Turing bifurcations on deterministic and random graphs]{Turing bifurcation in the Swift--Hohenberg
  equation on deterministic and random graphs}

\author[G. S. Medvedev]{Georgi S. Medvedev}
\address{Department of Mathematics, Drexel University, 3141 Chestnut Street, Philadelphia, PA 19104}
\email{medvedev@drexel.edu}
\author[D. E. Pelinovsky]{Dmitry E. Pelinovsky}
\address{Department of Mathematics and Statistics, McMaster University, Hamilton, Ontario, Canada, L8S 4K1}
\email{dmpeli@math.mcmaster.ca}

\maketitle

\begin{abstract}
  The Swift--Hohenberg equation (SHE) is a partial differential equation 
  that explains how patterns emerge from a spatially homogeneous state.
  It has been widely used in the theory of pattern formation.
  Following a recent study by Bramburger and Holzer \cite{BraHol23}, we consider discrete SHE on
  deterministic and random
  graphs. The two families of the discrete models share the same continuum limit in the form of a nonlocal
  SHE on a circle. The analysis of the continuous system, parallel to the analysis of the classical SHE,
  shows bifurcations of spatially periodic solutions at critical values of the control parameters. However,
  the proximity of the discrete models to the continuum limit does not guarantee that the same bifurcations
  take place in the discrete setting in general, because some of the symmetries of the continuous model
  do not survive discretization.

  We use the center manifold reduction and normal forms to obtain precise information about the number
  and stability of solutions bifurcating from the homogeneous state in the discrete models on deterministic
  and sparse random graphs. Moreover, we present detailed numerical results for the discrete SHE on
  the nearest-neighbor and small-world graphs.
\end{abstract}

\section{Introduction}\label{sec.intro}
\setcounter{equation}{0}

Networks of interacting dynamical systems (a.k.a. interacting particle systems) form an important class of models of
natural and technological systems.
Examples include neuronal networks, swarm of fireflies, coupled
lasers, and power grids, to name a few \cite{DorBul12,KurPikRos,PG16,Str-Sync}.

When the number of particles is large, taking a limit as the number of particles
goes to infinity is an effective tool for analyzing network dynamics. The continuum limit in the form of a
nonlocal PDE has been very successful
for studying synchronization and pattern formation in large systems of coupled oscillators on a variety
of graphs \cite{Med14c,MedTan15b,MedTan18,WilStr06}. Therefore, it is important to understand the accuracy
of approximation of large dynamical networks
by a continuum equation. For solutions of initial value problems on convergent families of graphs, this was
done in \cite{Med14a, Med14b, Med19}. 

In many theoretical studies as well as in practical applications,
valuable information about system dynamics is gained by studying regimes bifurcating from simpler solutions under the variation of the control parameter. Therefore, it is of interest to understand
how well and under what conditions, the bifurcation structure of large networks can be obtained from its continuum limit. Specifically, suppose the continuum model undergoes a bifurcation at a certain value
of the control parameter. What can be said about the discrete system? Will it undergo a bifurcation at
a close parameter value? Will the bifurcating solutions resemble those obtained in the continuum case?
These are nontrivial questions in general, because certain features (such as symmetries) that are present
in the continuum limit may  not survive discretization. 
The questions become even more challenging if the large network system is random. 

In this paper, following the work of Bramburger and
Holzer \cite{BraHol23}, we address these questions in the context of the Turing bifurcation in the discrete Swift--Hohenberg equation (SHE) on deterministic (Cayley) and random graphs. We postpone the discussion of how our approach and results differ
from \cite{BraHol23} and turn to  the formulation of the
discrete and continuous models next.

The classical SHE  plays a prominent role in the theory of pattern formation (see \cite{Collet} and references therein). It has the following
form
\be\lbl{class-she}
\p_t u = - \left( 1 +\p^2_{x}\right)^2 u + \gamma u -u^3, \quad x\in \T\doteq \R/\Z,
\ee
where $u(t,x)\in\R$ and $\gamma$ is a control parameter. The normal form reduction near the bifurcation,
which makes use of the symmetries present in the system, shows existence of a two-parameter family of 
stable stationary nontrivial solutions
bifurcating from the trivial solution $u \equiv 0$ (see Section 2.4.3 in ~\cite{HaragusIooss}):
\begin{equation}\label{bifurcating-nonSHE}
u_{\gamma,\delta}(x) = 2\sqrt{ \frac{\gamma}{3}} \cos(x + \delta) +O(\gamma^{3/2})
\end{equation}
for small positive $\gamma$ and every $\delta\in \T$.

The nonlocal SHE is obtained by replacing the second derivative $\p^2_{x}$ with a nonlocal
operator $L_W$:
\begin{equation}\label{nSHE}
\p_t u= -\left(L_W -\kappa\right)^2 u +\gamma u -u^3,\quad x\in \T,
\end{equation}
where
\begin{align}\lbl{nLap}
  (L_W f)(x) &= \int_\T W(x,y) \left[ f(y)-f(x) \right] dy =: (K_W f)(x)- d_W(x) f(x).
\end{align}
Here, $W: \T^2\rightarrow [0,1]$ is a measurable function that
is symmetric a.e. on $\T^2$, and $f\in L^1(\T)$.

In this paper, we assume that $W$ has the following form
\begin{equation}\label{W-Cay}
  W(x,y)=S(x-y)
\end{equation}
for a given even function $S\in L^1(\T)$. In this case, $d_W$ is independent of $x \in \mathbb{T}$.
Graphons of the form \eqref{W-Cay} arise as limits of convergent sequences of  Cayley graphs. For this reason,
they are referred to as {\bf Cayley graphons}. It is instructive to study
this case first, because the nonlocal SHE \eqref{nSHE} with \eqref{nLap}  and \eqref{W-Cay} is the closest nonlocal analog of the classical model \eqref{class-she} on a periodic domain. Since $K_W$ has a discrete (real) spectrum with the only accumulation point at zero, the bifurcation of stable stationary nontrivial solutions occurs if $\kappa$ is set to 
$$
\kappa = -d_W + \lambda_{k_0}, \quad k_0 \in \mathbb{N},
$$
where $\{ \lambda_k \}_{k \in \mathbb{Z}}$ are eigenvalues of $K_W$ satisfying 
$\lambda_{-k}= \lambda_k$, $k \in \mathbb{N}$. Similar to (\ref{bifurcating-nonSHE}), we are interested in solutions bifurcating from $u = 0$ for small $\gamma$. The normal form analysis of (\ref{nSHE}) closely resembles the normal form analysis of (\ref{class-she}) and illustrates the role of translation invariance in the continuum model, see Theorem \ref{theorem-SH-local} below.

Along with the nonlocal model \eqref{nSHE} we consider a discrete model on a graph $\Gamma$ given by 
\begin{equation}\lbl{discrete-she}
  \dot u = -\left( L_{\Gamma}  -\kappa\right)^2 u +\gamma u -u^3,
  \end{equation}
  where $u(t) \in \R^n$, the cubic nonlinearity $u^3$ is understood in the componentwise sense, and the graph Laplacian $L_{\Gamma}$ is defined as follows
\begin{equation}\label{def-L_N-intro}
(L_{\Gamma} u)_i = \frac{1}{n} \sum_{j=1}^n a_{ij} \left( u_j-u_i\right),
\end{equation}
which is associated with the adjacency matrix $A_{\Gamma} = (a_{ij})_{1 \leq i, j \leq n}$.  

In this paper, we study \eqref{discrete-she} with \eqref{def-L_N-intro}
on two families of graphs, one is deterministic and the other is random.  Both families are constructed using
a given $W$ and converge to $W$ almost surely.
We refer the reader to \cite{Med14a, Med19} for the overview of the ideas from the graph limit theory that are relevant
here and to the references in the literature on this subject. For either graph model, one can show that the
solution of the IVP for \eqref{nSHE} approximates the solutions of the IVPs of the discrete problems on
finite time intervals (cf.~\cite{Med14a, Med14b, Med19}). Thus, \eqref{nSHE} provides a common continuum
limit for the discrete models on both deterministic and random graphs. 
Note that this however does not guarantee
that the solutions bifurcating from the trivial solution will resemble \eqref{bifurcating-nonSHE}. In fact,
in general, it is not clear that the discrete models will undergo a bifurcation at all. The reason for this is that the translation symmetry present in the continuum model does not survive discretization. The lack of the translation symmetry affects computations of the normal forms and thus the bifurcations. 
Our analysis predicts the exact number of solution families bifurcating from the trivial solution in the deterministic model, see Theorem \ref{theorem-SH-discrete}, based on the discrete group of symmetries. 
On the other hand, the symmetries are broken in the random model and we 
are only able to find the lower and upper bounds on the number of solution families bifurcating from the trivial solution, see Theorem \ref{thm.rand}.

The organization of the paper is as follows. In Section \ref{sec.discrete}, we complete the description of the discrete model \eqref{discrete-she} with \eqref{def-L_N-intro} by providing the details on the deterministic and random graphs.  The former are weighted Cayley graphs and the latter are W-random graphs (cf.~\cite{LovSze06}).

After that we turn to the analysis of the continuous model in Section~\ref{sec.continuum}.
Specifically, we demonstrate the well-posedness of the continuous model 
and analyze the bifurcation of the spatially homogeneous solutions. 
The bifurcation analysis is based on the normal form of the system in the Fourier coordinates.
The derivation of the normal form uses the symmetries present in the system and resembles
the analysis of the classical SHE in \cite{HaragusIooss}. The existence of a family of spatially periodic solutions is invariant with respect to the continuous spatial translations.

In Section~\ref{sec.Cayley}, we study discrete SHE on weighted Cayley graphs. The bifurcation
analysis follows the same lines as in the continuous case, where in place of  Fourier transform
we now use the discrete Fourier transform. An important distinction of the discrete model
is that the invariance under any translations in the continuous system is replaced by the
invariance with respect to a discrete group of translations. This results in 
finitely many solutions
bifurcating from the trivial solution instead of the continuous 
family (\ref{bifurcating-nonSHE}).

In Section~\ref{sec.random}, we move on to study the discrete model on random graphs, which is the main problem
addressed in this paper. In the random setting, the discrete model does not possess the discrete
translational invariance and the normal form approach, which worked  for Cayley
graphs, is no longer applicable. To overcome this obstacle, we use the proximity of the system on a sufficiently large
random graph to that analyzed in Section~\ref{sec.Cayley} to derive a leading order approximation for the normal form
of the system at hand. This allows us to study the bifurcation for the system on random graphs and to obtain precise bounds
on the number of solution families bifurcating in each random realization of the discrete graph. This is where our
approach is different from the approach taken in \cite{BraHol23}. The normal form for the bifurcation on a random graph is
compared with the one on the associated discrete deterministic graph instead of the one on the associated continuous
nonlocal model. We determine the translational parameter precisely from the normal form, whereas the 
translational parameter is defined implicitly in \cite[Theorem 4.1]{BraHol23} from a linear transformation of eigenvectors
of the linearized equations. Additional comments on the differences between our work and  \cite{BraHol23}
can be found in Remark \ref{rem-BraHol}.

In Section~\ref{sec.small-world}, we present numerical experiments with SHE on small-world graphs.
This example illustrates the selection 
of random stationary patterns in SHE on random graphs. We also explain the implications of
the  lack of  continuous translational symmetry in the corresponding averaged SHE on deterministic Cayley
graph. In Section~\ref{sec.discuss}, we discuss the utility of graphons in the analysis of dynamical systems on graphs
and potential applications of our techniques to related network models.

The approximation result needed for the
analysis of SHE on random graphs is given in the Appendix. It is derived in using the method of \cite{GueVer2016}
based on the concentration inequality for
adjacency matrices of $W$-random graphs.

\section{Discretization}\label{sec.discrete}
\setcounter{equation}{0}

The goal of this  section is to complete the formulation of the discrete model
\eqref{discrete-she} by supplying the details on the families of deterministic and random graphs.
In what follows, we assume that $n$ is even in (\ref{discrete-she}) and denote $N = n/2$. This assumption
is used to simplify computations of the normal forms. We will denote the deterministic graph by $\Gamma^N_W$
and the random graph by $\tl\Gamma^N_W$.

\subsection{The discrete SHE on deterministic graphs}

To define the family of deterministic graphs $(\Gamma^N_W)$, we fix $N\in\N$ and discretize $\T$ as follows.
 Let $h\doteq \frac{1}{2N}$,
      $x_i=ih$,  and 
      \begin{align*}
        Q_i&=\left[ x_i-\frac{h}{2}, x_i+\frac{h}{2}\right), \quad i\in
             [-N+1,  N-1],\\
        Q_N&=\left[ \frac{-1}{2},\frac{-1}{2}+\frac{h}{2}\right)\bigcup \left[ \frac{1}{2}-\frac{h}{2}, \frac{1}{2} \right).
      \end{align*}
      $\Gamma^N_W$ is a weighted graph on $2N$ nodes indexed by integers from $[-N+1, N]$. An edge between nodes $i$ and
      $j$ is supplied with a weight
      $$
      a_{ij}=a_{ji}= (2N)^2 \int_{Q_i}\int_{Q_{j}} W(x,y) dx dy \qquad -N+1\le i<j\le N.
      $$
      In addition, we assume
      $
      a_{ii}=0, \; i\in [-N+1, N].
      $
      
Since $W(x,y) = S(x-y)$ according to \eqref{W-Cay}, we have a Toeplitz matrix $A=(a_{ij})$ with 
      \begin{align}\nonumber
        a_{i+k\, j+k}& =(2N)^2 \int_{Q_{i+k}}\int_{Q_{ j+k}} S(x-y)dxdy\\
        \label{toep}
                     &=(2N)^2 \int_{Q_{i}}\int_{Q_{ j}} S(x-y)dxdy = a_{ij}.
      \end{align}
If $S_k := a_{k0}$, then $a_{ij} = S_{i-j}$ and the graph Laplacian on $\Gamma^N_W$ can be rewritten in the form:
      \begin{align}
        \nonumber
        (L_W^N u)_i &= \frac{1}{2N} \sum_{j=-N+1}^N S_{i-j} \left(u_j -u_i\right)\\
        \nonumber
        &= \frac{1}{2N} \sum_{j=-N+1}^N S_{i-j} u_j  - \left(\frac{1}{2N}\sum_{j=-N+1}^N S_{i-j}\right) u_i\\
        \label{Lap-Ad}
                    &=: \left(A_W^N u\right)_i - d_W^N u_i,
      \end{align}
where $d_W^N$ is a constant independent of $i$. The SHE on the deterministic graph $\Gamma^N_W$ has the following form:
      \begin{equation}\lbl{dshe}
        \dot u = -\left( L^N_W  -\kappa\right)^2 u +\gamma u-u^3,
      \end{equation}
      where $L_W^N$ is defined in \eqref{Lap-Ad}.
      
\subsection{The discrete SHE on random graphs}

The second family of graphs ($\tilde \Gamma_W^N$) is random and corresponds to $\Gamma_W^N, \;N \in \mathbb{N}$.
Denote  the adjacency matrix of $\tl\Gamma_W^N$ by $\tl A^N=(\tl a_{ij})_{-N+1 \leq i, j \leq N}$.
    We postulate that  two distinct nodes of $\tilde \Gamma_W^N$
 $i$ and $j$ are connected with probability $a_{ij}$, i.e.,
\begin{equation}\label{W-1}
\P(\tilde a_{ij}=1)=a_{ij}, \quad \P(\tilde a_{ij}=0)=1-a_{ij}.
\end{equation}
In addition, $\tl a_{ii}=0$ and $\tilde a_{ji}=\tilde a_{ij}$. The SHE on the random graph $\tilde\Gamma_W^N$ is given in the form:
\begin{equation}
\lbl{rshe}
\dot u= -\left( \tilde L^N_W  -\kappa\right)^2 u +\gamma u -u^3, 
\end{equation}
where
\begin{equation}\label{def-L_N}
(\tilde L^N_W u)_i = \frac{1}{N} \sum_{j=1}^N \tilde
a_{ij} \left( u_j-u_i\right) = \left( \tilde{A}_W^N u\right)_i - (\tilde{D}_W^N u)_i,
\end{equation}
where $\tilde{D}_W^N$ is a diagonal matrix.

Graphs defined by  \eqref{W-1} are dense almost surely. In dense graphs  the number of edges scales quadratically with the number of vertices, i.e., under this model $\Gamma^N_W$ has $\mathcal{O}(N^2)$ edges. In this paper, we extend the random graph model to allow for sparse graphs.  To this end, we introduce a nonincreasing sequence $\alpha_N\in (0,1]$ and modify \eqref{W-1} as follows
      \begin{equation}\label{W}
\P(\tilde a_{ij}=\alpha^{-1}_N)=\alpha_N a_{ij}, \quad \P(\tilde a_{ij}=0)=1-\alpha_N a_{ij}.
\end{equation}
As before, $\tl a_{ii}=0$ and $\tilde a_{ji}=\tilde a_{ij}$.

If $\alpha_N\equiv 1$ then \eqref{W} reduces to \eqref{W-1} and we obtain the sequence of dense random graphs
as above. On the other hand, if $\alpha_N\searrow 0$ then the expected number of edges in $\tl\Gamma^N_W$
is $N(N-1)\alpha_N\ll N^2$, which implies that $\tl\Gamma^N_W$ is a sparse graph with probability $1$. By
varying the rate of convergence of $\alpha_N$ to zero, one can control the degree of sparseness of $\tl\Gamma^N_W$.
We need to impose the following technical condition on $(\alpha_N)$:
\begin{equation}\label{cond-alpha}
  1\ge \alpha_N\ge MN^{-1/3},
  \end{equation}
for some $M>0$ dependent of $N$. 

We illustrate the families of graphs, which we just defined, with the following example.
\begin{ex}
	\label{ex.SWgraphon}
	Fix $p\in [0,1]$, $r\in (0,\frac{1}{2})$, and define an even function $S \in L^1(\mathbb{T})$ in the form:
	$$
	S(x) =\left\{
	\begin{array}{ll}
	1-p, & |x| \le r,\\
	p, & r < |x| \le \frac{1}{2}.
	\end{array}
	\right.
	$$
	If $p=0$ $\Gamma^N_W$ is very close to the Cayley graph on $\Z_{2N}$ with the set of generators
	given by $\{ \pm 1, \pm 2, \dots, \pm\lfloor 2Nr\rfloor\}$. For $p\in (0, \frac{1}{2})$,
        $\tl\Gamma^N_W$
	is  a small--world graph \cite{WatStr98}. Figure~\ref{f.pixel} illustrates the example with $W(x,y) = S(x-y)$.
\end{ex}

\begin{figure}[htb!]
	\centering
	\textbf{a}\;\includegraphics[width=.29\textwidth]{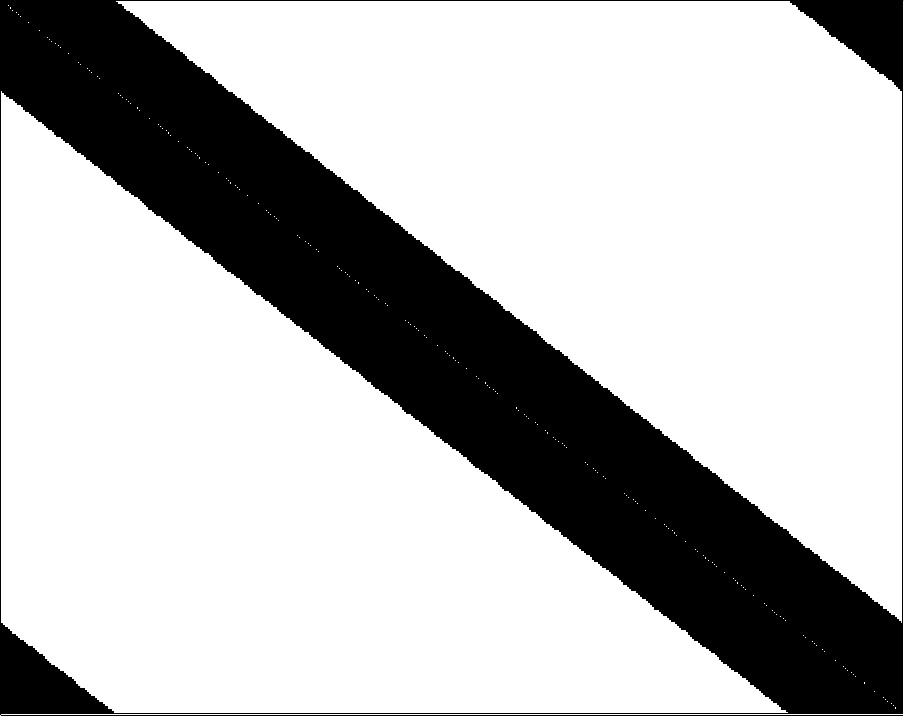}
	\textbf{b}\;\includegraphics[width=.29\textwidth]{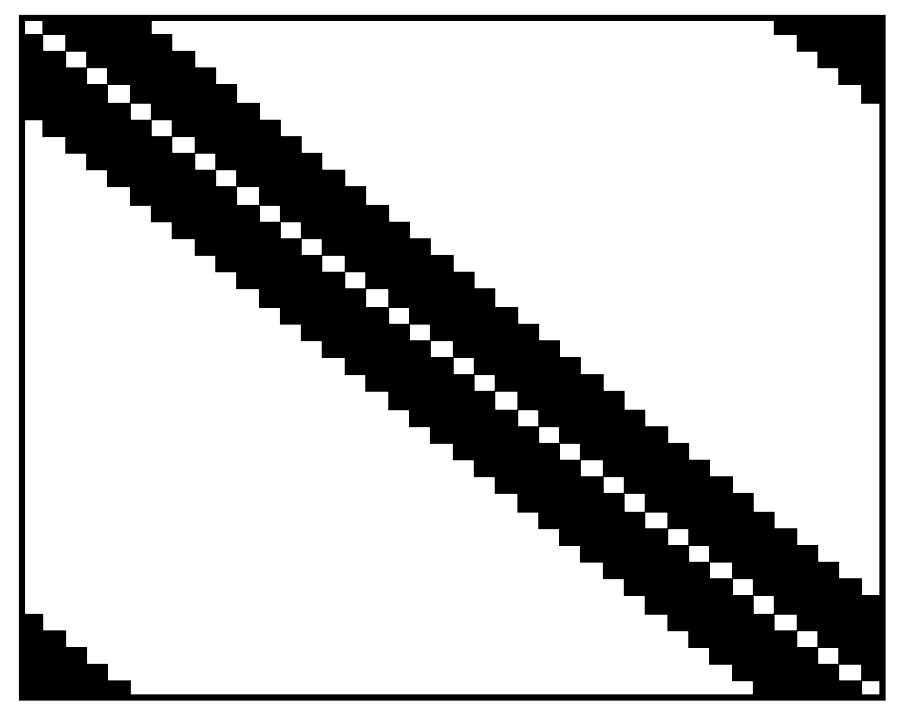}
	\textbf{c}\;\includegraphics[width=.29\textwidth]{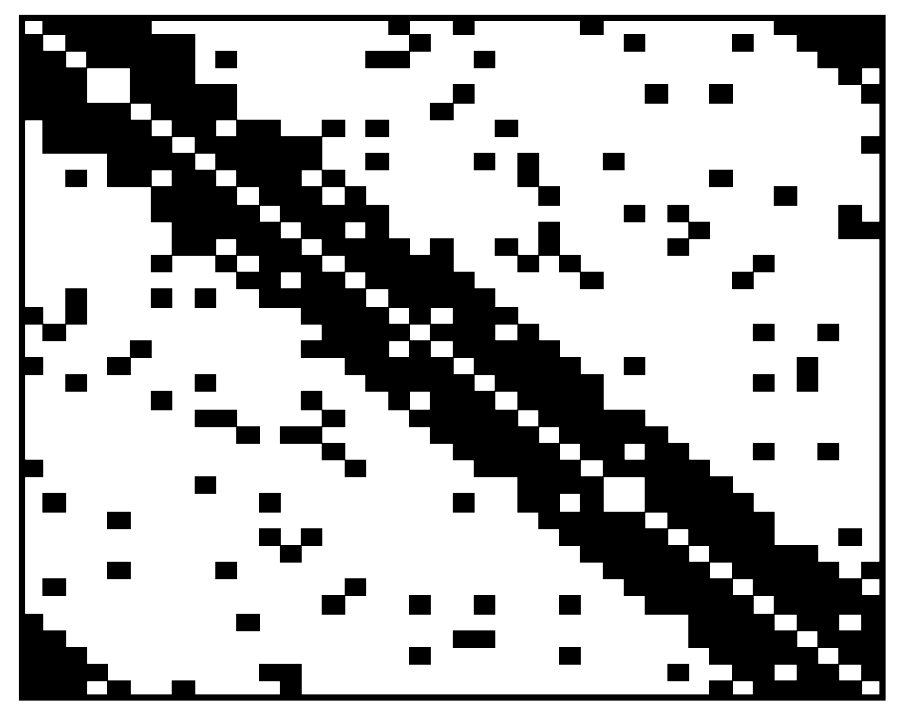}
	\caption{\textbf{(a)}~$W$ takes values $1-p$ and $p$ over the black and
		white regions respectively. \textbf{(b),(c)}~Pixel plots of the adjacency matrix of
		$\Gamma^N_W$  and its random counterpart $\tl\Gamma^N_W$ .}
	\label{f.pixel}
\end{figure}

\begin{rem} 
  The W-random graph interpretation of the small-world network \eqref{W-1} used in this paper 
  was introduced in \cite{Med14b}. A slightly different model was considered in \cite{BraHol23}.
 \end{rem}

\begin{rem}
The deterministic SHE model \eqref{dshe} is a Galerkin approximation of
the continuous SHE model \eqref{nSHE}. On the other hand, it is also 
related to the random model \eqref{rshe} via averaging, because
$\E\tilde a_{ij}=a_{ij}$ by construction. Therefore, one can view
the continuous SHE model \eqref{nSHE} as a continuum limit of either of the discrete models
\eqref{dshe} or \eqref{rshe}. For a related class of nonlocal models,
it is known that the initial-value problems for the 
deterministic and random discrete models approximate that for the
continuum one \cite{Med14a, Med19}. The same techniques
apply to the models at hand and the approximation
results continue to hold for the discrete and continuum SHEs. However,
the analysis in the remainder of this paper does not depend on the
validity of these results.
\end{rem}

\section{The continuum SHE}\label{sec.continuum}
\setcounter{equation}{0}

In this section, we study the nonlocal SHE model (\ref{nSHE}), which serves as a  continuum limit for
the discrete SHE models (\ref{dshe}) and (\ref{rshe}) on deterministic and random graphs $\Gamma^N_W$ and
$\tl\Gamma^N_W$ respectively. Our objective is to obtain a spatially dependent
steady state via a Turing bifurcation of the 
trivial solution. For $W$ in the form \eqref{W-Cay}, the nonlocal SHE on Cayley graphon can be written in the form
\begin{equation}
\label{nonshe}
\partial_t u = -\left(K_S-d_S-\kappa\right)^2 u +\gamma u-u^3,
\end{equation}
where $\kappa$ and $\gamma$ are real parameters and 
\begin{align}\label{def-K_S}
  (K_Sf)(x)& =\int_\T S(x-y)f(y) dy,\\
  \label{def-d_S}
  d_S& =\int_\T S(x) dx.
\end{align}
Throughout this section, we assume that $S$ is a given  integrable function 
on $\mathbb{T}$, i.e., $S \in L^1(\mathbb{T})$. Furthermore, $S$ is assumed to be even almost everywhere. 
By Young's convolution inequality, $K_S$ is a bounded operator 
from $L^p(\mathbb{T})$ to $L^p(\mathbb{T})$ for any $1 \leq p \leq \infty$ with the bound 
\begin{equation}\label{Young}
\| K_S f \|_{L^p(\T)} \leq \| S \|_{L^1(\T)} \| f \|_{L^p(\T)}.
\end{equation}

\begin{rem}
	\label{rem-eigenvalues}
	Since $K_S$ is a bounded operator from $L^2(\mathbb{T})$ to $L^2(\mathbb{T})$ and $\mathbb{T}$ is compact,
        the spectrum $\sigma(L_S)$ is purely discrete and consists of eigenvalues $\{ \lambda_k \}_{k \in \mathbb{Z}}$
        obtained from the Fourier modes:
	$$
	\lambda_k = \int_{\mathbb{T}} S(x) e^{-2\pi i k x} dx, \quad k \in \mathbb{Z}.
	$$ 
	Since  $S$ is an even function, 
	$$
	\sigma(K_S) = \{ \lambda_0, \; \lambda_1 = \lambda_{-1}, \; \lambda_2 = \lambda_{-2}, \; \cdots \}
	$$
	consists of real eigenvalues. However, unless $S$ is even on $\mathbb{T}$, $K_S$ is not a self-adjoint operator in $L^2(\mathbb{T})$
        and the eigenvalues $\{ \lambda_k \}_{k \in \mathbb{Z}}$ only satisfy the reduction $\lambda_k = \bar{\lambda}_{-k}$
        for $k \in \mathbb{N}$ which does not exclude the possibility of complex eigenvalues.
\end{rem}

The following lemma gives the well-posedness result of the time-evolution problem (\ref{nonshe}) in the phase space
$\cX := L^{\infty}(\mathbb{T})$, which is continuously embedded into $L^2(\mathbb{T})$ due to the bound
$\| f \|_{L^2(\T)} \leq \| f \|_{L^{\infty}(\T)}$.

\begin{lem}
	\label{lem-well-posed}
	Assume that $S \in L^1(\mathbb{T})$ and its periodic extension to $\R$ 
        is an even  function. For every $u_0 \in \cX$, there exists a
        unique global solution $u \in C^1([0,\infty),\cX)$ such that $u(0,\cdot) = u_0$.
        The unique solution $u \in C^1([0,\tau_0],\cX)$ is Lipschitz continuous with respect to
        $u_0 \in \cX$ for every finite $\tau_0 > 0$.
\end{lem}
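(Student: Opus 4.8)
The plan is to treat \eqref{nonshe} as an autonomous ordinary differential equation $\dot u = \mathcal{L} u + N(u)$ in the Banach space $\cX = L^\infty(\mathbb{T})$, where $\mathcal{L} := -(K_S - d_S - \kappa)^2 + \gamma$ is a bounded linear operator and $N(u) := -u^3$ is the Nemytskii (pointwise cubic) map. First I would check that $\mathcal{L} \in \mathcal{B}(\cX)$: by Young's inequality \eqref{Young} with $p = \infty$, $K_S$ is bounded on $L^\infty(\mathbb{T})$ with $\|K_S\|_{L^\infty \to L^\infty} \le \|S\|_{L^1(\mathbb{T})}$, and since $d_S$, $\kappa$, $\gamma$ are scalars, $\mathcal{L}$ is bounded with $\|\mathcal{L}\|_{L^\infty \to L^\infty} \le a := (\|S\|_{L^1(\mathbb{T})} + |d_S + \kappa|)^2 + |\gamma|$. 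The map $N$ is locally Lipschitz on $\cX$: for $\|u\|_{L^\infty}, \|v\|_{L^\infty} \le R$ the factorization $u^3 - v^3 = (u-v)(u^2 + uv + v^2)$ gives $\|u^3 - v^3\|_{L^\infty} \le 3R^2 \|u - v\|_{L^\infty}$.

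With these two facts, $F := \mathcal{L} + N$ is locally Lipschitz on $\cX$, so the Picard--Lindel\"of theorem in Banach space (contraction mapping applied to the integral equation $u(t) = u_0 + \int_0^t F(u(s))\, ds$) yields a unique local solution $u \in C^1([0,\tau_0], \cX)$ with $u(0) = u_0$, together with Lipschitz continuous dependence on $u_0$ via Gronwall's inequality on any interval on which the solutions remain in a common ball. It then remains to promote this to a global solution, and for this it suffices to rule out finite-time blow-up of $\|u(t)\|_{L^\infty}$.

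The global a priori bound is the crux, and here I would exploit the dissipative sign of the cubic via an $L^{2p}$ estimate followed by $p \to \infty$. For the local solution and any $p \in \mathbb{N}$ (noting $\cX$ embeds in every $L^{2p}(\mathbb{T})$ since $\mathbb{T}$ has finite measure), I differentiate $\|u\|_{L^{2p}}^{2p} = \int_{\mathbb{T}} u^{2p}\, dx$ and use the equation:
\[
\frac{d}{dt} \|u\|_{L^{2p}}^{2p} = 2p \int_{\mathbb{T}} u^{2p-1} \mathcal{L} u \, dx - 2p \int_{\mathbb{T}} u^{2p+2}\, dx.
\]
The last term is nonpositive, and by H\"older's inequality together with the uniform-in-$p$ bound $\|\mathcal{L}\|_{L^{2p} \to L^{2p}} \le a$ (again from \eqref{Young}, since $\|K_S\|_{L^q \to L^q} \le \|S\|_{L^1(\mathbb{T})}$ for all $q$) the first term is at most $2p\, a \|u\|_{L^{2p}}^{2p}$. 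Gronwall then gives $\|u(t)\|_{L^{2p}} \le e^{at} \|u_0\|_{L^{2p}} \le e^{at}\|u_0\|_{L^\infty}$, and letting $p \to \infty$ (using $\|f\|_{L^{2p}} \to \|f\|_{L^\infty}$ on the finite-measure space $\mathbb{T}$) yields $\|u(t)\|_{L^\infty} \le e^{at} \|u_0\|_{L^\infty}$ throughout the interval of existence. Since this bound is finite on every $[0,\tau]$, the standard continuation argument extends the local solution to a unique global one $u \in C^1([0,\infty), \cX)$.

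The main obstacle I anticipate is precisely this a priori estimate: because $\cX = L^\infty(\mathbb{T})$ is not a Hilbert space, the usual energy method does not directly control the sup-norm, and the nonlocal, non-smoothing operator $K_S$ prevents a clean pointwise maximum-principle argument (solutions need not be continuous in $x$). The $L^{2p}$-and-limit device circumvents this, but it hinges on two points that must be verified with care: the uniform-in-$p$ operator bound for $\mathcal{L}$ on $L^{2p}$, which is what makes the exponential rate $a$ independent of $p$ and survive the limit, and the justification of differentiating $\int_{\mathbb{T}} u^{2p}\,dx$ in $t$, which follows from $u \in C^1([0,\tau_0], L^{2p})$ and the $C^1$ dependence of the functional $v \mapsto \int_{\mathbb{T}} v^{2p}\,dx$. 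Everything else is routine Picard--Lindel\"of and Gronwall bookkeeping.
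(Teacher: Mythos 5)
Your proposal is correct, and its skeleton matches the paper's proof: bounded linear part via Young's inequality \eqref{Young}, locally Lipschitz cubic Nemytskii map, Picard--Lindel\"of for local existence and uniqueness, Lipschitz dependence on $u_0$ via Gronwall, and a dissipativity-based a priori bound plus continuation for globality. The one genuine difference lies in how the a priori bound is implemented. The paper differentiates $t \mapsto \| u(t,\cdot)\|_{L^\infty(\T)}$ directly, drops the cubic term by its repulsive sign, and applies Gronwall to the resulting inequality $\frac{d}{dt}\|u\|_{L^\infty} \leq [(\|S\|_{L^1} + |\kappa| + |d_S|)^2 + \gamma]\,\|u\|_{L^\infty}$; this is short, but it glosses over the fact that the sup-norm of a $C^1$-in-time curve is in general only Lipschitz in $t$, not differentiable, so the inequality should properly be read for an upper Dini derivative (or justified by precisely the kind of argument you supply). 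Your route --- the $L^{2p}$ energy identity, the sign of $-2p\int_\T u^{2p+2}\,dx$, the uniform-in-$p$ bound $\|\mathcal{L}\|_{L^{2p}\to L^{2p}} \leq a$ from Young's inequality, Gronwall, and then the finite-measure limit $\|f\|_{L^{2p}} \to \|f\|_{L^\infty}$ as $p \to \infty$ --- exploits exactly the same structural fact (the repulsive cubic cannot cause blow-up, and the linear part grows at most exponentially), but sidesteps the differentiability issue entirely. What the paper's version buys is brevity; what yours buys is rigor at that one step, at the cost of verifying Fr\'echet differentiability of $v \mapsto \int_\T v^{2p}\,dx$ on $L^{2p}(\T)$ and the H\"older bookkeeping, both of which you correctly identify and which do go through.
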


\begin{proof}
	By \eqref{Young} with $p = \infty$,
	$K_S$ is a bounded operator from $\cX$ to $\cX$. In addition, the nonlinear term 
	of (\ref{nonshe}) is closed in $\cX$ due to the bound $\| f^3 \|_{L^{\infty}} \leq \| f \|_{L^{\infty}}^3$. Hence the vector field 
	$$
	A(u) \doteq -(K_S-d_S-\kappa)^2 u + \gamma u - u^3
	$$
	is a $C^1$ map from $\cX$ to $\cX$. By the standard results of the semi-group theory
        \cite[Chapter 3]{Cazenave}, for every $u_0 \in \cX$, there exists a unique local solution $u \in C^1([0,\tau_0],\cX)$ for some $\tau_0 > 0$. Thanks to the repulsive cubic nonlinearity, we have the following bound:
\begin{align*}
  \frac{d}{dt} \| u(t,\cdot) \|_{L^{\infty}(\T)} & \leq \| (K_S-d_S -\kappa)^2 u \|_{L^{\infty}(\T)} +\gamma \| u(t,\cdot) \|_{L^{\infty}(\T)} \\
 & \leq [(\| S \|_{L^1} + |\kappa| + |d_S|)^2 + \gamma] \| u(t,\cdot) \|_{L^{\infty}(\T)}
\end{align*}
which shows that the $L^{\infty}$-norm of the local solution $u(t,\cdot)$ cannot blow up in a finite time $t \in [0,\tau_0]$.
Hence, the local solution $u \in C^1([0,\tau_0],\cX)$ is extended to the infinite time as $u \in C^1([0,\infty),\cX)$.
Lipschitz continuity of the local solution $u \in C^1([0,\tau_0],\cX)$ with respect to $u_0 \in \cX$ for
every finite $\tau_0 > 0$ follows from Gronwall's inequality. 
\end{proof}

\begin{rem}
		\label{rem-space}
                The nonlinear term of (\ref{nonshe}) is not closed in $L^2(\mathbb{T})$. However, it is closed in
                $H^1_{\rm per}(\mathbb{T})$ given by 
	$$
	H^1_{\rm per}(\mathbb{T}) \doteq \left\{ f \in L^2(\mathbb{T}) : \;\; f' \in L^2(\mathbb{T}) \right\},
	$$
	since $H^1_{\rm per}(\mathbb{T})$ is a Banach algebra with respect to pointwise multiplication. Moreover,
        $H^1_{\rm per}(\mathbb{T})$ is continuously embedded into a space of bounded and continuous functions
        satisfying the periodic boundary conditions. Compared to $H^1_{\rm per}(\mathbb{T})$, functions in the
        phase space $\cX = L^{\infty}(\mathbb{T})$ do not have to be continuous or to satisfy the periodic boundary conditions.
        This is more suitable in the context of solutions 
	of the discrete SHE on deterministic and random graphs. 
\end{rem}

\begin{lem}\label{rem.symmetries}
If $u(t,x)$ is a solution of (\ref{nonshe}), so are $u(t,x+h)$, $u(t,-x)$, and $-u(t,x)$.
\end{lem}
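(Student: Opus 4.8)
The plan is to verify each of the three claimed symmetries directly by substituting the transformed function into equation \eqref{nonshe} and checking that the equation is satisfied, exploiting the structure of the nonlocal operator $K_S$ together with the evenness of $S$. The key observation is that the only spatially nonlocal term is $K_S$; the remaining terms ($d_S$, $\kappa$, $\gamma u$, $u^3$, and $\p_t u$) act pointwise or are constants, so they commute trivially with any reparametrization of $x$ and with the sign flip $u \mapsto -u$ (for the odd-degree terms). Thus everything reduces to understanding how $K_S$ interacts with the translation $x \mapsto x+h$, the reflection $x \mapsto -x$, and the sign change.

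First I would treat the translation $u(t,x+h)$. Writing $v(t,x) = u(t,x+h)$ and using \eqref{def-K_S}, I would compute
\begin{equation*}
(K_S v)(t,x) = \int_\T S(x-y) u(t,y+h)\, dy = \int_\T S(x+h - z) u(t,z)\, dz = (K_S u)(t, x+h),
\end{equation*}
after the change of variables $z = y+h$ and using periodicity of both $S$ and the domain $\T$. Hence $K_S$ commutes with translation, so applying the full right-hand side of \eqref{nonshe} to $v$ reproduces the right-hand side evaluated at $x+h$, which equals $\p_t u(t,x+h) = \p_t v(t,x)$. For the reflection, setting $w(t,x) = u(t,-x)$ and substituting $y \mapsto -z$, the evenness of $S$ gives $S(x+z) = S(-x-z)$, so that $(K_S w)(t,x) = (K_S u)(t,-x)$; this is exactly the step where the hypothesis that $S$ is even is essential, and I would flag it explicitly. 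The sign symmetry $-u(t,x)$ follows because $K_S$ is linear and the nonlinearity $u \mapsto u^3$ and the linear term $u \mapsto \gamma u$ are both odd, so replacing $u$ by $-u$ multiplies the entire right-hand side by $-1$, matching $\p_t(-u)$.

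I do not anticipate a serious obstacle here: the statement is a routine invariance check, and the main point to handle carefully is the change-of-variables bookkeeping on $\T = \R/\Z$, ensuring the shifts and reflections respect periodicity so that the integrals over $\T$ are genuinely preserved. The one genuinely substantive ingredient is the evenness of $S$, which is what makes the reflection a symmetry; without it, only translation and sign-flip would survive (consistent with Remark \ref{rem-eigenvalues}, where evenness is precisely what guarantees the self-adjoint/real-spectrum structure). I would organize the write-up as three short paragraphs, one per symmetry, with the reflection case carrying the only nontrivial use of a hypothesis.
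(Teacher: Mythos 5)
Your proposal is correct and follows the same route as the paper: the paper's proof simply lists the three symmetries (translation via periodicity, reflection via evenness of $S$, sign flip via oddness of the nonlinearity) and states they ``can be easily confirmed,'' while you carry out exactly that confirmation, with the change-of-variables computations for $K_S$ and the correct identification of evenness of $S$ as the one substantive hypothesis. Nothing is missing; your write-up is just the explicit version of the verification the paper leaves to the reader.
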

\begin{proof}
The continuous SHE \eqref{nonshe} admits the following symmetries:
	\begin{itemize}
		\item the spatial translation $x\mapsto x+h$, $\forall h\in \mathbb{R}$ due to periodic conditions, 
		\item the spatial reflection $x\mapsto -x$ due to even $S$,
		\item the sign reflection $u\mapsto -u$ due to odd nonlinearity,
	\end{itemize}
which can be easily confirmed. The new solutions are generated by the symmetries.
\end{proof}

\begin{rem}
The nonlocal SHE (\ref{nonshe}) has two real parameters $\gamma$ and $\kappa$. 
Parameter $\gamma$ is small and is used to characterize Turing bifurcation 
of a spatially dependent steady state from the zero solution. 
On the other hand, parameter $\kappa$ is the tuning parameter 
defined by the bifurcation condition according to the following definition.
\end{rem}

\begin{assume} \label{ass-bifurcation}
	We fix $k_0 \in \mathbb{N}$, assume that $\lambda_k \neq \lambda_{k_0}$ for every $k \in \mathbb{N}\backslash \{ k_0 \}$, and choose $\kappa := \lambda_{k_0} - d_S = \lambda_{-k_0} - d_S$.
\end{assume}

\begin{rem} \label{rem.separation}
  Since $K_S$ is a compact operator on $L^2(\mathbb{T})$, eigenvalues $\{ \lambda_n \}_{n \in \mathbb{Z}}$
  satisfy $\lambda_n \to 0$ as $|n| \to \infty$. Hence, 
	Assumption \ref{ass-bifurcation} implies that there is $C_0 > 0$ such that 
\begin{equation}
\label{eigenvalue-bounded-away}
|\lambda_{k} - \lambda_{k_0}| \geq C_0 \;\; \mbox{\rm for all } k \in \mathbb{N}, \;\; k \neq k_0.
\end{equation}
\end{rem}

The following theorem presents the main result of this section.
Although it is an exercise from \cite[Section 2.4.3]{HaragusIooss}, we write 
the computational details explicitly, since they are useful for analysis 
of Turing bifurcation in the discrete SHE.

\begin{thm}\label{thm.Turing}
  Under Assumption \ref{ass-bifurcation}, there exists $\gamma_0 > 0$ and $C_0 > 0$
  such that for every $\gamma \in (0,\gamma_0)$ there exists a  non-trivial time-independent solution
  $u_{\gamma}(\cdot + \delta)$ in $\cX$ of the nonlocal SHE model (\ref{nonshe}), where $u_{\gamma}$ is an even function satisfying
	\begin{equation} 
\label{Stokes}
		\sup_{x \in \mathbb{T}} \left| u_{\gamma}(x) - \frac{2 \sqrt{\gamma}}{\sqrt{3}}  
		\cos(2\pi k_0 x) \right| \leq C_0 \sqrt{\gamma^3},
	\end{equation}
and $\delta \in \mathbb{T}$ is an arbitrary translational parameter. The 
orbit of time-independent solutions $\{ u_{\gamma}(\cdot + \delta) \}_{\delta \in \mathbb{T}}$ is asymptotically stable in the time evolution of the nonlocal SHE in $\cX$.
	\label{theorem-SH-local}
\end{thm}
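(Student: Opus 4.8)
The plan is to prove Theorem \ref{thm.Turing} via a Lyapunov--Schmidt or center manifold reduction built on the spectral structure of the linearization, exploiting the three symmetries from Lemma \ref{rem.symmetries}. First I would linearize the stationary equation $-(K_S-d_S-\kappa)^2 u + \gamma u - u^3 = 0$ about $u=0$. Under Assumption \ref{ass-bifurcation}, with $\kappa = \lambda_{k_0}-d_S$, the operator $\mathcal{L}_0 := -(K_S-d_S-\kappa)^2$ has the Fourier modes $e^{\pm 2\pi i k_0 x}$ in its kernel (eigenvalue $0$), while \eqref{eigenvalue-bounded-away} guarantees a uniform spectral gap: every other eigenvalue $-(\lambda_k-\lambda_{k_0})^2$ is bounded above by $-C_0^2$. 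Thus at $\gamma=0$ the linearization has a two-dimensional center eigenspace $E_c = \mathrm{span}\{\cos(2\pi k_0 x),\sin(2\pi k_0 x)\}$ and the rest of the spectrum is uniformly bounded away from the imaginary axis in the stable half-plane.

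Next I would project onto $E_c$ using the spectral decomposition. Writing $u = \xi_1 \cos(2\pi k_0 x) + \xi_2 \sin(2\pi k_0 x) + \psi$ with $\psi$ in the complementary (stable) subspace, I would solve for $\psi = \psi(\xi_1,\xi_2,\gamma) = O(|\xi|^3)$ via the implicit function theorem, using the invertibility of $\mathcal{L}_0$ on the stable subspace (the gap $C_0$). Substituting back gives a two-dimensional reduced vector field on $(\xi_1,\xi_2)$. The symmetries are the organizing principle here: the translation $x\mapsto x+h$ acts as an $SO(2)$ rotation on $(\xi_1,\xi_2)$ (it rotates the phase of the $k_0$-mode), the reflection $x\mapsto -x$ acts as $\xi_2\mapsto-\xi_2$, and $u\mapsto -u$ acts as $(\xi_1,\xi_2)\mapsto(-\xi_1,-\xi_2)$. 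Equivariance under this $O(2)$ action forces the normal form, to cubic order, to take the standard shape $\dot\xi = \gamma\,\xi - c\,|\xi|^2\,\xi$ with the constant $c>0$ determined by the cubic term; the computation in Fourier coordinates yields $c = 3/4$ after accounting for the $\cos^3$ reduction $\cos^3\theta = \tfrac34\cos\theta + \tfrac14\cos 3\theta$, so the mode-$3k_0$ content is filtered into $\psi$ and contributes only at higher order.

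From the reduced equation $\gamma\,\xi = c\,|\xi|^2\,\xi$ one reads off the circle of equilibria $|\xi|^2 = \gamma/c$, i.e. a one-parameter family parametrized by the phase $\delta\in\mathbb{T}$, which is precisely the translation orbit. Choosing the representative with $\xi_2=0$, $\xi_1 = 2\sqrt{\gamma/3}$ (since $c=3/4$ gives $\xi_1^2 = 4\gamma/3$), gives an even solution $u_\gamma$ with leading term $\tfrac{2\sqrt{\gamma}}{\sqrt 3}\cos(2\pi k_0 x)$; the remainder bound \eqref{Stokes} follows from the estimate $\|\psi\| = O(\gamma^{3/2})$ obtained in the reduction step, since $\psi\in\cX$ is controlled in $L^\infty$ through the bounded operators and the Banach-algebra-type estimates used in Lemma \ref{lem-well-posed}. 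For stability I would linearize the reduced flow at the equilibrium circle: the radial direction has eigenvalue $-2\gamma<0$ while the tangential (phase) direction has eigenvalue $0$ reflecting translation invariance, so the \emph{orbit} $\{u_\gamma(\cdot+\delta)\}$ is asymptotically stable (orbitally), and the uniform spectral gap ensures the transverse infinite-dimensional directions remain exponentially contracting for small $\gamma$.

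The main obstacle I anticipate is not the bifurcation bookkeeping but the functional-analytic setup making the reduction rigorous in $\cX = L^\infty(\mathbb{T})$ rather than a Hilbert space: since $K_S$ need not be self-adjoint (Remark \ref{rem-eigenvalues}), the spectral projections onto $E_c$ must be constructed via the resolvent/Riesz projection, and one must verify that the projection is bounded on $L^\infty$ and that $\mathcal{L}_0$ restricted to the stable subspace is boundedly invertible there with norm controlled by $C_0$. The compactness of $K_S$ on $L^2$ gives discrete spectrum, but transferring the spectral gap and resolvent bounds to the $L^\infty$ setting—where Young's inequality \eqref{Young} keeps $K_S$ bounded but orthogonality is unavailable—requires care. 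Once the projection and the invertibility estimate on the complement are established, the implicit function theorem and the $O(2)$-equivariant normal form deliver the rest routinely.
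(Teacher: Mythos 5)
Your proposal is correct and rests on the same core mechanism as the paper's proof: a two-dimensional reduction at the $\pm k_0$ Fourier modes protected by the spectral gap \eqref{eigenvalue-bounded-away}, an $O(2)$-equivariant normal form whose cubic coefficient produces the amplitude $2\sqrt{\gamma/3}$ (your $c=3/4$ in the real coordinates $(\xi_1,\xi_2)$ is exactly the paper's $-3|A|^2A$ in the complex coordinate $A$, since $|\xi|=2|A|$), a circle of equilibria generated by the translation phase, and orbital stability from the $-2\gamma$ radial eigenvalue together with the contracting transverse directions. The one substantive difference is in the functional-analytic implementation, and it is precisely where you locate your ``main obstacle.'' The paper never constructs Riesz projections in $L^\infty$: it passes to Fourier coefficients and works in $\ell^1(\mathbb{Z})$, which embeds into $\cX = L^\infty(\mathbb{T})$; in the system \eqref{she-Fourier} the linear part is already diagonal, the projection onto the center directions is coordinate truncation, the cubic nonlinearity is a convolution in the Banach algebra $\ell^1(\mathbb{Z})$, and equivariance under \eqref{transform-a} combined with closure on the subspace \eqref{closed-bif} (only odd multiples of $k_0$) yields the normal form $\dot{A} = A P_1(|A|^2)$ with real-valued $P_1$ to all orders, not merely through cubic order. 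This formulation dissolves your anticipated obstacle; alternatively, in your physical-space setup the obstacle is also milder than you suggest, since the spectral projection onto $E_c$ is the explicit rank-two Fourier projection $f \mapsto \hat{f}(k_0)e^{2\pi \iu k_0 x} + \hat{f}(-k_0)e^{-2\pi \iu k_0 x}$, which is trivially bounded on $L^\infty(\mathbb{T})$. Two minor corrections: first, under the standing assumption of Section 3 that $S$ is even, $K_S$ \emph{is} self-adjoint on $L^2(\mathbb{T})$ (Remark \ref{rem-eigenvalues} warns only about the case of non-even $S$), so non-self-adjointness is not an issue here — the only genuine issue is that $\cX$ is not a Hilbert space, which the $\ell^1$ route handles; second, for the stability conclusion you need the dynamic center manifold reduction rather than a purely stationary Lyapunov--Schmidt reduction (there is no ``reduced flow'' in the latter), which is why the paper invokes the center manifold theorem of Haragus and Iooss rather than the implicit function theorem alone.
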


\begin{proof}
In order to use the center manifold theorem and to derive slow dynamics 
along the center manifold, we use the Fourier series 
$$
u(t,x) = \sum_{k \in \Z} a_k(t) e^{2 \pi \iu k x}
$$
and obtain the evolution problem in the form 
\begin{equation}\label{she-Fourier}
\dot{a}_k = -(\lambda_k - \lambda_{k_0})^2 a_k + \gamma a_k - \sum_{k_1,k_2 \in \Z} a_{k_1} a_{k_2} a_{k-k_1-k_2},
\end{equation}
where we have used that $\kappa = \lambda_{k_0} - d_S$. 
By standard results from Fourier analysis, $u(t,\cdot) \in \cX = L^{\infty}(\mathbb{T})$ if $\{a_k(t) \}_{k \in \mathbb{Z}} \in \hat{\cX} = \ell^{1}(\mathbb{Z})$
for every $t \geq 0$. The nonlinear term of 
(\ref{she-Fourier}) is closed since $\ell^1(\Z)$ is a Banach algebra with respect to the convolution sum.

Since $u$ is real, we have $a_{-k} = \bar{a}_k$ for all $k \in \mathbb{Z}$.
Due to the three symmetries identified in Lemma~\ref{rem.symmetries}, the vector field in (\ref{she-Fourier}) is
equivariant under the transformation 
\begin{equation}\label{transform-a}
a_k \to a_k e^{2\pi \iu k h}, \quad k \in \Z, \quad  h \in \R, 
\end{equation}
and under the transformations: $a_k \to a_{-k}$ and $a_k \to -a_k$.  Consequently, the system
(\ref{she-Fourier}) is closed on the subspaces 
\begin{equation}\label{closed-real}
\hat{\cX}_{\rm sym} := \{ \{ a_k \}_{k \in \Z} \in \ell^{1}(\Z,\R) : \;\; 
a_{-k} = a_k \; \} 
\end{equation}
and
\begin{equation}\label{closed-bif}
\hat{\cX}_{\rm bif} := \{ \{ a_k \}_{k \in \Z} \in \ell^{1}(\Z,\mathbb{C}) : \;\; 
a_{-k} = a_k = 0, \; \mbox{\rm for } k \neq m k_0, \; m \in \{1,3,5,\cdots\} \},
\end{equation}
where $k_0 \in \mathbb{N}$ is defined in Assumption \ref{ass-bifurcation}.

By Assumption \ref{ass-bifurcation}, the rest of eigenvalues
$\{ \lambda_k \}_{k \in \mathbb{Z} \backslash \{ \pm k_0\}}$ are bounded away from $\lambda_{k_0} = \lambda_{-k_0}$ with the bound (\ref{eigenvalue-bounded-away}).
By the center manifold theorem \cite[Theorem 2.9]{HaragusIooss}, there exists a
center manifold in $\hat{\cX}_{\rm bif}$ spanned by $ A:=a_{k_0} \in \mathbb{C}$ and $ \bar{A}:=a_{-k_0}$.
Since the system is closed on (\ref{closed-bif}), the center manifold can be expressed as graphs of functions: 
$$
a_{m k_0} = \Psi_m(A,\bar{A}), \quad a_{-mk_0} = \bar{\Psi}_m(A,\bar{A}), \quad m \in \{3,5,\cdots \}.
$$ 
The dynamics on the center manifold can be expressed by the amplitude 
equations 
$$
\dot{A} = F_1(A,\bar{A}), \quad \dot{\bar{A}} = \overline{F_1(A,\bar{A})},
$$
where $F_1$ is a $C^{\infty}$ function in $A$ and $\bar{A}$ with $\gamma$-dependent coefficients which commutes with the symmetries of (\ref{she-Fourier}). Due to the equivariance (\ref{transform-a}), the amplitude equations can be transformed to the normal form:
\begin{equation}\label{eqn-A}
\dot{A} = A P_1(|A|^2),
\end{equation}
where $P_1$ is an analytic function in $|A|^2$ with $\gamma$-dependent coefficients. Moreover, due to the symmetry with respect to the transformation 
$a_k \to a_{-k}$, $P_1$ has real-valued coefficients. Similarly, 
we express functions $\Psi_m$ in the form:
\begin{equation}\label{eqn-Psi}
\Psi_m(A,\bar{A}) = A^m P_m(|A|^2), \quad m \in \{3,5,\cdots \},
\end{equation}
where $P_m$ is a $C^{\infty}$ function in $|A|^2$ with $\gamma$-dependent real-valued coefficients.

Due to the cubic nonlinearity in (\ref{she-Fourier}), we obtain 
\begin{equation}\label{trunc-A}
\dot{A}= \left[ \gamma  - 3 |A|^2 + \mathcal{O}(|A|^4) \right] A,
\end{equation}
where the remainder terms of the order of $\mathcal{O}(|A|^4)$ is defined by $\Psi_3$ 
since $\Psi_m$ with $m \geq 5$ give a higher-order contribution of $\mathcal{O}(|A|^6)$ to the normal form (\ref{trunc-A}). 
It follows from (\ref{trunc-A}) 
that there exists a time-independent solution of the form 
\be \label{steady}
A_{\gamma,\delta} := \frac{\sqrt{\gamma}}{\sqrt{3}} \left[ 1  + \mathcal{O}(\gamma) \right] e^{2 \pi k_0 \iu \delta},
\end{equation}
where $\delta$ is an arbitrary parameter. This time-independent solution 
(\ref{steady}) yields a non-trivial time-independent solution of the nonlocal SHE (\ref{nonshe}) in $\cX$
satisfying the expansion in (\ref{Stokes}). Since the system (\ref{she-Fourier}) is closed on (\ref{closed-real}),
if $\delta = 0$, then $a_{-k_0} = a_{k_0} = A_{\gamma,\delta = 0}$ is real and so are $a_{\pm m k_0}$ for every
$m \in \{ 3,5,\cdots\}$. This yields the even function $u_{\gamma} \in \cX$. Due to the translational symmetry (\ref{transform-a}), the parameter $\delta$ is equivalent to the translation of the solution $u_{\gamma}(\cdot + \delta)$. 

To determine stability of the time-independent solution 
$u_{\gamma}(\cdot + \delta)$ for every given $\delta \in \mathbb{R}$, 
we note that all eigenvalues  in the spectrum of $-(K_S - d_S - \kappa)^2$ are located in the
left-half plane of the complex plane with the exception of the double zero eigenvalue. Hence,
there is no unstable manifold of the system (\ref{she-Fourier}). The time-independent solution (\ref{steady})
is orbitally asymptotically stable in the time evolution of the reduced equation (\ref{eqn-A}).
By the standard decomposition near the orbit 
$\{ u_{\gamma}(\cdot + \delta)\}_{\delta \in \mathbb{T}}$, 
a perturbation of the initial solution $u_{\gamma}(\cdot + \delta_0)$ defines 
a time-dependent solution which approaches as $t \to +\infty$ 
exponentially fast to the final solution  $u_{\gamma}(\cdot + \delta_{\infty})$ 
with $\delta_{\infty}$ being close to $\delta_0$. Hence, the orbit 
$\{ u_{\gamma}(\cdot + \delta)\}_{\delta \in \mathbb{T}}$  is asymptotically stable in
the time evolution of the nonlocal SHE in $\cX$.
\end{proof}

\begin{rem}
The leading-order approximation for $\Psi_3$ in (\ref{eqn-Psi}) can be obtained 
from (\ref{she-Fourier}) and (\ref{trunc-A}). Substitution yields
	\begin{align*}
	& [\gamma - 3 |A|^2 + \mathcal{O}(|A|^4)] \left[  3 P_3(|A|^2) 
	+ |A|^2 P_3'(|A|^2) \right] \\
	& \quad =
	\left[ \gamma - (\lambda_{3k_0} - \lambda_{k_0})^2 \right] P_3(|A|^2) 
	- 1 + \mathcal{O}(|A|^2),
	\end{align*}
which is solved by expanding $P_3(|A|^2)$ in $|A|^2$ with the leading order:
	\begin{equation}
	\label{trunc-Psi}
	P_3(|A|^2) = \frac{-1}{(\lambda_{3k_0} - \lambda_{k_0})^2 + 2 \gamma} + \mathcal{O}(|A|^2),
	\end{equation}
	where $\lambda_{3k_0}\neq \lambda_{k_0}$ by the assumption.
	Similarly, one can find the leading-order expansions of $\Psi_m$ in (\ref{eqn-Psi}) for $m \geq 5$.
	\end{rem}

Related to the time-independent solution $u_{\gamma}(\cdot + \delta)$ of the nonlocal SHE in Theorem \ref{theorem-SH-local}, we can introduce the linearized operator 
in the form 
\begin{equation}
\label{lin-operator}
\mathcal{L}_{\gamma} := -(K_S-\lambda_{k_0})^2 +\gamma - 3 u_{\gamma}^2,
\end{equation}
where we have used that $\kappa = \lambda_{k_0} - d_S$ according to 
Assumption \ref{ass-bifurcation}. Since $u_{\gamma}$ is a bounded function on $\mathbb{T}$,
$\mathcal{L}_{\gamma}$ is a bounded operator from $L^2(\mathbb{T})$ to $L^2(\mathbb{T})$ for all sufficiently small 
$\gamma$ for which $u_{\gamma}$ is defined. Hence, the spectrum of $\mathcal{L}_{\gamma}$
is purely discrete and consists of real eigenvalues, see Remark \ref{rem-eigenvalues}.
The following lemma uses the smallness of $\gamma$ and gives a precise information on the
location of these eigenvalues. 

\begin{lem}
  Let $u_{\gamma} \in \cX$ be defined by Theorem \ref{theorem-SH-local} for $\gamma \in (0,\gamma_0)$ and
  assume that $u_{\gamma}' \in L^2(\mathbb{T})$. The spectrum of $\mathcal{L}_{\gamma}$ given by (\ref{lin-operator}) in $L^2(\mathbb{T})$ consists of eigenvalues ordered as 
$\{ \Lambda_k(\gamma) \}_{k \in \mathbb{N}}$ such that 
$\Lambda_1(\gamma) = 0$ and $\Lambda_k(\gamma) < 0$, $k \geq 2$ satisfy
$$
|\Lambda_2(\gamma)| \leq C_1 \gamma, \quad |\Lambda_k(\gamma)| \geq C_2, \;\; k \geq 3
$$
where $C_1$, $C_2$ are $\gamma$-independent positive constants.
	\label{lemma-SH-nonlocal}
\end{lem}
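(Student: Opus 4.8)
The plan is to view $\mathcal{L}_{\gamma}$ as a bounded self-adjoint perturbation of its value at $\gamma=0$ and to track the two eigenvalues emerging from the degenerate kernel. Write
\[
\mathcal{L}_{\gamma}=\mathcal{L}_0+V_\gamma,\qquad \mathcal{L}_0:=-(K_S-\lambda_{k_0})^2,\qquad V_\gamma:=\gamma-3u_\gamma^2 .
\]
Since $S$ is even, $K_S$ is self-adjoint on $L^2(\T)$ (Remark~\ref{rem-eigenvalues}), so $\mathcal{L}_0$ is self-adjoint with eigenvalues $-(\lambda_k-\lambda_{k_0})^2$, $k\in\Z$; the value $0$ is attained only at $k=\pm k_0$ and is thus a double eigenvalue, with eigenspace $E_0=\operatorname{span}\{\cos(2\pi k_0x),\sin(2\pi k_0x)\}$, while Assumption~\ref{ass-bifurcation} together with \eqref{eigenvalue-bounded-away} places the remaining eigenvalues of $\mathcal{L}_0$ in $(-\infty,-C_0^2]$. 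The estimate \eqref{Stokes} gives $\|u_\gamma\|_{L^\infty}^2=\mathcal{O}(\gamma)$, so $V_\gamma$ is a bounded multiplication operator with $\|V_\gamma\|_{L^2\to L^2}\le C\gamma$, and $\mathcal{L}_{\gamma}$ is self-adjoint with purely discrete real spectrum.

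First I would produce the zero eigenvalue exactly. By the translation symmetry of Lemma~\ref{rem.symmetries}, $u_\gamma(\cdot+\delta)$ solves $-(K_S-\lambda_{k_0})^2u+\gamma u-u^3=0$ for every $\delta$; since $K_S$ commutes with translations, differentiating in $\delta$ at $\delta=0$ yields $\mathcal{L}_{\gamma}u_\gamma'=0$. Under the hypothesis $u_\gamma'\in L^2(\T)$, and because \eqref{Stokes} gives $u_\gamma'=-\tfrac{4\pi k_0\sqrt{\gamma}}{\sqrt3}\sin(2\pi k_0x)+\mathcal{O}(\gamma^{3/2})\not\equiv 0$ for small $\gamma>0$, this exhibits $0$ as an eigenvalue, and I set $\Lambda_1(\gamma)=0$. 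It is in fact the top of the spectrum, all remaining eigenvalues being negative.

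Next I would separate the spectrum by a Riesz projection. As $\mathcal{L}_0$ is self-adjoint with a spectral gap of width $C_0^2$ around its double eigenvalue $0$ and $\|V_\gamma\|=\mathcal{O}(\gamma)$, for $\gamma$ small the spectral projection $P_\gamma$ of $\mathcal{L}_{\gamma}$ associated with the contour $\{|z|=C_0^2/2\}$ is well defined, has two-dimensional range, and satisfies $\|P_\gamma-P_0\|=\mathcal{O}(\gamma)$, where $P_0$ is the orthogonal projection onto $E_0$. Hence $\mathcal{L}_{\gamma}$ has exactly two eigenvalues in $(-C_0^2/2,C_0^2/2)$ and all remaining eigenvalues in $(-\infty,-C_0^2/2]$; the latter stay within $C\gamma$ of the corresponding unperturbed values $\le-C_0^2$, so they are negative and bounded away from zero. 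This gives $\Lambda_k(\gamma)<0$ and $|\Lambda_k(\gamma)|\ge C_2$ with $C_2=C_0^2/2$ for all $k\ge 3$.

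It remains to locate the second near-zero eigenvalue. The two eigenvalues in $(-C_0^2/2,C_0^2/2)$ are exactly those of the self-adjoint reduced operator $M_\gamma:=P_\gamma\mathcal{L}_{\gamma}P_\gamma$ acting on $\operatorname{ran}P_\gamma$, so $\Lambda_1(\gamma)+\Lambda_2(\gamma)=\operatorname{tr}M_\gamma$. Using $\mathcal{L}_0P_0=0$ and $\|P_\gamma-P_0\|=\mathcal{O}(\gamma)$, the cross terms contribute only at order $\gamma^2$, leaving
\[
\operatorname{tr}M_\gamma=\operatorname{tr}(P_0V_\gamma P_0)+\mathcal{O}(\gamma^2).
\]
Because the two normalized basis functions of $E_0$ have squared moduli summing to the constant $2$, one has $\operatorname{tr}(P_0V_\gamma P_0)=2\int_\T V_\gamma\,dx=2\int_\T(\gamma-3u_\gamma^2)\,dx$, and \eqref{Stokes} gives $\int_\T u_\gamma^2\,dx=\tfrac{2\gamma}{3}+\mathcal{O}(\gamma^2)$, whence $\operatorname{tr}M_\gamma=-2\gamma+\mathcal{O}(\gamma^2)$. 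Since $\Lambda_1(\gamma)=0$, this forces $\Lambda_2(\gamma)=-2\gamma+\mathcal{O}(\gamma^2)$, which is strictly negative, obeys $|\Lambda_2(\gamma)|\le C_1\gamma$ for $\gamma$ small, and for such $\gamma$ sits above the gap $-C_2$, making the labelling consistent. I expect the main obstacle to be the quantitative control of the $\mathcal{O}(\gamma^2)$ remainder in the reduced trace; the device that makes the sign of $\Lambda_2$ accessible is the exactness of $\Lambda_1=0$, which lets me read $\Lambda_2$ off the trace alone and bypass the off-diagonal Kato corrections to the reduced $2\times2$ matrix.
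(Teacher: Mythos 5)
Your proof is correct, and while it begins and ends the same way as the paper's (the exact zero eigenvalue from differentiating the translation orbit, and the eigenvalues $\Lambda_k$, $k\ge 3$, pinned near $-(\lambda_k-\lambda_{k_0})^2$ by self-adjoint perturbation theory), the key step locating $\Lambda_2$ is genuinely different. The paper reads $\Lambda_2$ off the center manifold: it identifies the second small eigenvalue with the radial linearization of the reduced flow (\ref{trunc-A}), namely $\gamma-9|A_{\gamma,\delta}|^2+\mathcal{O}(|A_{\gamma}|^4)=-2\gamma+\mathcal{O}(\gamma^2)$, which is economical because the normal form has already been computed in Theorem \ref{theorem-SH-local}, but it implicitly uses the (standard, yet unproved there) fact that the spectrum of $\mathcal{L}_{\gamma}$ near zero coincides with the linearization of the reduced dynamics. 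You instead stay entirely within spectral perturbation theory: the Riesz projection $P_\gamma$ over the contour $|z|=C_0^2/2$ isolates exactly two eigenvalues, the trace identity $\Lambda_1+\Lambda_2=\operatorname{tr}(P_\gamma\mathcal{L}_\gamma P_\gamma)=\operatorname{tr}(P_0V_\gamma P_0)+\mathcal{O}(\gamma^2)$ (valid because $\mathcal{L}_0P_0=P_0\mathcal{L}_0=0$ kills all cross terms except $D\mathcal{L}_0D$ with $D=P_\gamma-P_0$ of rank at most four and norm $\mathcal{O}(\gamma)$), and the computation $\operatorname{tr}(P_0V_\gamma P_0)=2\int_{\mathbb{T}}(\gamma-3u_\gamma^2)\,dx=-2\gamma+\mathcal{O}(\gamma^2)$ from \eqref{Stokes} then force $\Lambda_2=-2\gamma+\mathcal{O}(\gamma^2)$ once $\Lambda_1=0$ is known exactly. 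Your route is more self-contained and quantitatively explicit about the remainder, and the trace trick neatly avoids diagonalizing the reduced $2\times 2$ block; the paper's route is shorter given the machinery already built, and generalizes directly to the discrete settings of Sections \ref{sec.Cayley}--\ref{sec.random} where the same normal-form data is reused. Both yield the same expansion and the same constants up to $\mathcal{O}(\gamma^2)$.
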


\begin{proof}
  The existence of $\Lambda_1(\gamma) = 0$ follows from the translational invariance of the nonlocal
  SHE given by (\ref{nonshe}) since $\delta \in \mathbb{R}$ is a free parameter of the steady state
  $u_{\gamma}(\cdot + \gamma) \in \cX$ in Theorem \ref{theorem-SH-local}.
  Since $u_{\gamma}' \in L^2(\mathbb{T})$ is assumed, we obtain by direct differentiation that 
$$
\mathcal{L}_{\gamma} u_{\gamma}'(\cdot + \delta) = 0,
$$ 
so that $\Lambda_1(\gamma) = 0$ for $\gamma \in (0,\gamma_0)$.

The rest of the spectrum of $\mathcal{L}_{\gamma}$ in $L^2(\mathbb{T})$ follows from the perturbation theory
for self-adjoint operators with purely discrete spectrum which implies continuity of
eigenvalues with respect to small parameter $\gamma \in (0,\gamma_0)$. The eigenvalue $\Lambda_2(\gamma)$
coincides with the linearization of the slow motion at the center manifold 
given by (\ref{trunc-A}):
\begin{align*}
\Lambda_2(\gamma) &= \gamma  - 9 |A_{\gamma,\delta}|^2 + \mathcal{O}(|A_{\gamma}|^4) \\
&= - 2 \gamma + \mathcal{O}(\gamma^2),
\end{align*}
which is strictly negative with the bound $|\lambda_2(\gamma)| \leq C_1 \gamma$ with some $\gamma$-independent constant $C_1 > 0$ for $\gamma \in (0,\gamma_0)$.
Eigenvalues $\Lambda_k(\gamma)$ for $k \geq 3$ are $\gamma$-close to 
$-(\lambda_k-\lambda_{k_0})^2$ for $k \in \mathbb{Z}\backslash \{k_0,-k_0\}$ which are 
strictly negative and bounded away from $0$ by Assumption \ref{ass-bifurcation}.
Hence $|\Lambda_k(\gamma)| \geq C_2$ for all $k \geq 2$ with some
$\gamma$-independent constant $C_2 > 0$ for $\gamma \in (0,\gamma_0)$.
\end{proof}

\begin{rem}
  Due to the presence of zero eigenvalue $\Lambda_1(\gamma) = 0$, the operator $\mathcal{L}_{\gamma}$ is not invertible.
  This creates difficulties in the persistence argument when the limiting nonlocal model (\ref{nonshe}) is replaced 
	by the discrete SHE models on the deterministic or random graphs, which destroy the continuous translational symmetry.
\end{rem}

\section{The discrete SHE on Cayley graphs}\label{sec.Cayley}
\setcounter{equation}{0}

Here we study the discrete SHE \eqref{dshe} on the deterministic Cayley graph $\Gamma^N_W$. By using \eqref{Lap-Ad}, we rewrite the discrete model in the form:
\begin{equation}
\label{pertshe}
\dot u_j = -[(A^N_W-d_W^N-\kappa)^2 u]_j + \gamma u_j - u_j^3, \quad  j\in \Z_N := \Z/(2N\Z),
\end{equation}
where $N$ is integer, $\kappa$ and $\gamma$ are real parameters, and the linear operator $A^N_W : \Z_N \to \Z_N$ is given by the convolution sum
\begin{equation}
\label{def-LN}
[A^N_W u]_j = \frac{1}{2N} \sum_{l \in \Z_N} S_{j-l} u_l, \quad j \in \Z_N,
\end{equation}
where $\{ S_j \}_{j \in \Z_N}$ satisfies $S_{-j} = S_j$ for all $j \in \Z_N$. 
System of differential equations \eqref{pertshe} can be viewed as an evolution equation on $\Z_N$. The classical solutions are interpreted as 
elements of $C^1(\R,\R^{\Z_N})$, the space of continuously differentiable vector functions of $t \in \R$.

\begin{lem}\label{obs.discrete}
If $\{ u_j(t) \}_{j \in \Z_N}$ is a solution of the discrete SHE (\ref{pertshe}), so are 
\begin{equation}
\label{new-sol}
\{ u_{j+m}(t) \}_{j \in \mathbb{Z}_N}, \quad  
\{ u_{-j}(t) \}_{j \in \mathbb{Z}_N}, \quad 
\{ u_{m-j}(t) \}_{j \in \mathbb{Z}_N}, \;\; \mbox{\rm and } \;\;  
\{ -u_{j}(t) \}_{j \in \mathbb{Z}_N}.
\end{equation}
\end{lem}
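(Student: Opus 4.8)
The plan is to verify each of the four claimed symmetries directly by substitution into the discrete SHE \eqref{pertshe}, exactly paralleling the continuum argument in Lemma~\ref{rem.symmetries}. The essential structure is that the nonlinearity $u_j\mapsto\gamma u_j-u_j^3$ acts diagonally (componentwise), so it is automatically covariant under any permutation of the index set and under the sign flip; the only nontrivial point is to check that the linear operator $(A^N_W-d_W^N-\kappa)^2$ commutes with the relevant index transformations. Since $(A^N_W-d^N_W-\kappa)^2$ is a polynomial in $A^N_W$, it suffices to check that $A^N_W$ itself intertwines each transformation, and then the square (and the constant shifts by $d^N_W$ and $\kappa$, which are scalar multiples of the identity) follows automatically.

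First I would treat the discrete translation $u_j\mapsto u_{j+m}$. Setting $v_j(t):=u_{j+m}(t)$ and using the convolution form \eqref{def-LN}, I would compute $[A^N_W v]_j=\frac{1}{2N}\sum_{l\in\Z_N}S_{j-l}v_l=\frac{1}{2N}\sum_{l\in\Z_N}S_{j-l}u_{l+m}$, and after re-indexing $l\mapsto l-m$ (legitimate since the sum is over the full cyclic group $\Z_N$ and $S$ is $2N$-periodic), this equals $\frac{1}{2N}\sum_{l}S_{(j+m)-l}u_l=[A^N_W u]_{j+m}$. Thus $A^N_W$ commutes with the shift, hence so does the whole linear part, and the diagonal nonlinearity commutes trivially; therefore $v$ solves \eqref{pertshe}. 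Second, for the reflection $u_j\mapsto u_{-j}$, I would set $v_j:=u_{-j}$, compute $[A^N_W v]_j=\frac{1}{2N}\sum_l S_{j-l}u_{-l}$, substitute $l\mapsto -l$, and use the evenness $S_{-j}=S_j$ to get $\frac{1}{2N}\sum_l S_{j+l}u_l=\frac{1}{2N}\sum_l S_{-j-l}u_l=[A^N_W u]_{-j}$, giving the intertwining relation for the reflection.

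The third symmetry $u_j\mapsto u_{m-j}$ is simply the composition of the reflection with the translation by $m$, so it follows immediately from the first two rather than requiring a separate computation. The fourth, the sign flip $u_j\mapsto -u_j$, uses only that the linear operator is odd (linear) and that $\gamma u_j-u_j^3$ is an odd function of $u_j$, so replacing $u$ by $-u$ reverses the sign of both sides of \eqref{pertshe} and leaves the equation invariant. I would collect these four verifications and note that each produces a genuine solution of \eqref{pertshe}.

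I do not anticipate a serious obstacle here, as each claim reduces to an index substitution in a finite cyclic sum together with the evenness of $S$; this is the exact discrete analog of the three continuum symmetries in Lemma~\ref{rem.symmetries}, with the additional composite symmetry $u_j\mapsto u_{m-j}$ appearing because on the finite group the reflection and translation together generate the dihedral action. The only point deserving care is to make sure the re-indexing of the convolution sum is justified on $\Z_N=\Z/(2N\Z)$, which is immediate because summation runs over a full set of coset representatives and $\{S_j\}$ is defined as a $2N$-periodic even sequence; the bookkeeping of which shift constant appears (and confirming that $d^N_W$ being $i$-independent is what makes the constant term commute) is the mild technical content.
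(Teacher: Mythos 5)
Your proposal is correct and follows essentially the same approach as the paper: the paper's proof simply lists the three symmetries (discrete translation due to periodicity, reflection due to even $\{S_j\}_{j\in\Z_N}$, sign flip due to odd nonlinearity), states they are easily confirmed, and notes that the solutions in \eqref{new-sol} are generated by composing them. You have merely filled in the routine re-indexing computations that the paper leaves implicit, including the correct observation that $u_{m-j}$ arises as the composition of reflection and translation.
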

\begin{proof}
  Similarly to the continuous SHE (\ref{nonshe}), the discrete SHE  (\ref{pertshe}) admits the following three symmetries:
\begin{itemize}
	\item the discrete spatial translation $j \mapsto j + m$, $\forall m \in \Z_N$ due to periodic conditions,
	\item the spatial reflection $j \mapsto -j$ due to even $\{ S_j \}_{j \in \Z_N}$, 
	\item the sign reflection $u \mapsto -u$ due to odd nonlinearity,
\end{itemize}
which can be easily confirmed. The new solutions (\ref{new-sol}) are generated from $\{ u_j(t) \}_{j \in \Z_N}$ by symmetries. 
\end{proof}

Our objective is to obtain a spatially dependent steady state of the discrete SHE (\ref{pertshe}) via a
Turing bifurcation of the zero solution. By using the discrete Fourier transform, one can obtain eigenvalues of $A^N_W$ in the
form $\{ \lambda^N_k \}_{k \in \Z_N}$ with 
\begin{equation}
\label{eig-discrete}
\lambda^N_k = \frac{1}{2N}\sum_{j \in \Z_N} S_j e^{-\frac{i \pi k j}{N}}, \quad k \in \Z_N.
\end{equation}
Since $S_{-j} = S_j$ for all $1 \leq j \leq N-1$, then $\lambda^N_k = \lambda^N_{-k}$ for $1 \leq k \leq N-1$.
We again use parameter $\gamma$ in (\ref{pertshe}) to characterize Turing bifurcation and parameter $\kappa$ to
satisfy the bifurcation condition. 

We will locate a  spatially dependent steady state bifurcating from the zero solution by adapting the proof of Theorem
\ref{theorem-SH-local} with the discrete Fourier transform replacing Fourier series. One key new feature of the center manifold analysis is that the translational parameter $\delta \in \mathbb{T}$ which was arbitrary in Theorem \ref{theorem-SH-local} takes exactly $4 N$ admissible values under some non-degeneracy conditions. In comparison
with Assumption \ref{ass-bifurcation}, we set $k_0 = 1$ for the bifurcating mode in
order to simplify computations of the normal form. The following theorem presents the main
result of this section.

\begin{thm}\label{thm.discrete-Turing}
  Assume that $\lambda^N_k \neq \lambda^N_{1}$ for $k \neq \pm 1$ and choose
  $\kappa := \lambda^N_{1} - d^N_W= \lambda^N_{-1} - d^N_W$. If $r_N \neq 0$ in (\ref{non-degeneracy-1}),
  then there exists $\gamma_0 > 0$ and $C_0 > 0$ such that for every $\gamma \in (0,\gamma_0)$ and every integer $N \geq 3$ there exist two non-trivial time-independent solutions
  $u^N_{\gamma}, v^N_{\gamma} \in \R^{\Z_N}$ of the discrete SHE (\ref{pertshe}), where $u^N_{\gamma}$ is symmetric about $j = 0$ and satisfies
	\begin{equation} 
	\label{Stokes-discrete}
	\sup_{j \in \Z_N} \left| u_j - \frac{2}{\sqrt{3}}  \sqrt{\gamma}
	\cos\left(\frac{\pi j}{N}\right) \right| \leq C_0 \sqrt{\gamma^3}.
	\end{equation}
	and $v^G_{\gamma}$ is symmetric about the mid-point between $j = 0$ and $j = 1$ and satisfies
	\begin{equation} 
	\label{Stokes-discrete-2}
	\sup_{j \in \Z_N} \left| u_j - \frac{2}{\sqrt{3}}  \sqrt{\gamma}
          \cos\left(\frac{\pi j}{N} - \frac{\pi}{2N} \right) \right| \leq C_0 \sqrt{\gamma^3}.
	\end{equation}
One of the two solutions is asymptotically stable in the time evolution of the discrete SHE in $C^1(\R,\R^{\Z_N})$ and the other one is unstable. These solutions generate $(2N)$ asymptotically stable and $(2N)$ unstable solutions on $\Z_N$ via the discrete group of spatial translations.
	\label{theorem-SH-discrete}
\end{thm}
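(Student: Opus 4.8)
The plan is to reproduce the argument of Theorem \ref{theorem-SH-local} line by line, replacing the Fourier series by the discrete Fourier transform on $\Z_N$ and keeping careful track of the single place where the continuous translation symmetry was used. First I would set $u_j(t)=\sum_{k\in\Z_N}a_k(t)e^{\iu\pi kj/N}$ and rewrite \eqref{pertshe} as the discrete analogue of \eqref{she-Fourier},
\[
\dot a_k=-(\lambda^N_k-\lambda^N_1)^2a_k+\gamma a_k-\sum_{k_1,k_2\in\Z_N}a_{k_1}a_{k_2}a_{k-k_1-k_2},\qquad k\in\Z_N,
\]
with $k_0=1$ and all index arithmetic taken mod $2N$; reality gives $a_{-k}=\bar a_k$. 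The three symmetries of Lemma \ref{obs.discrete} become the equivariances $a_k\mapsto a_ke^{\iu\pi km/N}$ for $m\in\Z_N$, together with $a_k\mapsto a_{-k}$ and $a_k\mapsto-a_k$. The decisive difference from the continuum case \eqref{transform-a} is that the translation equivariance is now only a $\Z_{2N}$ action: it rotates the critical amplitude $A:=a_1$ by multiples of $\pi/N$, rather than by an arbitrary phase.

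Using the spectral gap $\lambda^N_k\neq\lambda^N_1$ ($k\neq\pm1$), the center manifold theorem yields a two-dimensional center manifold spanned by $A=a_1$ and $\bar A=a_{-1}$, with the remaining odd modes slaved exactly as in \eqref{eqn-Psi}. On this manifold the reduced vector field is equivariant under $\Z_{2N}$, under $A\mapsto\bar A$, and under $A\mapsto-A$. A monomial $A^p\bar A^q$ is $\Z_{2N}$-equivariant with the $a_1$-equation precisely when $p-q-1\equiv0\pmod{2N}$; the choice $p-q=1$ gives the usual $U(1)$-equivariant terms $A\,|A|^{2s}$, while the lowest-degree symmetry-breaking choice $p-q=1-2N$ (that is, $p=0$, $q=2N-1$) produces $\bar A^{\,2N-1}$, of degree $2N-1$. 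Hence the normal form is forced to have the shape
\[
\dot A=A\,P_1(|A|^2)+\bar A^{\,2N-1}\,Q(|A|^2)+\dots,
\]
where $P_1$ and $Q$ have real coefficients (by the reflection symmetry $A\mapsto\bar A$, as in the continuum) and $r_N:=Q(0)$ is the constant in \eqref{non-degeneracy-1}. For $N\geq3$ the symmetry-breaking term is of order $2N-1\geq5$, hence strictly subdominant to the cubic term $-3|A|^2A$ of $P_1$.

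With the normal form in hand the steady states are computed as in the continuum, except that the phase is now pinned. Writing $A=\rho e^{\iu\phi}$ and separating the equation $\dot A=0$ into real and imaginary parts gives $\rho^{2N-1}Q(\rho^2)\sin(2N\phi)=0$ and $\rho P_1(\rho^2)+\rho^{2N-1}Q(\rho^2)\cos(2N\phi)=0$. Since $Q(0)=r_N\neq0$, the first relation forces $\sin(2N\phi)=0$, i.e. $\phi=m\pi/(2N)$ with $m\in\{0,1,\dots,4N-1\}$ --- exactly $4N$ admissible phases --- while the second gives $\rho^2=\gamma/3+O(\gamma^{N-1})$, whence the Stokes expansions \eqref{Stokes-discrete} and \eqref{Stokes-discrete-2}. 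I would take $u^N_\gamma$ to be the $\phi=0$ solution, which lies in the real invariant subspace (the discrete analogue of \eqref{closed-real}) and is therefore even about $j=0$, and $v^N_\gamma$ to be the $\phi=-\pi/(2N)$ solution, which is even about the midpoint $j=\tfrac12$; the remaining $4N-2$ steady states are their translates under the $\Z_{2N}$ action, $2N$ of each. For stability, the amplitude direction carries the eigenvalue $-2\gamma+O(\gamma^2)<0$ exactly as in Lemma \ref{lemma-SH-nonlocal}, whereas the phase direction, neutral in the continuum, now yields $\dot\phi=-r_N\rho^{2N-2}\sin(2N\phi)+\dots$ and hence the eigenvalue $-2Nr_N\rho^{2N-2}(-1)^m$ at $\phi=m\pi/(2N)$. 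Thus the sign of $r_N(-1)^m$ decides stability; since the translates of $u^N_\gamma$ carry even $m$ and those of $v^N_\gamma$ carry odd $m$, one family of $2N$ solutions is asymptotically stable and the complementary family of $2N$ is unstable, as claimed.

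The main obstacle is neither the center manifold reduction nor the phase bookkeeping, both of which are routine adaptations of Theorem \ref{theorem-SH-local}, but the explicit determination of the symmetry-breaking coefficient $r_N$ in \eqref{non-degeneracy-1} and the verification that it is generically nonzero. Because the nonlinearity is merely cubic, the monomial $\bar A^{\,2N-1}$ is not generated directly; its coefficient must be extracted by carrying the center manifold reduction through the slaved modes $a_3,a_5,\dots,a_{2N-3}$ up to order $2N-1$, and organizing this cascade uniformly in $N$ is the delicate computational point. Everything else follows the template already established in the continuum analysis.
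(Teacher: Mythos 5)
Your proposal is correct and takes essentially the same route as the paper's own proof: discrete Fourier transform, center manifold reduction onto $A=a_1$, $\bar A=a_{-1}$, the $\Z_{2N}$-equivariant normal form whose lowest symmetry-breaking term is $\bar A^{2N-1}$ (the paper cites \cite{Chossat} where you count monomials directly), polar coordinates pinning the phase to the $4N$ values $\phi=m\pi/(2N)$, and stability decided by the sign of $r_N(-1)^m$ in the phase linearization together with the always-negative amplitude eigenvalue $-2\gamma+\mathcal{O}(\gamma^2)$. Two minor points: the amplitude correction should read $\rho^2=\gamma/3+\mathcal{O}(\gamma^2)$ rather than $\mathcal{O}(\gamma^{N-1})$ (harmless for the bounds \eqref{Stokes-discrete}--\eqref{Stokes-discrete-2}), and the "main obstacle" you flag --- computing $r_N$ and showing it is nonzero --- is not actually part of this theorem, since \eqref{non-degeneracy-1} is a hypothesis of the statement; the paper verifies it only afterwards, in the explicit examples $N=3,4$ and a remark on induction.
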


\begin{proof}
  We use the discrete Fourier transform
  \begin{equation}\label{discreteF}
    u_j = \sum_{k \in \Z_N} a_k(t)  e^{\frac{\iu \pi k j}{N}}, \quad j \in \Z_N,
    \end{equation}
	with real $a_0$, $a_{N}$, and possibly complex $a_{-k} = \bar{a}_k$ for $1 \leq k \leq N-1$.
        The Fourier amplitudes are extended periodically 
	with the period $2N$ as the sequence $\{ a_k \}_{k\in \Z_N}$. The discrete SHE (\ref{pertshe}) transforms to the
        evolution problem in the form 
	\begin{align}
\notag
	\dot{a}_k &= -(\lambda^N_k - \lambda^N_1)^2 a_k + \gamma a_k - \sum_{-N \leq k_1,k_2,k-k_1-k_2 \leq N-1} a_{k_1} a_{k_2} a_{k-k_1-k_2} \\
	\notag
	& \qquad \qquad - \sum_{-N \leq k_1,k_2,k+2N-k_1-k_2 \leq N-1} a_{k_1} a_{k_2} a_{k+2N-k_1-k_2} \\
	& \qquad \qquad - \sum_{-N \leq k_1,k_2,k-2N-k_1-k_2 \leq N-1} a_{k_1} a_{k_2} a_{k-2N-k_1-k_2},
		\label{she-discrete}
	\end{align}
where the summation terms are adjusted due to the $(2N)$-periodicity of the Fourier modes in the Fourier space and we have used that $\kappa = \lambda^N_1 - d_W^N$. Symmetries in Lemma~\ref{obs.discrete} imply that the system (\ref{she-discrete}) is equivariant
under the transformation 
	\begin{equation}\label{transform-a-discrete}
	a_k \to a_k e^{\frac{\iu \pi k m}{N}}, \quad k,m\in\Z_N
	\end{equation}
        and under the transformations: $a_k \to a_{-k}$ and $a_k \to -a_k$. In particular, the system (\ref{transform-a-discrete})
        is closed on the subspaces
	\begin{equation}
	\label{closed-real-discrete}
\{ a \in \R^{\Z_N} : \;\; 
	a_{-k} = a_k, \;\; 1 \leq k \leq N-1  \}
	\end{equation}
	and
	\begin{equation}
	\label{closed-bif-discrete}
\{ a \in \mathbb{C}^{\Z_N} : \;\; 
a_{-k} = a_k = 0, \; \mbox{\rm for } k \neq m, \; m \in \{1,3,5,\dots, M \} \},
	\end{equation}
	where $M = N$ if $N$ is odd or $M = N-1$ if $N$ is even.
		
We apply the center manifold theorem \cite[Theorem 2.9]{HaragusIooss} under the assumption that $\lambda^N_k \neq \lambda^N_{1}$ for $k \neq \pm 1$. Similar to the proof of Theorem \ref{theorem-SH-local}, there exists a center manifold of the system
(\ref{she-discrete}) spanned by $a_{1} \equiv A \in \mathbb{C}$ and $a_{-1} = \bar{A}$. Since the system is closed on (\ref{closed-bif-discrete}), the center manifold can be expressed as graphs of functions: 
\begin{equation}
\label{center-manifold}
a_{m} = \Psi_m(A,\bar{A}), \quad 
a_{-m} = \bar{\Psi}_m(A,\bar{A}), \quad m \in \{3,5,\dots, M \}.
\end{equation}
The slow dynamics on the center manifold can be expressed by the amplitude equations 
$$
\dot{A}= F_1(A,\bar{A}), \quad \dot{\bar{A}} = \overline{F_1(A,\bar{A})},
$$
where $F_1$ is a $C^{\infty}$ function in $A$ and $\bar{A}$ with $\gamma$-dependent
coefficients which commutes with the symmetries of (\ref{she-discrete}).
Compared to the normal form for the amplitude equations in (\ref{eqn-A}) and (\ref{eqn-Psi})
the discrete Fourier modes are $(2N)$-periodic so that 
$$
\left[  A e^{\frac{\iu \pi k j}{N}} \right]^{2N+1} =  A^{2N+1} e^{\frac{\iu \pi k j}{N}} \quad \mbox{\rm and} \quad \left[  \bar{A} e^{\frac{-\iu \pi k j}{N}} \right]^{2N-1} =  \bar{A}^{2N-1} e^{\frac{\iu \pi k j}{N}}.
$$
Following the classification of normal forms under the symmetry (\ref{transform-a-discrete}) in \cite{Chossat}, we transform 
the amplitude equations to the normal form: 
\begin{equation}
\label{eqn-A-discrete}
\dot{A} = A Q_1(|A|^2,A^{2N},\bar{A}^{2N}) + \bar{A}^{2N-1} R_1(|A|^2,A^{2N},\bar{A}^{2N}),
\end{equation}
where $Q_1$ and $R_1$ are $C^{\infty}$ functions in $|A|^2$, $A^{2N}$, and $\bar{A}^{2N}$ with $\gamma$-dependent coefficients.
Due to the symmetry with respect to the transformation $a_k \to a_{-k}$, $Q_1$ and $R_1$ have real-valued coefficients.
Similarly, we express functions $\Psi_m$ in the form:
	\begin{align}
	\Psi_m(A,\bar{A}) &= A^m Q_m(|A|^2,A^{2N},\bar{A}^{2N}) 
	+ \bar{A}^{2N-m} R_m(|A|^2,A^{2N},\bar{A}^{2N}),
		\label{eqn-Psi-discrete}
	\end{align}
        where $Q_m$ and $R_m$ are $C^{\infty}$ functions in $|A|^2$, $A^{2N}$, and $\bar{A}^{2N}$ with $\gamma$-dependent real-valued coefficients. 

We assume the following non-degeneracy condition: 
	\begin{equation}
	\label{non-degeneracy-1}
	r_N := R_1(0,0,0) \neq 0,
	\end{equation}
where $r_N$ is a $\gamma$-dependent real-valued coefficient. 
Similarly to (\ref{trunc-A}), we obtain from the cubic nonlinearity in (\ref{she-discrete}) that 
$$
Q_1(|A|^2,A^{2N},\bar{A}^{2N}) = \gamma - 3 |A|^2 + \mathcal{O}(|A|^4),
$$
if $N \geq 3$. By using the polar form $A = \rho e^{\iu \theta}$, we write
\begin{equation*}
\left\{ \begin{array}{l} 
\dot{\rho} = \rho [\gamma - 3 \rho^2 + \mathcal{O}(\rho^4)] + \cos(2N\theta) \rho^{2N-1} [ r_N + \mathcal{O}(\rho^2) ], \\
\dot{\theta} = -\sin(2N\theta) \rho^{2N-2} [ r_N + \mathcal{O}(\rho^2)].
\end{array} \right.
\end{equation*}
If $r_N \neq 0$, there exist two distinct time-independent solutions for $\theta = 0$ and $\theta = \frac{\pi}{2N}$ on
interval $[0,\frac{\pi}{N})$. If $N \geq 3$, both solutions can still be expressed at the leading order in the form:
\begin{equation}
\label{steady-discrete}
A_{\gamma,\delta} = \frac{\sqrt{\gamma}}{\sqrt{3}} \left[ 1 + \mathcal{O}(\gamma) \right] e^{\frac{i \pi \delta}{N}},
\end{equation}
with either $\delta = 0$ (which corresponds to $\theta = 0$) or $\delta = \frac{1}{2}$ (which corresponds to
$\theta = \frac{\pi}{2N}$). For $\delta = 0$, we get the real solution $\{ a_k \}_{k \in \Z_N}$ on the subspace (\ref{closed-real-discrete}) that corresponds to the solution $u_{\gamma}^N$ of the discrete SHE (\ref{pertshe}) which is symmetric about $j = 0$ and satisfies (\ref{Stokes-discrete}). For $\delta = \frac{1}{2}$, we get a 
complex solution  $\{ a_k \}_{k \in \Z_N}$ that corresponds to the real solution $v_{\gamma}^N$ of the discrete SHE
(\ref{pertshe}) which satisfies  (\ref{Stokes-discrete-2}). The leading order and hence the solution $v_{\gamma}^N$ is
symmetric about the mid-point between $j = 0$ and $j = 1$ due to the symmetry 
with respect to the transformation $u_j \to u_{1-j}$  in Lemma \ref{obs.discrete}.

To obtain the stability conclusion for the time-independent solutions, we observe that all eigenvalues  in the spectrum of
$-(A^N_W - \lambda^N_1)^2$ are located in the left-half plane of the complex plane with the exception of the double zero eigenvalue. 
If $r_N \neq 0$, then linearization of the time-dependent 
equation (\ref{eqn-A-discrete}) at the leading-order solution (\ref{steady-discrete}) for $N \geq 3$ yields for the perturbations $(\rho',\theta')$:
\begin{equation}
\left\{ \begin{array}{l} 
\dot{\rho}' = -2 \gamma \rho' + \mathcal{O}(\gamma^2), \\
\dot{\theta}' = \mp \left(\frac{\gamma}{3}\right)^{N-1}  \left[ r_N + \mathcal{O}(\gamma) \right] \theta', \end{array} \right.
\label{lin-eqs}
\end{equation}
where the upper sign corresponds to the solution $u_{\gamma}^N$ and the lower sign corresponds to the solution $v_{\gamma}^N$.
Dynamics of (\ref{lin-eqs}) in $\rho'$ is asymptotically stable, whereas dynamics of (\ref{lin-eqs}) in $\theta'$ is either asymptotically stable or unstable. This yields the conclusion that one of the two solutions $u_{\gamma}^N$ and $v_{\gamma}^N$ is asymptotically stable in the time evolution of the discrete SHE in $\mathbb{R}^{2N}_{\rm per}$ and the other one is unstable.

The two solutions $u_{\gamma}^N$ and $v_{\gamma}^N$ generate $(4N)$ solutions by using the discrete group of translations $u_j \mapsto u_{j \pm m}$ for every $m \in \Z_N$ by Lemma \ref{obs.discrete}. The stability of the translated solutions coincide with the stability of $u_{\gamma}^N$ and $v_{\gamma}^N$.
\end{proof}

\begin{rem}
	Since 
	$$
	\cos\left( \frac{\pi j}{N} - \frac{\pi}{N} \right) = -\cos\left(\frac{\pi j}{N} \right),
	$$
	$N$ of $(2N)$ states obtained from either (\ref{Stokes-discrete}) or (\ref{Stokes-discrete-2}) with the discrete group of translations are the sign reflections of the other $N$ states, in accordance with the sign reflection symmetry in Lemma \ref{obs.discrete}.
\end{rem}

\begin{rem} 
	The graphons defined by the graphs $\Gamma_W^N$ used the formulation of the discrete
  SHE (\ref{pertshe}) converge to $W$, the kernel used in the continuous SHE (\ref{nonshe}), in the cut-norm as
  $N\to\infty$. Thus, $\lambda_k^N \rightarrow \lambda_k $ for every fixed $k \in \mathbb{Z}$ \cite{Sze-2011}.
  From this we conclude that Assumption \ref{ass-bifurcation} with $k_0 = 1$ implies the corresponding assumption
  of Theorem \ref{theorem-SH-discrete}, i.e.,  $\lambda^N_k \neq \lambda^N_{1}$, $k \neq \pm 1$ for sufficiently
  large $N$.
\end{rem}

\begin{rem}
	If the non-degeneracy condition (\ref{non-degeneracy-1}) is not satisfied, one needs to expand functions $Q_1$ and $R_1$ in (\ref{eqn-A-discrete}) to the higher orders and to obtain the higher-order non-degeneracy conditions. If it happens that 
	$R_1 \equiv 0$ and $Q_1(|A|^2,A^{2N},\bar{A}^{2N}) = P_1(|A|^2)$, then the time-independent solution (\ref{steady-discrete}) exists with arbitrary $\delta \in \mathbb{R} / \mathbb{Z}$. In this degenerate case, there exists a family of non-trivial time-independent solutions $u_{\gamma,\delta}^N \in \R^{\Z_N}$ of the discrete SHE (\ref{pertshe}) with arbitrary parameter $\delta \in [0,1]$, where $u_{\gamma,\delta}$ satisfies
	\begin{equation} 
	\label{Stokes-discrete-3}
	\sup_{-N \leq j \leq N} \left| u_j - \frac{2}{\sqrt{3}}  \sqrt{\gamma}
	\cos\left(\frac{\pi j}{N} - \frac{\pi \delta}{2N} \right) \right| \leq C_0 \sqrt{\gamma^3}.
	\end{equation}
	The orbit of time-independent solutions $\{ u_{\gamma,\delta}^N \}_{\delta \in \mathbb{R} / \mathbb{Z}}$ is asymptotically stable in the time evolution of the discrete SHE in $C^1(\R,\R^{\Z_N})$. Although such degenerate cases may exist in other discrete models, see, e.g., \cite{HPS11}, the explicit computations for the particular discrete SHE model (\ref{pertshe}) show that $r_N \neq 0$ for every $N \geq 3$.
\end{rem}

\begin{rem}
	Cases $N = 1$ and $N = 2$ are exceptional. In both cases, the discrete Fourier transform in the subspace (\ref{closed-bif-discrete}) is given by the sum of two terms
\begin{equation}
\label{two-terms}
	u_j = A(t) e^{\frac{\iu \pi j}{N}} + \bar{A}(t) e^{-\frac{\iu \pi j}{N}},
\end{equation}
If $N = 1$, then $A$ in (\ref{two-terms}) is real and satisfies 
	$$
	\dot{A} = \gamma A - 4 A^3,
	$$
	from which the stable nontrivial solutions at $A = \pm \sqrt{\gamma}/2$ exist in addition to the unstable solution $A = 0$ for $\gamma > 0$.
If $N = 2$, then $A$ in (\ref{two-terms}) is complex and satisfies 
$$
\dot{A} = \gamma A - 3 |A|^2 A - \bar{A}^3.
$$
Using the polar form $A = \rho e^{\iu \theta}$, this equation is reduced 
to the system 
\begin{equation*}
	\left\{ \begin{array}{l} 
\dot{\rho} = \gamma \rho - 3 \rho^3 - \rho^3 \cos(4\theta), \\
\dot{\theta} = \rho^2 \sin(4 \theta), \end{array} \right.
\end{equation*}
from which the two nontrivial time-independent solutions are given by 
$$
(\rho,\theta) = \left(\frac{\sqrt{\gamma}}{2},0\right) \quad \mbox{\rm and} \quad 
(\rho,\theta) = \left(\frac{\sqrt{\gamma}}{\sqrt{2}},\frac{\pi}{4}\right).
$$
The linearization shows that the first solution is linearly unstable and the 
second solution is asymptotically stable. 
\end{rem}
	
The following two examples give details of computations of the center manifold reductions for $N =3$ and $N = 4$, for which the discrete Fourier transform in the subspace (\ref{closed-bif-discrete}) is given by the sum 
\begin{equation}
\label{four-terms}
u_j = A(t) e^{\frac{\iu \pi j}{N}} + \bar{A}(t) e^{-\frac{\iu \pi j}{N}} + B(t) e^{\frac{3 \iu \pi j}{N}} + \bar{B}(t) e^{-\frac{3\iu \pi j}{N}}.
\end{equation}
These details show that the non-degeneracy condition (\ref{non-degeneracy-1}) is satisfied for $N = 3, 4$.

\begin{ex}
	If $N = 3$, then $A$ is complex and $B$ is real. The system (\ref{pertshe}) in the decomposition (\ref{four-terms}) is reduced to the system of two equations
	$$
	\left\{ \begin{array}{l} 
	\dot{A} = \gamma A - 3 |A|^2 A - 12 B^2 A - 6 \bar{A}^2 B, \\
	\dot{B} = -(\lambda_3^N - \lambda_1^N)^2 B + \gamma B - \frac{1}{2} (A^3 + \bar{A}^3) - 6 |A|^2 B - 4 B^3.
	\end{array} \right.
	$$
	We compute the center manifold reduction $B = \Psi_3(A,\bar{A})$ in powers of $A$ according to (\ref{eqn-Psi-discrete}) with real $B$ by writing 
	\begin{align*}
	\Psi_3(A,\bar{A}) = \frac{1}{2} (A^3 + \bar{A}^3) \left[ c_0 + \mathcal{O}(|A|^2) \right],
	\end{align*}
	where $c_0$ is a real coefficient that depends on $\gamma$. 
	From the system of differential equations for $A$ and $B$, we find at the cubic order that 
	\begin{align*}
	c_0 &=  \frac{-1}{(\lambda_{3}^N - \lambda_{1}^N)^2 + 2 \gamma}, 
	\end{align*}
which agrees with the expansion (\ref{trunc-Psi}). Substituting $B = \Psi_3(A,\bar{A})$ into the first equation of the system, 
	we obtain consistently with (\ref{eqn-A-discrete}) that the slow dynamics of $A$ is given by 
	$$
	\dot{A}= A \left[ \gamma - 3 |A|^2 + \mathcal{O}(|A|^4) \right] 
	- 3c_0 \bar{A}^{5} \left[ 1 + \mathcal{O}(|A|^2) \right].
	$$
	Since $r_{N=3} = -3c_0 > 0$, the non-degeneracy condition (\ref{non-degeneracy-1}) is satisfied.
\end{ex}	
	
\begin{ex}
If $N = 4$, then both $A$ and $B$ are complex. The system (\ref{pertshe}) in the decomposition (\ref{four-terms}) is reduced to the system of two complex-valued equations
$$
\left\{ \begin{array}{l} 
\dot{A} = \gamma A - 3 (|A|^2 + 2 |B|^2) A - B^3 - 3 (\bar{A} B + \bar{B}^2) \bar{A}, \\
\dot{B} = -(\lambda_3^N - \lambda_1^N)^2 B + \gamma B - A^3 - 3 (2|A|^2 + |B|^2) B - 3 (A \bar{B}+ \bar{A}^2) \bar{B}.
\end{array} \right.
$$
We compute the center manifold reduction $B = \Psi_3(A,\bar{A})$ in powers of $A$ according to (\ref{eqn-Psi-discrete}) by writing 
\begin{align*}
\Psi_3(A,\bar{A}) = A^3 \left[ c_0 + c_1 |A|^2 + \mathcal{O}(|A|^4) \right] 
+ \bar{A}^{5} \left[ b_0 + \mathcal{O}(|A|^2) \right],
\end{align*}
where $c_0$, $c_1$, and $b_0$ are real coefficients that depend on $\gamma$. 
From the system of differential equations for $A$ and $B$, we find recursively at the cubic and quintic powers of $A$ that 
\begin{align*}
c_0 &=  \frac{-1}{(\lambda_{3}^N - \lambda_{1}^N)^2 + 2 \gamma}, \\
c_1 &=  \frac{-3}{[(\lambda_{3}^N - \lambda_{1}^N)^2 + 2 \gamma] 
	[(\lambda_{3}^N - \lambda_{1}^N)^2 + 4 \gamma]}, \\
b_0 &=  \frac{3}{[(\lambda_{3}^N - \lambda_{1}^N)^2 + 2 \gamma] 
	[(\lambda_{3}^N - \lambda_{1}^N)^2 + 4 \gamma]}.
\end{align*}
Substituting $B = \Psi_3(A,\bar{A})$ into the first equation of the system, 
we obtain consistently with (\ref{eqn-A-discrete}) that the slow dynamics of $A$ is given by 
$$
\dot{A}= A \left[ \gamma - 3 |A|^2 + \mathcal{O}(|A|^4) \right] 
-3 (b_0 + c_0^2) \bar{A}^{7} \left[ 1 + \mathcal{O}(|A|^2) \right].
$$
Since $r_{N=4} = -3(b_0 + c_0^2) < 0$, the non-degeneracy condition (\ref{non-degeneracy-1}) is satisfied.
\end{ex}

\begin{rem}
One can show with the method of mathematical induction that the non-degeneracy condition (\ref{non-degeneracy-1}) is satisfied for every $N \geq 3$. 
\end{rem}

The following lemma is important for the persistence argument, when the discrete SHE is perturbed by a small correction term.

\begin{lem}
Let $u_{\gamma}^N, v^N_{\gamma} \in \mathbb{R}^{\Z_N}$ be defined by Theorem \ref{theorem-SH-discrete} for $\gamma \in (0,\gamma_0)$ under the non-degeneracy condition (\ref{non-degeneracy-1}). Then, the matrix operator 
$$
\mathcal{A}_{\gamma} := -(A^N_W- \lambda^N_1)^2 + \gamma - 3 (u_{\gamma})^2 : \Z_N \to \Z_N,
$$ 
where $(u_{\gamma})^2$ is a diagonal matrix computed on the squared entries of $u_{\gamma}^N, v^N_{\gamma}$, is invertible.
\label{lem-SH-discrete}
\end{lem}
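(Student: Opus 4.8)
The plan is to establish that zero is not an eigenvalue of $\mathcal{A}_\gamma$, which is equivalent to invertibility. Since $A^N_W$ is the convolution operator (\ref{def-LN}) with the even kernel $S_{-j}=S_j$, it is symmetric; hence $-(A^N_W-\lambda^N_1)^2+\gamma$ is symmetric and, together with the diagonal term $3(u_\gamma)^2$, makes $\mathcal{A}_\gamma$ a real symmetric $(2N)\times(2N)$ matrix with real spectrum, as in Remark \ref{rem-eigenvalues}. The operator $\mathcal{A}_\gamma$ is exactly the linearization of the right-hand side of (\ref{pertshe}) at the steady state $u^N_\gamma$ (respectively $v^N_\gamma$), so its spectrum is governed by the stability analysis already performed in the proof of Theorem \ref{theorem-SH-discrete}.

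First I would use the spectral splitting provided by the center manifold reduction of that proof. The space $\R^{\Z_N}$ decomposes into the two-dimensional central subspace spanned by the modes $a_{\pm 1}$ (that is, $A$ and $\bar A$) and the complementary hyperbolic subspace. By the reduction principle the $(2N)$ eigenvalues of $\mathcal{A}_\gamma$ split into two central eigenvalues bifurcating from $0$ and $(2N-2)$ hyperbolic eigenvalues. On the hyperbolic subspace the eigenvalues of $\mathcal{A}_\gamma$ are $\mathcal{O}(\gamma)$-close (since $u_\gamma^2=\mathcal{O}(\gamma)$) to those of $-(A^N_W-\lambda^N_1)^2$, namely $-(\lambda^N_k-\lambda^N_1)^2$ with $k\neq\pm1$; by the hypothesis $\lambda^N_k\neq\lambda^N_1$ these are strictly negative and uniformly bounded away from $0$, and therefore remain bounded away from $0$ for all sufficiently small $\gamma$. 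This is the discrete counterpart of the estimate on $\Lambda_k(\gamma)$, $k\geq3$, in Lemma \ref{lemma-SH-nonlocal}.

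The two central eigenvalues coincide with the eigenvalues of the linearization of the reduced normal form (\ref{eqn-A-discrete}) at the leading-order steady state (\ref{steady-discrete}), which were already computed in polar coordinates in (\ref{lin-eqs}):
\[
-2\gamma+\mathcal{O}(\gamma^2) \quad\text{and}\quad \mp\left(\frac{\gamma}{3}\right)^{N-1}\bigl[\,r_N+\mathcal{O}(\gamma)\,\bigr].
\]
The first is clearly nonzero for small $\gamma>0$. The second is nonzero precisely because of the non-degeneracy condition $r_N\neq0$ in (\ref{non-degeneracy-1}): its leading term $\mp(\gamma/3)^{N-1}r_N$ does not vanish and the $\mathcal{O}(\gamma)$ correction cannot cancel it for $\gamma$ small. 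Collecting the three estimates, every eigenvalue of $\mathcal{A}_\gamma$ is nonzero for $\gamma\in(0,\gamma_0)$, after possibly shrinking $\gamma_0$, and so $\mathcal{A}_\gamma$ is invertible; the argument applies verbatim, with the lower sign, to the steady state $v^N_\gamma$.

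The hard part will be the precise control of the small central eigenvalue. In the continuous setting of Lemma \ref{lemma-SH-nonlocal} the translational invariance forces an exact zero eigenvalue with eigenvector $u_\gamma'(\cdot+\delta)$, so $\mathcal{L}_\gamma$ is genuinely singular; here the only surviving symmetry is the discrete translation group of Lemma \ref{obs.discrete}, which produces no smooth family of steady states and hence no exact null direction. The would-be zero eigenvalue is displaced by the quantity $\sim\gamma^{N-1}$ coming from the symmetry-breaking term $\bar A^{2N-1}R_1$ in (\ref{eqn-A-discrete}), and the whole conclusion rests on certifying that this displacement is genuinely nonzero---exactly the content of $r_N\neq0$, verified for $N\geq3$ in the two worked examples and in the induction remark following Theorem \ref{theorem-SH-discrete}.
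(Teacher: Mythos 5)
Your proposal is correct and takes essentially the same route as the paper: the paper's (much terser) proof likewise identifies the spectrum of $\mathcal{A}_{\gamma}$ via the center-manifold splitting, citing the two central eigenvalues computed in (\ref{lin-eqs}) (nonzero for small $\gamma$ thanks to $r_N \neq 0$) together with the hyperbolic eigenvalues near $-(\lambda_k^N - \lambda_1^N)^2$, $k \neq \pm 1$, which are bounded away from zero. Your write-up merely supplies the details (symmetry of $\mathcal{A}_{\gamma}$, the $\mathcal{O}(\gamma)$-perturbation of the hyperbolic block, and the role of the symmetry-breaking term $\bar A^{2N-1}R_1$) that the paper leaves implicit, consistent also with Remark \ref{rem-SH-discrete} on the $\mathcal{O}(\gamma^{1-N})$ growth of $\|\mathcal{A}_{\gamma}^{-1}\|$.
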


\begin{proof}
This follows from the available information about the eigenvalues of the linearized system (\ref{she-discrete}) 
at the time-independent solutions $\{ a_k\}_{k \in \Z_N}$ constructed from (\ref{eqn-A-discrete})
with (\ref{center-manifold}) and (\ref{eqn-Psi-discrete}). Eigenvalues of the linearized system (\ref{lin-eqs})
are bounded away from zero and so are eigenvalues $-(\lambda_k^N - \lambda_1^N)^2$ for $k \neq \pm 1$.
\end{proof}

\begin{rem}
	\label{rem-SH-discrete}
        The inverse matrix $\mathcal{A}_{\gamma}^{-1}$ in Lemma  \ref{lem-SH-discrete} behaves
        poorly as $\gamma \to 0$
        because the linearized system (\ref{lin-eqs}) 
has one eigenvalue $-2\gamma + \mathcal{O}(\gamma^2)$ and the other eigenvalue 
of $\mathcal{O}(\gamma^{N-1})$ under the non-degeneracy condition (\ref{non-degeneracy-1}).
As a result, $\| \mathcal{A}_{\gamma}^{-1} \| = \mathcal{O}(\gamma^{1-N}) \to \infty$ as $\gamma \to 0$.
\end{rem}

\section{The discrete SHE on $W$-random graphs}\label{sec.random}
\setcounter{equation}{0}

We now turn to the discrete SHE model (\ref{rshe}) on the $W$-random graph $\tl\Gamma^N_W$.
Using (\ref{def-L_N}), we rewrite it as follows
\begin{equation}
\label{Wnonshe}
\dot u =- \left(\tilde A^N_W - \tilde D^N_W -\kappa\right)^2 u + \gamma u - u^3, 
\qquad j \in \Z_N, 
\end{equation}
where $u\in C^1(\R, \R^{\Z_N})$ and
$\tilde A^N_W = (\tl a_{ij})$ is the adjacency matrix of $\tl\Gamma^N_W$. To make the setting for the model on a random graph consistent with the model on the deterministic Cayley graph, we have kept the periodic setting $\Z_N$ in (\ref{Wnonshe}). The matrix $\tilde{A}_W^N$ and the diagonal matrix $\tilde{D}_W^N$ acts on components of $u \in \Z_N$ inside $[-N, N-1]$ only.
The same notations $\tilde A_W^N$ and $\tilde D_W^N$ are used to denote the linear operators on $\Z_N$
and on $[-N, N-1]$.

We are interested in the Turing bifurcation of the trivial solution of \eqref{Wnonshe}.
Appendix~\ref{sec.app} shows that the matrix $\tilde{A}^N_W$ is close to the matrix $A^N_W$ in the cut-norm.
Consequently, eigenvalues of $\tilde{A}^N_W$ are close to the eigenvalues $\{ \lambda^N_k \}_{k\in \Z_N}$ of $A^N_W$
given by (\ref{eig-discrete}) \cite{Sze-2011}.
The following theorem presents the main result on the steady-state solutions
in (\ref{Wnonshe}).

\begin{thm}\label{thm.rand}
  Assume that $\lambda^N_k \neq \lambda^N_{1}$ for $k \neq \pm 1$ and choose $\kappa := \lambda^N_{1} - d^N_W = \lambda^N_{-1} - d_N^W$.
  Fix $\gamma \in (0,\gamma_0)$ with $\gamma_0$ given in Theorem \ref{theorem-SH-discrete} 
	and with $r_N \neq 0$ in (\ref{non-degeneracy-1}). There exist $N_0 \geq 3$ such that for every $N \geq N_0$ there exist at least $4$ and at most $4N$ values of $\delta$ such that the discrete SHE (\ref{Wnonshe}) admits time-independent solutions $u \in \R^{\Z_N}$ satisfying
	\begin{equation} 
	\label{Stokes-random}
	\sup_{j \in \Z_N} \left| u_j - \frac{2}{\sqrt{3}}  \sqrt{\gamma}
	\cos\left(\frac{\pi j}{N} - \delta \right) \right| \leq C_0 \sqrt{\gamma^3}.
	\end{equation}
Half of solutions are asymptotically stable in the time evolution of the discrete SHE (\ref{Wnonshe}) in $C^1(\R,\mathbb{R}^{\Z_N})$ and the other half of solutions are unstable.
 \label{thm.persistence}
\end{thm}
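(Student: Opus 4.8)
The plan is to treat the random model \eqref{Wnonshe} as a symmetry-breaking perturbation of the deterministic Cayley model of Theorem \ref{theorem-SH-discrete} and to read off the number and stability of steady states from the \emph{gradient} structure of the induced planar reduction. Write $\tilde L^N_W-\kappa=(L^N_W-\kappa)+E_N$ with $E_N:=(\tilde A^N_W-A^N_W)-(\tilde D^N_W-d^N_W I)$. The approximation result of Appendix~\ref{sec.app} guarantees that, on an event of probability tending to $1$, $\|E_N\|$ is small (polynomially in $N$), so the eigenvalues of $\tilde L^N_W$ remain close to $\{\lambda^N_k-d^N_W\}_{k\in\Z_N}$. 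Using the hypothesis $\lambda^N_k\neq\lambda^N_1$ for $k\neq\pm1$, which provides a gap $C_0>0$ between $\lambda^N_1$ and the remaining eigenvalues, the operator $-(\tilde L^N_W-\kappa)^2+\gamma$ then has exactly two eigenvalues near $\gamma>0$, arising from the perturbed $k=\pm1$ modes, separated by a uniform gap from the rest, which lie below $-C_0^2+\gamma<0$. Choosing $N_0$ so large that this gap survives $E_N$ for all $N\ge N_0$, I apply the center manifold theorem \cite[Theorem 2.9]{HaragusIooss} to reduce \eqref{Wnonshe} to a planar flow for the complex amplitude $A$ of the bifurcating mode.

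The decisive structural input is that \eqref{Wnonshe} is a gradient system, $\dot u=-\nabla E_N(u)$ with $E_N(u)=\tfrac12\langle(\tilde L^N_W-\kappa)^2u,u\rangle-\tfrac\gamma2\|u\|^2+\tfrac14\sum_j u_j^4$, since $(\tilde L^N_W-\kappa)^2$ is symmetric and positive semidefinite. Because the center manifold is invariant, the reduced flow is the gradient of the restricted energy $V:=E_N|_{\rm CM}$, so steady states of \eqref{Wnonshe} near the bifurcation correspond to critical points of $V$. The only symmetry surviving randomization is $u\mapsto-u$, i.e. $A\mapsto-A$; since $E_N$ is even in $u$, $V$ is invariant under $\theta\mapsto\theta+\pi$ when $A=\rho e^{\iu\theta}$. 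The radial balance $\gamma=3\rho^2$ forces a normally attracting ridge $\rho=\sqrt{\gamma/3}(1+\mathcal O(\gamma))$, along which the problem reduces to the critical points of the $\pi$-periodic function $\bar V(\theta):=V(\sqrt{\gamma/3},\theta)$; these are exactly the admissible phases, and the corresponding solutions satisfy \eqref{Stokes-random} with $\delta=-\theta$ by the leading order $A=\sqrt{\gamma/3}\,e^{\iu\theta}(1+\mathcal O(\gamma))$, as in Theorem \ref{theorem-SH-discrete}.

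The count now follows from two facts about $\bar V$. First, $\bar V$ is smooth and $\pi$-periodic, so it descends to a function on the circle $\R/\pi\Z$, which, being nonconstant (it inherits the deterministic resonant harmonic with coefficient $r_N\neq0$ from \eqref{non-degeneracy-1}), has at least a maximum and a minimum; these two critical points each lift to two critical points on $\R/2\pi\Z$, giving at least $4$. Second, $\bar V'$ is, to leading order, a real trigonometric polynomial of degree at most $2N$ in $\theta$: the $2N$-periodicity of the lattice $\Z_N$ together with the cubic nonlinearity closes the resonances modulo $2N$, the highest phase-dependent contribution being the harmonic $e^{\pm2N\iu\theta}$ term inherited from \eqref{eqn-A-discrete}, while $E_N$ excites only even harmonics of order $\le2N$. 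A nonzero such polynomial has at most $4N$ zeros, so there are at most $4N$ critical points. For a Morse function, critical points on the circle alternate between minima and maxima in equal number; since the gradient flow is radially contracting with rate $-2\gamma+\mathcal O(\gamma^2)$, the angular minima are asymptotically stable and the maxima unstable, yielding equal numbers of stable and unstable solutions.

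The main obstacle is the degree bound in the second fact. In the random model the perturbation $E_N$ directly excites only the second harmonic of $\bar V$, and one must show that its spreading through the cubic term and the center-manifold graph produces even harmonics of order at most $2N$, with the genuine degree-$2N$ coefficient still controlled by $r_N$, and that the discarded higher-order corrections alter neither the existence nor the simplicity of the critical points. This is delicate precisely because two very different scales compete: $\|E_N\|$ is only polynomially small in $N$, whereas the deterministic restoring harmonic is of size $\mathcal O(\gamma^{N-1})$ (cf. Remark \ref{rem-SH-discrete}). Quantifying $V$ against these scales on the probability-one event of Appendix~\ref{sec.app}, ensuring $\bar V\not\equiv\mathrm{const}$, and arranging genericity of the Morse property constitute the technical heart of the argument.
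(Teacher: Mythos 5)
Your proposal is correct in outline and shares the paper's skeleton---the same splitting of the random operator from the deterministic Cayley one controlled by the Appendix concentration estimate, the same center manifold reduction to a planar amplitude equation, the same radial reduction to a circle, and the same upper bound (at most $4N$ zeros of a nonzero trigonometric polynomial of degree at most $2N$ whose top coefficient $[r_N+\mathcal{O}(\mu)]\rho^{2N-2}$ is nonzero)---but your key structural input is genuinely different. The paper writes the normal form explicitly, $F_1(A,\bar A)=(\gamma+\mu\alpha_1)A+\mu\alpha_2\bar A+\cdots+[r_N+\mathcal{O}(\mu)]\bar A^{2N-1}+\cdots$, passes to polar coordinates, and counts roots of the angular equation, which it writes as a sum of sines of even multiples of $\theta$; the lower bound $4$, the divisibility by $4$, and the half-stable/half-unstable split are read off from this explicit form and its linearization. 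You instead exploit the fact that \eqref{Wnonshe} is a gradient system, so that equilibria on the locally invariant center manifold are exactly the critical points of the restricted energy, the $u\mapsto -u$ symmetry makes the angular potential $\pi$-periodic, and Morse theory on the circle yields at least two minima and two maxima together with the stability split. This buys robustness at precisely the point where the paper's argument is thinnest: in the random model the reflection symmetry $a_k\to a_{-k}$ is broken, so there is no a priori reason for the coefficients $\alpha_2,\beta_1,\dots$ to be real (only $\alpha_1$, coming from the diagonal entry $\Delta_{1,1}$, i.e.\ a Hermitian form of a real symmetric matrix, is automatically real); a complex coefficient of $|A|^2A$, say, would insert a $\theta$-independent drift into the angular equation, and nothing in the paper's sine-polynomial computation by itself excludes a zero-free angular equation, whereas your gradient structure forbids such rotation (no periodic orbits or recurrence), forcing equilibria to exist. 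Conversely, the paper's explicit computation identifies $\mu\alpha_1=\Delta_{1,1}+\mathcal{O}(\mu^2)$ and $\mu\alpha_2=\Delta_{1,-1}+\mathcal{O}(\mu^2)$ and the linearization rates, which feed its subsequent remarks (e.g.\ exactly four solutions when $\alpha_2\neq 0$); your variational argument does not produce these identifications. Two caveats on your write-up: the reduced flow is the gradient of $V=E_N|_{\rm CM}$ with respect to the metric induced on the manifold (equivalently, it is gradient-like), not the flat gradient in the $(A,\bar A)$ chart---this suffices for everything you use, but should be stated; and the gaps you flag at the end (the remainder beyond the degree-$2N$ truncation, the scale competition between $\mu$ and the $\mathcal{O}(\gamma^{N-1})$ restoring harmonic, the Morse nondegeneracy behind the exact stable/unstable split) are genuine, but they are present, and equally unresolved, in the paper's own proof, which simply asserts the root count and the linearization for the truncated polar system.
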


\begin{rem}
The system (\ref{Wnonshe}) does not have any symmetries except of the symmetry with respect to the sign reflection $u \mapsto -u$. Theorem~\ref{theorem-SH-discrete} gives existence
and stability of two distinct state $u_{\gamma}^N$ and $v_{\gamma}^N$ for sufficiently small
$\gamma > 0$, which are translated to every point of the lattice chain by the discrete translational
symmetry of Lemma \ref{obs.discrete}. Since the lattice chain in $\Z_N$ has $2N$ sites, we can count $4N$ distinct steady-state solutions in the discrete SHE model (\ref{pertshe}),
of
which $2N$ are stable and $2N$ are unstable. Compared to this conclusion, we do not have an exact count of the number of steady-state solutions on the random graph because of the broken symmetries. The number of steady solution is a random number divisible by 4 between 4 and 4N.
\end{rem}

\begin{rem}
Recall that the matrix $\mathcal{A}_{\gamma}$ in Lemma \ref{lem-SH-discrete} 
has a very small eigenvalue of the size $\mathcal{O}(\gamma^{N-1})$ for large $N \geq 3$.
Due to this small eigenvalue, we cannot not prove that the steady-state solutions of (\ref{pertshe})
persist as the steady-state solutions of (\ref{Wnonshe}) because the perturbation is not sufficiently small, see Appendix \ref{sec.app}, and the implicit function theorem cannot be used for the persistence argument. To overcome this problem,
we develop again the approach based on the center manifold reduction, where the main difference
is that the linear part of (\ref{Wnonshe}) is no longer diagonalizable by the discrete Fourier transform.
The cubic nonlinear term still enjoys the same transformation under the discrete Fourier transform as in
the proof of Theorem \ref{theorem-SH-discrete}. The other distinction from the deterministic setting
is that we can no longer use $\gamma > 0$
as a small continuation parameter. Instead, we consider the small parameter $\gamma$ as fixed in $(0,\gamma_0)$ with $\gamma_0$ given in 
Theorem \ref{theorem-SH-discrete} and continue the solution with respect to an additional small parameter $\mu$
induced by randomness which is only small for sufficiently large $N \geq N_0$.
\end{rem}

\begin{proof}[Proof of Theorem \ref{thm.persistence}]
As in the proof of Theorem~\ref{theorem-SH-discrete}, we use discrete Fourier transform
$$
u_j(t) =\sum_{k \in \Z_N} a_k(t) e^{\frac{ \iu \pi kj}{N}}, \quad j\in\Z_N.
$$
To apply the result in Appendix \ref{sec.app}, we write
$$
u=Fa, \quad F=\left(\omega^{jk}\right)_{-N\le j,k\le N-1}, \quad \omega=e^{\iu\pi/N},
$$
with $F^{-1}=(2N)^{-1} F^\ast$, where $F^\ast$ stands for the conjugate transpose of $F$.
By applying the inverse Fourier transform to both sides of \eqref{Wnonshe}, we have
	\begin{align}
	\notag
	\dot{a}_k &= -(\hat{L} a)_k + \gamma a_k 
	- \sum_{-N \leq k_1,k_2,k-k_1-k_2 \leq N-1} a_{k_1} a_{k_2} a_{k-k_1-k_2} \\
	\notag
	& \qquad \qquad - \sum_{-N \leq k_1,k_2,k+2N-k_1-k_2 \leq N-1} a_{k_1} a_{k_2} a_{k+2N-k_1-k_2} \\
	& \qquad \qquad - \sum_{-N \leq k_1,k_2,k-2N-k_1-k_2 \leq N-1} a_{k_1} a_{k_2} a_{k-2N-k_1-k_2},
	\label{she-random}
	\end{align}
	where
$$
\hat{L}= F^{-1} \left(\tilde A^N_W - \tilde D^N_W -\kappa\right)^2 F.
$$
Recall that similarity transformation $A^N_W \mapsto F^{-1} A^N_W F$ diagonalizes the
linear part of the deterministic model \eqref{pertshe}. Thus,
\begin{equation}\label{meet-hatL}
\hat{L} = \diag\{(\lambda_{-N}^N - \lambda_1^N)^2\}, \dots,
  (\lambda_{N-1}^N - \lambda_1^N)^2\} + \Delta,
  \end{equation}
where
\begin{equation}\label{meetDelta}
  \Delta = F^{-1} \left\{ ( (2N)^{-1} \tilde{A}^N_W - \tilde{D}^N_W -\kappa)^2 -((2N)^{-1} A^N_W - d^N_W -\kappa)^2\right\} F
  \end{equation}
  and we have used that $\kappa = \lambda_1^N - d_W^N$.  $\tl A$ is a symmetric matrix, whose 
entries above the main diagonal are
  independent random variables.
  Further,  $\E \tl A= A$.
  From these facts, by Bernstein's inequality, it follows  that with probability at least
  $1- \mathcal{O}\left(25^{-N}\right)$, we have 
\begin{equation}\label{key-estimate}
\max_{-N \leq j,k \leq N-1}  |\Delta_{jk} | \le C (\alpha^3_NN)^{-1/2}
\end{equation}
(cf.~Lemma~\ref{lem.Bern}).
Pick $\mu>0$ small and fixed. It follows from (\ref{key-estimate}) that with high probability for sufficiently large $N$, we have
  \begin{equation}\label{small-mu}
\max_{-N \leq j,k \leq N-1}  |\Delta_{jk} | \le \mu.
    \end{equation}

Proceeding along the lines of the proof of Theorem \ref{theorem-SH-discrete}, 
we express the slow manifold of the system (\ref{she-random}) as the graph 
of functions $a_0 = \Psi_0(A,\bar{A})$, $a_1 = A$, $a_{-1} = \bar{A}$, and 
	$$
	a_k = \Psi_k(A,\bar{A}), \quad 
	a_{-k} = \bar{\Psi}_k(A,\bar{A}), \quad k \in \{2,3,\dots,N \}.
	$$ 
Note that the difference from (\ref{center-manifold}) that the general $\hat{L}$ couples the odd-numbered and even-numbered Fourier amplitudes.

The functions $\Psi_k(A,\bar{A})$, $k \neq \pm 1$ can be obtained from the condition that $\gamma$ and $\mu$ are small, whereas $(\lambda_k^N - \lambda_1^N)^2$ in $\hat{L}_{kk}$ are strictly positive and bounded away from zero for every $k \neq \pm 1$. The dynamics on the slow manifold can be expressed by the amplitude equation:
\begin{equation}
\label{eqn-A-random}
\dot{A} = F_1(A,\bar{A}), \quad \dot{\bar{A}} = \overline{F_1(A,\bar{A})},
\end{equation}
where $F_1$ is a $C^{\infty}$ function in $A$ and $\bar{A}$ with ($\gamma$,$\mu$)-dependent coefficients. The Taylor series expansion 
includes only odd powers of $A$ and $\bar{A}$ due to the only symmetry 
of the random model (\ref{Wnonshe}) with respect to the sign reflection $u \mapsto -u$. 

The power expansion of $F_1(A,\bar{A})$ is different from those in the proof of Theorem \ref{theorem-SH-discrete} due to the presence of the perturbation terms 
in $\hat{L}_{\kappa}$ which are of the order of $\mathcal{O}(\mu)$. In addition, it is no longer true that $F_1(A,\bar{A})$ can be written in the form (\ref{eqn-A-discrete}). Nevertheless, the nonzero terms in the expansion 
of (\ref{eqn-A-discrete}) remain dominant terms in the expansion of 
(\ref{eqn-A-random}) if $\mu$ is sufficiently small. Thus, we have 
\begin{align}
F_1(A,\bar{A}) &= (\gamma + \mu \alpha_{1}) A + \mu \alpha_{2} \bar{A} \notag \\
& \quad + (-3 + \mu \beta_1) |A|^2 A + \mu \beta_2 A^3 + \mu \beta_3 |A|^2 \bar{A}+ \mu \beta_4 \bar{A}^3 + \dots \notag \\
& \quad   + [r_N + \mathcal{O}(\mu)] \bar{A}^{2N-1} + \dots \label{F-expansion}
\end{align}
with some $(\gamma,\mu)$-dependent coefficients which are bounded as 
$|\gamma| + |\mu| \to 0$. Since $\gamma \in (0,\gamma_0)$ is fixed and $\mu$ in (\ref{small-mu}) can be chosen sufficiently small, we have 
$\gamma + \mu \alpha_{1} > 0$, $-3 + \mu \beta_1 < 0$, and $r_N + \mathcal{O}(\mu) \neq 0$. By using the polar form $A = \rho e^{\iu \theta}$, we write
\begin{align}
\label{system-rho-theta}
	\left\{ \begin{array}{l} 
	\dot{\rho} = [\gamma + \mu \alpha_1 + \mu \alpha_2 \cos(2\theta)] \rho \\
	\qquad \qquad + 
	[-3 + \mu \beta_1 + \mu \beta_2 \cos(2\theta) + \mu \beta_3 \cos(2\theta) + \mu \beta_4 \cos(4 \theta)] \rho^3 + \mathcal{O}(\rho^5), \\
	\dot{\theta} = - \mu \alpha_2 \sin(2\theta) + \mu [ \beta_2 \sin(2\theta) - \beta_3 \sin(2\theta) - \beta_4 \sin(4 \theta) ] \rho^2 + \dots\\
	\qquad \qquad -[r_N + \mathcal{O}(\mu)] \rho^{2N-2} \sin(2N\theta) + \mathcal{O}(\rho^{2N}). \end{array} \right.
\end{align}
Since $\mu$ is selected to be much smaller than $\gamma$, there exists only one positive root of equation 
$$
\gamma + \mu \alpha_1 + \mu \alpha_2 \cos(2\theta) +
[-3 + \mu \beta_1 + \mu \beta_2 \cos(2\theta) + \mu \beta_3 \cos(2\theta) + \mu \beta_4 \cos(4 \theta)] \rho^2 + \mathcal{O}(\rho^4) = 0
$$
given by 
\begin{equation}
\label{roots-rho}
\rho = \frac{\sqrt{\gamma}}{\sqrt{3}} \left[ 1 + \mathcal{O}(\gamma,\mu) \right],
\end{equation}
independently of the value of $\theta$. The value of $\theta$ is defined from the second equation of system (\ref{system-rho-theta}) with $\rho^2 = \mathcal{O}(\gamma)$. Since there is no discrete translational symmetry of the system (\ref{Wnonshe}), compared to the system (\ref{pertshe}), we have to consider $\theta$ defined on $[0,2\pi)$. Roots of $\theta$ are defined from equation 
\begin{align*}
& - \mu \alpha_2 \sin(2\theta) + \mu [ \beta_2 \sin(2\theta) - \beta_3 \sin(2\theta) - \beta_4 \sin(4 \theta) ] \rho^2 + \dots\\
& \qquad \qquad -[r_N + \mathcal{O}(\mu)] \rho^{2N-2} \sin(2N\theta) + \mathcal{O}(\rho^{2N}) = 0,
\end{align*}
where $\rho$ is expressed from (\ref{roots-rho}). Since the left-hand-side is a trigonometric polynomial in $2 \theta$, there exist at least four roots of $\theta$ in $[0,2\pi)$ and at most $4N$ roots since $r_N + \mathcal{O}(\mu) \neq 0$. The total number of roots is divisible by $4$. For each root of $\theta$, the root of $\rho$ in (\ref{roots-rho}) is uniquely defined and the bound (\ref{Stokes-random}) with $\delta := \frac{N}{\pi} \theta$ follows.
	
The stability conclusion of Theorem \ref{thm.rand} follows from the linearization of system (\ref{system-rho-theta}) near each root and from 
the fact that all other eigenvalues of $-\hat{L}_{\kappa}$ are strictly negative 
of the order of $\mathcal{O}(1)$ for small $\gamma$ and $\mu$. 
Using notations $(\rho',\theta')$ for perturbation terms to the root $(\rho,\theta)$, we write 
the linearized equations in the form: 
		\begin{align*}
		\left\{ \begin{array}{l} 
	\dot{\rho'} = \displaystyle - [2 \gamma + \mathcal{O}(\gamma^2,\mu)] \rho' + \mathcal{O}(\sqrt{\gamma} \mu) \theta', \\
	\dot{\theta'} = \displaystyle \mathcal{O}(\sqrt{\gamma} \mu) \rho' 
	+ [- \mu \alpha_2 \cos(2 \theta) + \mathcal{O}(\gamma \mu) +  \left(\frac{\gamma}{3}\right)^{N-1} [r_N + \mathcal{O}(\gamma,\mu)] 2 N \cos(2N \theta)] \theta'. \end{array} \right.
	\end{align*}
Linearized evolution in $\rho'$ is asymptotically stable, whereas linearized evolution in $\theta'$ is asymptotically stable for half of solutions and is unstable for the other half of solutions.
\end{proof}

\begin{rem}
	Coefficients $\alpha_1$ and $\alpha_2$ in (\ref{F-expansion}) can be easily computed from the linear part of system (\ref{she-random}). Since $\Psi_k(A,\bar{A}) = \mathcal{O}(\mu) |A|$ for $k \neq \pm 1$, we have at the leading order:
	$$
	\mu \alpha_1 = \Delta_{1,1} + \mathcal{O}(\mu^2), \quad \mu \alpha_2 = \Delta_{1,-1} + \mathcal{O}(\mu^2).
	$$
\end{rem}

\begin{rem}
  If $\alpha_2 \neq 0$, we have exactly four time-independent solutions,
  from which two are asymptotically stable and two are unstable. The two stable (or unstable)
  solutions are related to each other by the sign reflection symmetry $u \mapsto -u$ of the discrete SHE
  (\ref{Wnonshe}). This follows from the fact that the center manifold reduction relies on a trigonometric polynomial
  in $2 \theta$ for which $\theta_0$ and $\pi + \theta_0$ are equivalent points. If $\Delta = 0$ in (\ref{meet-hatL}), then we have $4N$ time-independent solutions, identically to the outcome of Theorem \ref{theorem-SH-discrete}.
\end{rem}

\begin{rem}
	\label{rem-BraHol}
	In comparison with Theorem 4.1 in \cite{BraHol23}, we do not include 
	the quadratic terms in the discrete SHE models to avoid
        technical computations of the near-identity transformations.
        We also specify the particular case $k_0 = 1$ in Theorems \ref{theorem-SH-discrete} and
        \ref{thm.rand} to simplify computations
        of the normal forms. On the other hand, we give a precise statement of
        how the translational parameter $\delta$ of
        Theorem~\ref{theorem-SH-local} is determined in the case of the discrete graphs
        (both in the deterministic and random cases) and how many time-independent solutions
        exist for the discrete graph models.
        In addition, the proof of Theorem 4.1 in \cite{BraHol23} is incomplete. The analysis of the Fourier
        mode $w_2$ corresponding to the small eigenvalue $l_2 = -\delta^2 N^2 \rho_2$ is not
        included in the proof. In our setting, the dynamics of $w_2$ 
        is captured by the equation for $\theta$ in \eqref{system-rho-theta}.
        This equation is important, because it determines the number of
        branches bifurcating from the spatially homogeneous equilibrium.
\end{rem}

\section{The discrete SHE on small-world graphs}
\label{sec.small-world}
\setcounter{equation}{0}

In this section, we illustrate the bifurcation analysis of the discrete SHE models on the deterministic and random graphs with
numerical results.
To this end, we use the family of small-world graphs from Example~\ref{ex.SWgraphon}. This is a representative example,
for which Assumption~\ref{ass-bifurcation} can be verified analytically.

Recall the definition of the small-world graphon $W(x,y)=S(x-y)$ with $S\in L^1(\T)$  given by 
\begin{equation}\label{def-S}
S(x)=\left\{ \begin{array}{ll}
               1-p, & |x|\le r,\\
               p,& r<|x|\le \frac{1}{2},
             \end{array}
           \right.
           \end{equation}
where $p \in [0,1]$ and $r \in \left(0,\frac{1}{2}\right)$. The eigenvalues of 
the Hilbert-Schmidt operator $K_S$ in \eqref{def-K_S} are known explicitly
(cf.~\cite{ChiMed19a}):
	\begin{equation}\label{EVsK}
	\lambda_k=\left\{\begin{array}{ll}
	2r (1-2p) +p, & k=0,\\
	(\pi k)^{-1} (1-2p)\sin\left(2\pi k r\right),& k \in \mathbb{Z} \backslash \{0\}.
	\end{array}
      \right.
      \end{equation}
Eigenvalues for $k = 0$ and $k \in \mathbb{N} \backslash \{1\}$ are shown in Figure \ref{f.eigenvalues} by blue dots. The red dot shows the bifurcating eigenvalue at $k_0 = 1$ for which we select $\kappa = \lambda_1 - d_S$ with $d_S = \lambda_0$. It is clear from Figure \ref{f.eigenvalues} that $\lambda_k < \lambda_1$ for $k \geq 2$ so that Assumption \ref{ass-bifurcation} is satisfied.

\begin{figure}[htb!]
  \centering
\includegraphics[width=10cm,height=8cm]{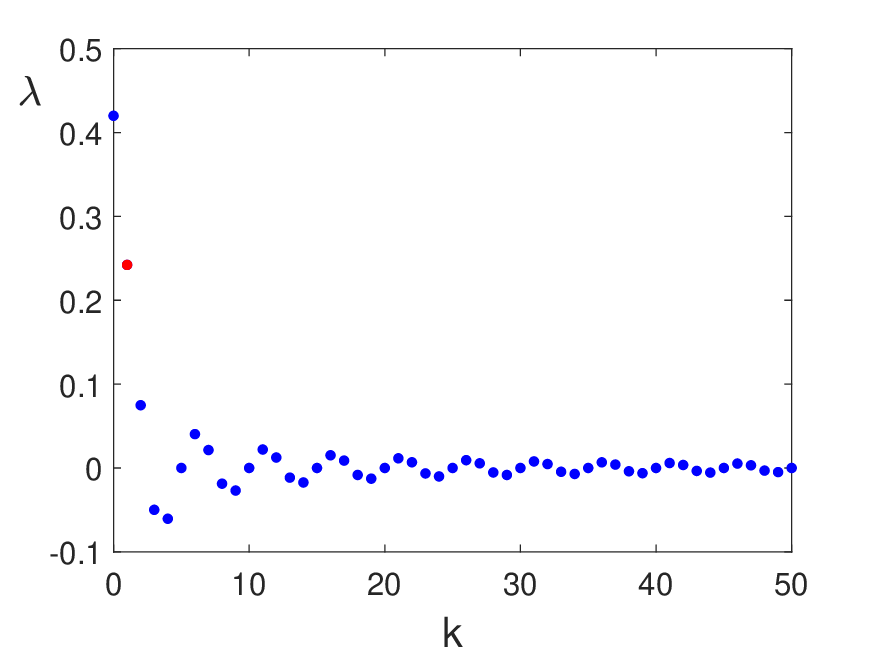}
\caption{Eigenvalues of $K_S$ given by (\ref{EVsK}) for $N = 50$, $p=0.1$ and $r=0.2$.}
	\label{f.eigenvalues}
\end{figure}

By Theorem~\ref{thm.Turing}, for small $\gamma>0$, 
the continuous SHE model (\ref{nonshe}) has a continuous 
family of asymptotically orbitally stable solutions $\{ u_{\gamma}(\cdot + \delta)\}_{\delta \in \mathbb{T}}$, where $u_{\gamma}$ is approximated by 
\begin{equation}
\label{stable-continuous}
      u_{\gamma}(x)= 2\sqrt{\frac{\gamma}{3}} \cos(2\pi x) + {\mathcal O}(\gamma^{3/2}).
\end{equation}

We shall now consider how the stable solutions with the expansion (\ref{stable-continuous}) persist
in the discrete SHE models on the deterministic and random graphs. For the discrete SHE model (\ref{pertshe})
with  (\ref{def-LN}) and \eqref{def-S}, we compute the eigenvalues of $A^N_W$ in the form
      $$
      \lambda_k^N =\frac{1-p}{2N} \sum_{|j|\le \lceil rN\rceil} e^{\frac{-\iu \pi k j}{N}}
      +\frac{p}{2N} \sum_{|j|> \lceil rN\rceil} e^{\frac{-\iu \pi k j}{N}}.
      $$
We set $\kappa = \lambda^N_1-d_W^N$ with $d^N_W = \lambda_0^N$. For small $\gamma>0$, Theorem~\ref{thm.discrete-Turing} yields existence of the two discrete families of solutions $\{ \sigma_m u^N_{\gamma}  \}_{m \in \Z_N}$ and $\{ \sigma_m v^N_{\gamma} \}_{m \in \Z_N}$, 
where $\sigma_m$ is the shift operator defined by 
$(\sigma_m u)_j = u_{j+m}$, $j \in \Z_N$ and the profiles of $u^N_{\gamma}$ and $v^N_{\gamma}$ are approximated respectively by 
\begin{align} 
	\label{stable-discrete}
u_j & =\frac{2}{\sqrt{3}}  \sqrt{\gamma}
                 \cos\left(\frac{\pi j }{N}\right)  + \mathcal{O}\left(\gamma^{3/2}\right),\\
 \label{unstable-discrete}
v_j &=  \frac{2}{\sqrt{3}}  \sqrt{\gamma}
	\cos\left(\frac{\pi j}{N} - \frac{\pi}{2N} \right) + \mathcal{O}\left(\gamma^{3/2}\right), 
	\end{align}
for $j \in\Z_N$. One of the two solutions is asymptotically stable and the other solution is unstable.

The random SHE model (\ref{Wnonshe}), we use the same $\kappa$ and define 
the symmetric matrix $\tilde A^N_W$ with zeros on the main diagonal and with  the entries above the main diagonal $\tilde a^N_{ij}$ being independent random variables such that
$$
\P(\tilde a^N_{ij}=1) =a^N_{ij},\qquad\mbox{and}\qquad  \P(\tilde a^N_{ij}=0) =1-a^N_{ij},
$$
where $a^N_{ij}$ is a coefficient of the adjacency matrix of the weighted Cayley graph.
Theorem~\ref{thm.persistence} implies the existence of asymptotically stable solutions with the approximation
\begin{align} 
\label{stable-random}
u_{\gamma,\delta} =\frac{2}{\sqrt{3}}  \sqrt{\gamma}
\cos\left(\frac{\pi j}{N} - \delta \right) + \mathcal{O}\left(\gamma^{3/2}\right)
\end{align}
for small $\gamma>0$ and appropriate fixed $\delta$.

\begin{figure}[htb!]
	\centering
   \textbf{a}~\includegraphics[width=6cm,height=5cm]{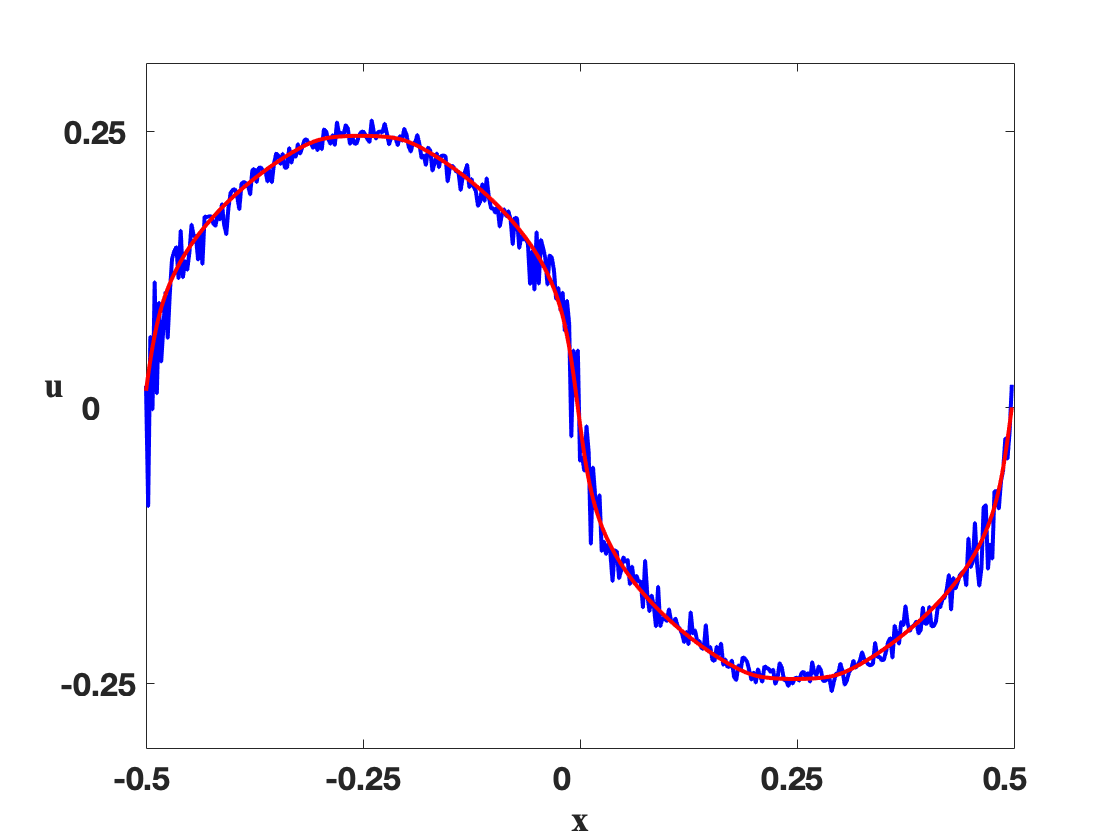}
   \textbf{b}~\includegraphics[width=6cm,height=5cm]{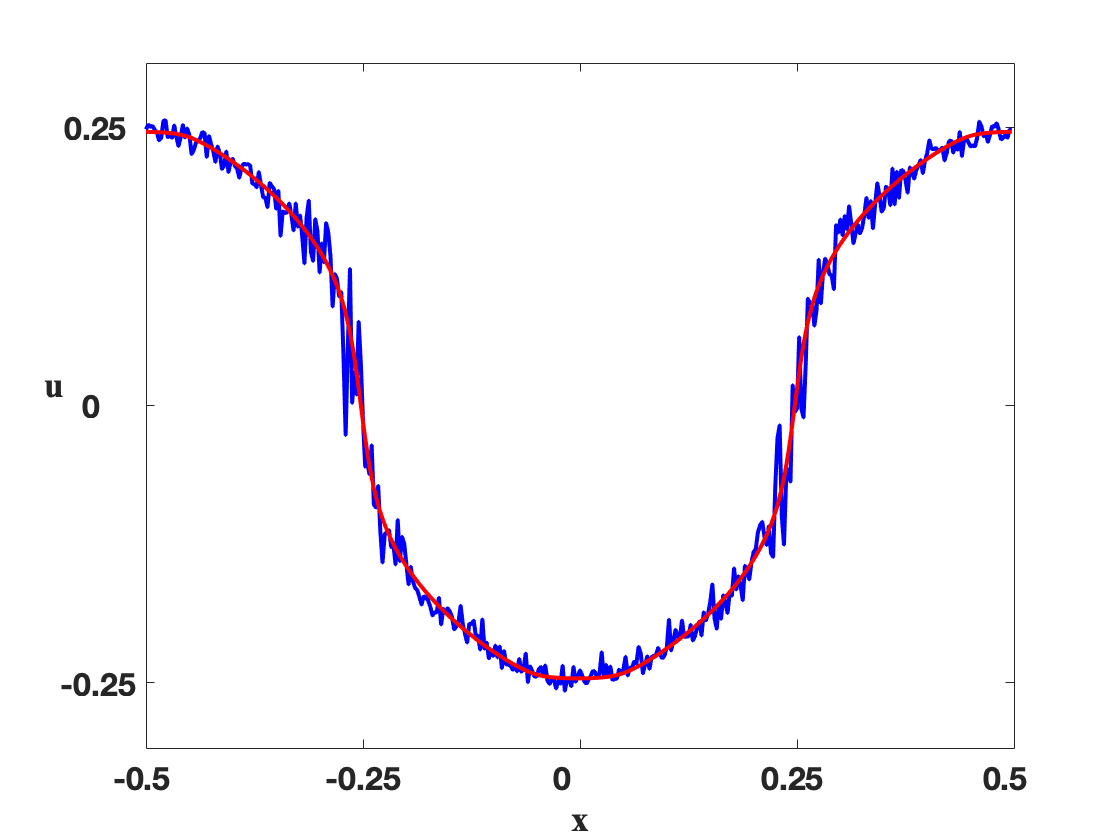}
               \caption{Numerical solutions of the SHE on deterministic SHE with kernel \eqref{def-S} with parameters
                  $N=400$, $p=0.1$, $r=0.2$, and $\gamma=0.05$ plotted in red and its random counterpart plotted in blue.
                  Both models were initialized with the discretization of the leading order term in \eqref{stable-continuous}
                  and integrated for 100 units of time. The shift $\delta$ was set to $-\pi/2$ in plot \textbf{a} and to $0$
                  in plot \textbf{b}.
                }
	\label{f.patterns}
\end{figure}

Figure~\ref{f.patterns} presents results of numerical simulations of the discrete models derived from the
nonlocal SHE with kernel \eqref{def-S} with parameters
$N=400$, $p=0.1$, $r=0.2$, and $\gamma=0.05.$
Both the deterministic and random models were initialized with the leading order term on the right-hand side
of \eqref{stable-continuous}. The shift $\delta$ was set to $-\pi/2$ in plot \textbf{a} and to $0$ in plot \textbf{b}.
The two models were integrated for $100$ units of time. The solutions of the deterministic models are plotted in red
and the solutions of the random model are plotted in blue. 

On finite time intervals, the solutions of the discrete SHE on deterministic and random graphs are
expected to remain close, provided the initial data are sufficiently close and $N$ is large (cf.~\cite{Med19}).
This is clearly seen in the simulations shown in Figure~\ref{f.patterns}. The snapshots of the trajectories
of the deterministic  and the random models, shown in red and blue respectively, lie very close to each other.
For the deterministic model, we know that the trajectory relaxes to one of the $2N$ stable states lying in the vicinity of the initial condition. As to the snapshots of the SHE on random graph, we see that they
relax to the form predicted by Theorem~\ref{thm.rand}. We do not know whether the solutions of the
random SHE shown in Figure~\ref{f.patterns} are close to the actual equilibria, because the evolution
in the translational direction is extremely slow due to the $O(\gamma^{N-1})$ eigenvalue, and cannot be resolved
by numerics for large $N$.


\begin{figure}[htb!]
	\centering
\textbf{a}~	\includegraphics[width=7.5cm,height=6cm]{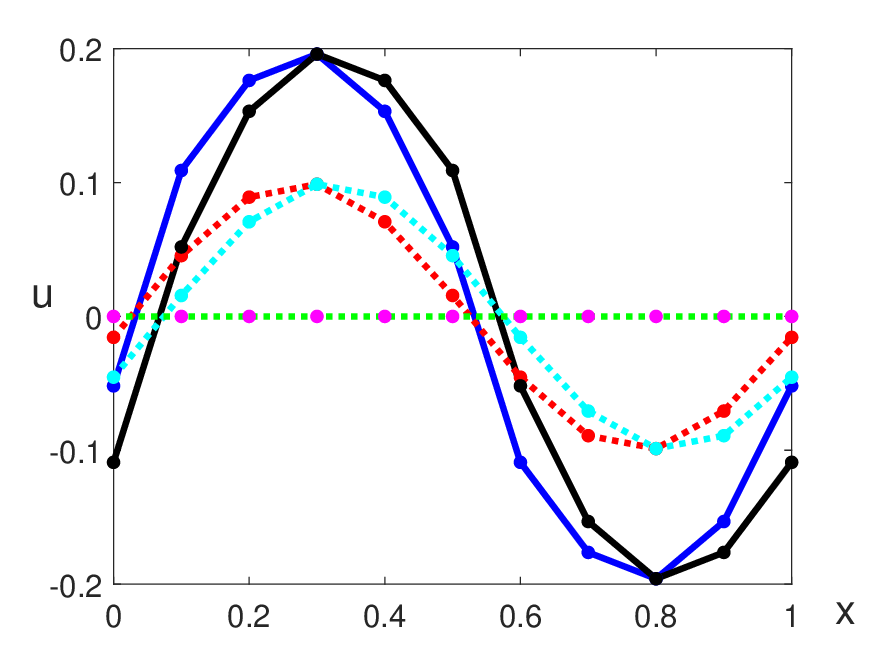}
\textbf{b}~\includegraphics[width=7.5cm,height=6cm]{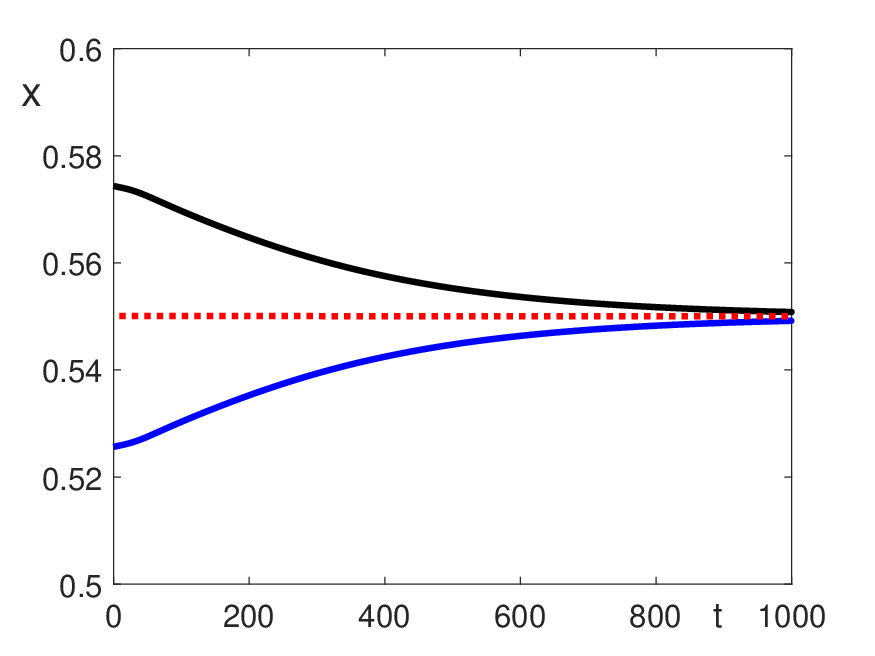}
	\caption{Numerical approximations of the  discrete SHE model (\ref{pertshe})  for $N=5$, $p = 0.9$, $r = 0.2$, and $\gamma = 0.03$.    \textbf{a}~ Two initial conditions (red and cyan curves) and the outcomes of numerical simulations at $t = 100$ (blue and black curves).    \textbf{b}~ Numerically obtained zeros of the two solutions versus time relative to the midpoint between two grid points (red dotted line).}
	\label{f.determ}
\end{figure}

To illustrate the effects of the slow drift in numerical simulations,  we have performed simulations of the
deterministic and random SHE models for $N = 5$.
Figure \ref{f.determ} shows outcomes of the numerical simulations for the discrete SHE model (\ref{pertshe})
with two different initial conditions given by the red and cyan lines. The two initial data are given by
(\ref{stable-continuous}) with a half of the amplitude and shifted to the right and to the left relative to the
grid points (shown by magenta dots). The two solutions quickly converge to the near-equilibrium solutions
with the correct amplitude of $0.2$ (blue and black curves) which then slowly shift further from the grid points
towards the mid-point between the two grid points. The snapshot on the left panel was stopped at time $t = 100$,
but the slow drift of the zeros of the two solutions on the much longer time interval is shown on the right panel.
Zeros are computed from the linear interpolation between the grid points and the color scheme is the same as
that for the final states on the left panel. This numerical experiment shows that the asymptotically stable solution is given
by a discrete translation of the solution $v^N_{\gamma}$ given by the expansion (\ref{unstable-discrete}). 

\begin{figure}[htb!]
	\centering
   \textbf{a}~	\includegraphics[width=7.5cm,height=6cm]{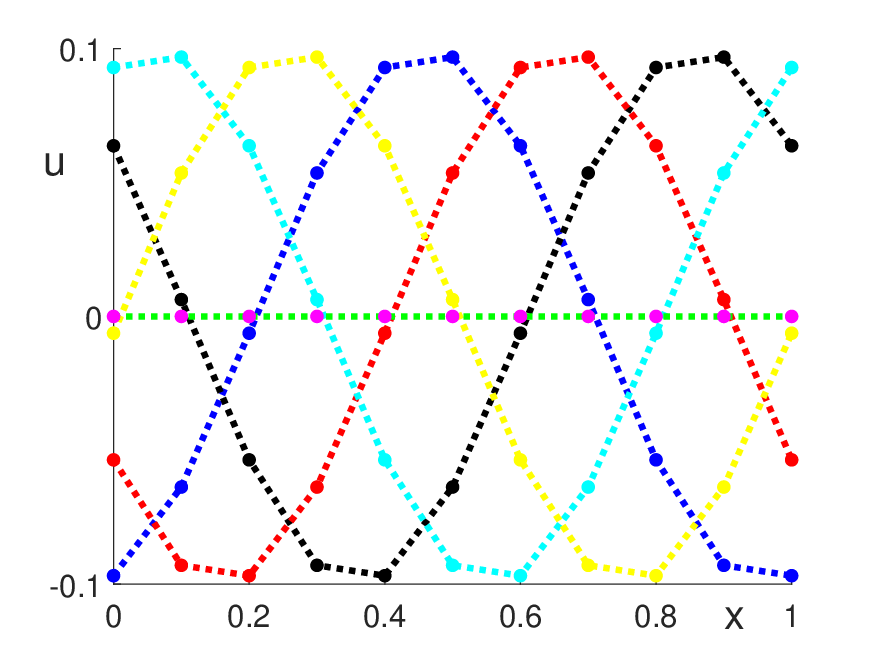}
   \textbf{b}~	\includegraphics[width=7.5cm,height=6cm]{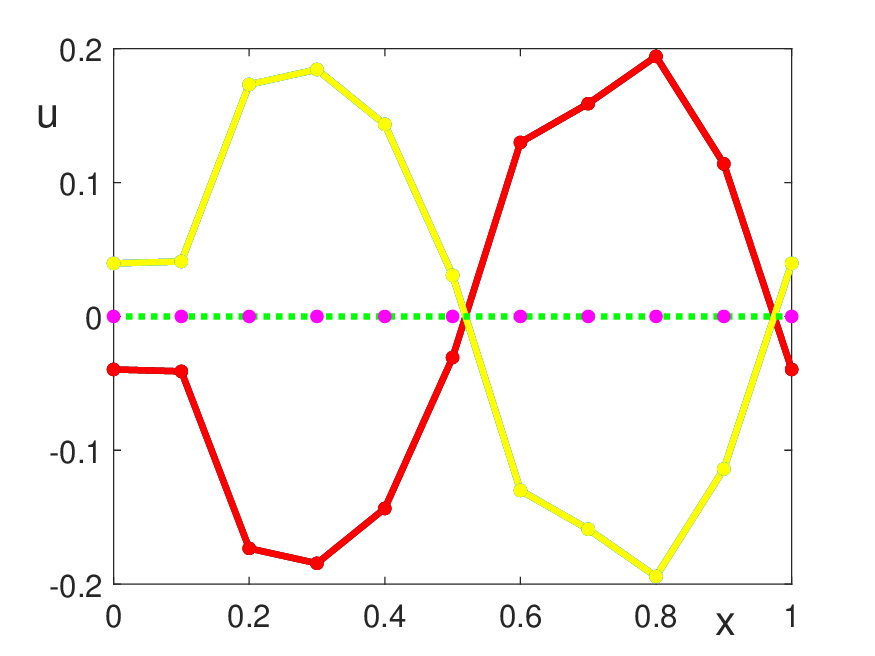} \\
   \textbf{c}~	\includegraphics[width=7.5cm,height=6cm]{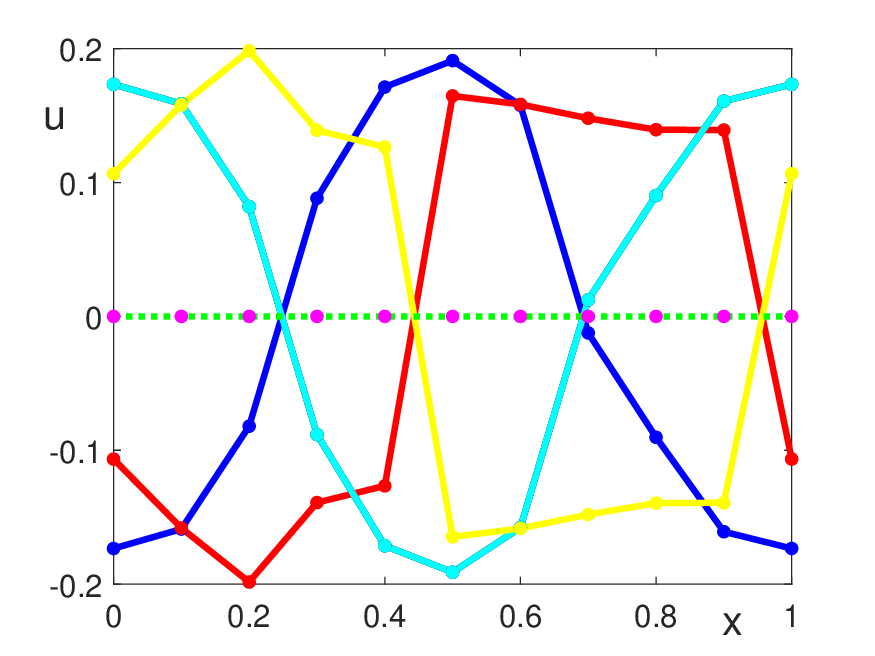} 
   \textbf{d}~	\includegraphics[width=7.5cm,height=6cm]{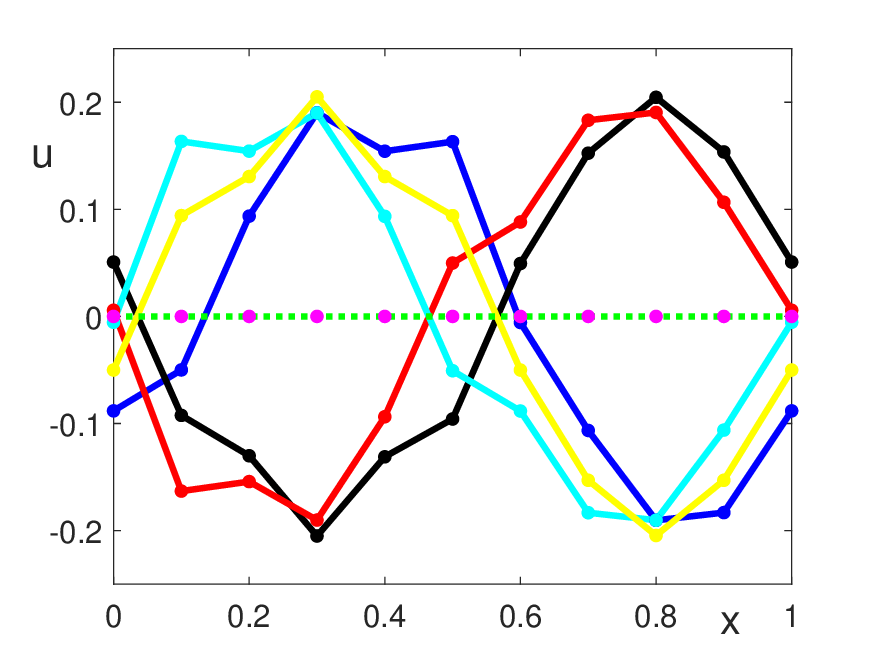} 
	\caption{Numerical approximations of random realizations of the discrete SHE model (\ref{Wnonshe}) for $N=5$, $p = 0.9$, $r = 0.2$, and $\gamma = 0.03$.    \textbf{a}~ Five initial conditions shifted relative to each other. 
   \textbf{b}-\textbf{d}~ Outcomes of numerical simulations with two, four, and six stable solutions. }
	\label{f.random}
\end{figure}

Figure \ref{f.random} shows outcomes of the numerical simulations for the random realizations of the discrete SHE
model (\ref{Wnonshe}) with five different initial conditions shown on the top left panel. The drift generally occurs
faster in the random model because the relevant small eigenvalue is of the size $\mathcal{O}(\mu)$ if $\alpha_2 \neq 0$,
where $\mu$ is defined by (\ref{small-mu}). Only two time-independent solutions are stable if $\alpha_2 \neq 0$
and this generic case is shown on the top right panel. The two stable solutions are given by (\ref{stable-random})
with some specific $\delta$ and they are related by the sign reflection of each other. The color scheme of the final
states corresponds to that of the initial conditions and the missing colors correspond to the final states identical to those shown in red and yellow.

If $\alpha_2$ is zero or close to zero, then four, six, eight, or ten time-independent solutions could be stable according to
Theorem \ref{thm.rand} and since half states are related by the sign reflection of the other half states,
five numerical conditions could be used to detect all cases. However, the cases with more stable states
become rare events. Our numerical simulations showed random configuration of the discrete SHE model
with four and six stable states (bottom left and right panels, respectively) but we did not find random
configurations with eight and ten stable states. The four configurations on the bottom left panel shows
that two states are given by the sign reflection of the other two states (red versus yellow and blue versus cyan).
The same is true for the bottom right panel (red versus cyan and black versus yellow) but there is one more stable
state (blue line) which is not matched by the sign reflected state (since our simulations only involve five initial conditions). 

 \section{Discussion}\label{sec.discuss}
     \setcounter{equation}{0}
     
     Dynamical principles underlying formation of patterns in complex systems are important for understanding
     a range of phenomena arising across multiple disciplines from morphogenesis to autocatalytic reactions, to
     firing patterns in neuronal networks \cite{Murray-Bio}. Traditionally, pattern formation has
     been studied
     in the continuous setting through the framework of reaction-diffusion partial differential equations.
     Recently, the  interest has shifted towards understanding patterns in discrete systems, driven by the
     ubiquity of networks in contemporary science.
     Interestingly, already in his seminal paper on morphogenesis, Turing considered a reaction-diffusion
     model on a lattice, i.e., a discrete system \cite{Turing-Morph}. There is an extensive literature on lattice
     dynamical systems, which covers pattern formation and propagation phenomena \cite{HanKloeden-Lattice}.
     In the past  decade, there have been many studies exploring pattern formation in
     complex networks including random networks. The key analytical challenge in dealing with this class of models,
     which was not present in studies of partial differential equations  or lattice dynamical systems, is
     handling network topology, which can be random.
     Until recently most studies of the dynamics of complex and random
     networks had to resort to heuristic and numerical arguments
     (see \cite{NakMikh2010, Wol2012, HuttAmb2022, AsllaniBusiello, KouHatGui2015} for a representative albeit
      limited sample of studies of Turing patterns in networks).

      The situation has changed with the development of the theory of graphons  \cite{LovGraphLim12}.
      The use of graphons allows a rigorous
     derivation of the continuum limit for interacting dynamical systems on a large class of graphs
     including random
     graphs \cite{Med14a, Med14b, Med19}, which can be used effectively for studying dynamics on
     large networks.
     For example, the use of graphons led to the breakthrough in the analysis of synchronization
     and pattern formation in systems of coupled phase oscillators on networks \cite{CMM23, MedMiz22},
     interacting diffusions on graphs \cite{ORS18, Luc2020},
     mean-field games \cite{CaiHua21}, and graph signal processing \cite{Ruiz2020GraphonSP, GJK22}.

     Graphons provide multiple analytical benefits. Oftentimes, graph limits possess additional
     symmetries that are not present in the individual realizations of random graphs. For example, the graph limit
     of the small-world family of graphs used in the present work is isotropic, in contrast to the graphons
     corresponding
     to the realizations of small-world graphs. This symmetry enables the effective use of the Fourier transform for
     analyzing the limiting model (cf.~Section~\ref{sec.continuum}).
     Likewise, the deterministic (averaged) discrete model \eqref{pertshe} is shift invariant and can be studied
     using discrete Fourier transform (cf.~Section~\ref{sec.Cayley}). This in turn can be used to understand the
     dynamics of the random model
     \eqref{Wnonshe}. In general, the proximity in cut norm of a network to a symmetric network in
     can facilitate the analysis of the original nonsymmetric model.

     In the context of the bifurcation problems considered in this paper, the relationship between the
     spectra and eigenspaces of the deterministic and random discrete models, as well as their continuum
     counterpart, is crucial. The theory of graph limits offers very efficient tools for tracking this relation. The
     convergence of kernel operators in cut--norm automatically implies the proximity of the corresponding
     eigenvalues and eigenspaces \cite{Sze-2011}. This has significant implications for the analysis of
     dynamical systems.
     For instance, once the proximity in cut norm of  the linear operators corresponding to the interaction
     terms of the
     random model and its deterministic counterpart was verified (cf.~Appendix~A), the proximity
     of the eigenvalues and the corresponding eigenspaces followed automatically.
     In contrast, the analysis in
     \cite{BraHol23} based on operator norm topology and Davis-Kahane estimates requires substantial
     efforts. Importantly, the analysis in the present paper extends to sparse networks.

     Our techniques    apply naturally to other pattern--forming
     systems on random graphs, including Gierer--Meinhardt model \cite{HuttAmb2022},
     Mimura--Murray model of interacting prey-predator populations \cite{NakMikh2010} and many other
     activator--inhibitor systems. An interesting area of  potential applications  are 
      neural fields \cite{CGP14}. Fourier methods played
      an important role in the analysis of pattern in nonlocal neural field models, which are 
      very similar to the continuum limit analyzed in Section~\ref{sec.continuum} \cite{LaiTroy2003}.
      We expect that the methods developed this paper
     may lead to interesting results for discrete neural fields with random connectivity. Finally, 
     the concentration estimate for the random linearized operator in Fourier space (cf.~Appendix A)
     may be useful in graph signal processing.

      \appendix
      \section{A concentration inequality}\label{sec.app}
        \setcounter{equation}{0}
        
        Let $A=(a_{ij})\in\R^{n\times n}$ be a symmetric matrix such that $ 0\le a_{ij}\le 1$ and
        $
        a_{ii}=0, \; i\in [n],
        $
        and let  $\tl A=(\tl a_{ij}) \in\R^{n\times n}$ such that $\tl a_{ij},\; 1\le i< j\le n$ are independent
        random variables defined as follows
        $$
       \P (\tl a_{ij}=\alpha_n^{-1})=\alpha_na_{ij}, \qquad \P (\tl a_{ij}=0)=1-\alpha_na_{ij},
        $$
        and $\tl a_{ij}=a_{ji}, \; \tl a_{ii}=0$. Here, $\alpha_n$ is a positive sequence satisfying
        \begin{equation}\label{repeat-alpha}
  1\ge \alpha_n\ge Mn^{-1/3},
  \end{equation}
for some $M>0$ dependent of $N$. Further, define
        \begin{align*}
          D=\diag(d_1,d_2,\dots, d_n), &\quad d_i=n^{-1} \sum_{j=1}^n a_{ij},\\
          \tl D=\diag(\tl d_1,\tl d_2,\dots, \tl d_n), &\quad \tl d_i=n^{-1} \sum_{j=1}^n \tl a_{ij},
        \end{align*}
        and
        $\bA= n^{-1} A$, $\tl\bA=n^{-1} A.$ Consider
        \begin{equation}\label{def-Delta}
        \Delta = F^{-1} \left\{ (\bA-D-\kappa)^2 -(\tl\bA -\tl D-\kappa)^2\right\} F,
      \end{equation}
        where
        \begin{equation*}
F=\left(\omega^{-(j-1)(k-1)}\right)_{1\le j,k\le n},\qquad \omega\doteq
e^{\frac{ \iu 2\pi}{n}},\qquad F^{-1}=n^{-1} F^\ast.
\end{equation*}
The main result of this appendix is the following lemma with the straightforward corollary.

\begin{lem}\label{lem.Bern}
  With probability at least $1-O\left(5^{-n}\right)$,
  $$
\max_{1 \leq j,k \leq n} |\Delta_{jk}| \le C (\alpha_n^3n)^{-1/2},
  $$
  where $C$ does not depend on $n$.
\end{lem}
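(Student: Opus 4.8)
The plan is to center the randomness, expand the difference of squares \emph{exactly}, conjugate by the Fourier matrix $F$, and then control each entry $\Delta_{jk}$ by separating its \emph{linear} and \emph{quadratic} dependence on the independent centered entries of $\tl A$. First I would set $M := \tl A - A$. Since $\E \tl a_{ij} = \alpha_n^{-1}(\alpha_n a_{ij}) = a_{ij}$, the upper-triangular entries $\{M_{ij}\}_{i<j}$ are independent and mean zero, with $|M_{ij}| \le \alpha_n^{-1}$ and $\operatorname{Var}(M_{ij}) = \alpha_n^{-1} a_{ij} - a_{ij}^2 \le \alpha_n^{-1}$. Writing $P := \bA - D - \kappa$ (deterministic, with $\|P\| = O(1)$ because $\bA = n^{-1}A$ has $O(1)$ operator norm and $0 \le d_i \le 1$) and $R := (\tl\bA - \tl D) - (\bA - D) = n^{-1}M - \diag(\delta)$ with $\delta_i := n^{-1}\sum_j M_{ij}$, the matrix $R$ is \emph{linear} in $M$ and the bracket in \eqref{def-Delta} factors as $(\bA - D - \kappa)^2 - (\tl\bA - \tl D - \kappa)^2 = -(PR + RP + R^2)$. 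Hence $\Delta = -F^{-1}(PR + RP + R^2)F$; note that $F^\ast F = nI$ gives $\|F\| = \sqrt n$ and $\|F^{-1}\| = n^{-1/2}$.

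For the linear terms I would expand $[F^{-1}PR\,F]_{jk}$ and $[F^{-1}RP\,F]_{jk}$ as explicit sums $\sum_{p<q} c_{pq} M_{pq}$ over the independent variables, the coefficients $c_{pq}$ pairing a row of $F^{-1}P$ (resp.\ $F^{-1}$) with a column of $F$. Using $\sum_p |(F^{-1}P)_{jp}|^2 = (F^{-1}PP^\ast (F^{-1})^\ast)_{jj} \le n^{-2}\|F^\ast\|\,\|P\|^2\|F\| = O(n^{-1})$ together with $|F_{qk}| = 1$, I would obtain $\sum_{p<q}|c_{pq}|^2 = O(n^{-2})$ and $\max_{p<q}|c_{pq}| = O(n^{-3/2})$. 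Thus each such entry is a sum of independent bounded variables with variance proxy $\sigma^2 = O(\alpha_n^{-1} n^{-2})$ and range $b = O(\alpha_n^{-1} n^{-3/2})$, and Bernstein's inequality at the level $t = \tfrac12 C(\alpha_n^3 n)^{-1/2}$ produces $\min(t^2/\sigma^2, t/b) = \Omega(\alpha_n^{-1/2} n)$, which is $\Omega(n^{7/6})$ at the sparsest admissible rate under \eqref{repeat-alpha}; so the linear part exceeds its share of the bound with probability at most $e^{-cn}$.

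The quadratic term $F^{-1}R^2 F$ is the crux. Its mean is harmless: $\E[R^2]$ is diagonal with entries $O((n\alpha_n)^{-1})$, and $(n\alpha_n)^{-1} \le C(\alpha_n^3 n)^{-1/2}$, so after conjugation it contributes within the bound to every entry. For the fluctuation I would read $[F^{-1}R^2 F]_{jk}$ as a quadratic form in the $\{M_{pq}\}$ and apply a Hanson--Wright (decoupling) bound. The decisive point is that the Fourier weights make the \emph{operator norm} of the form far smaller than its Frobenius norm: for the leading piece $[F^{-1}(n^{-1}M)^2 F]_{jk} = n^{-2}\langle Ma, Mb\rangle$, with $a$ the $j$-th row of $F^{-1}$ ($\|a\| = n^{-1/2}$) and $b$ the $k$-th column of $F$ ($\|b\| = n^{1/2}$), one has $|n^{-2}\langle Ma, Mb\rangle| \le n^{-2}\|M\|^2\|a\|\,\|b\| \le n^{-2}\|M\|_F^2$, so the kernel has operator norm $O(n^{-2})$ while a direct count gives Frobenius norm $O(n^{-3/2})$; writing $\delta = n^{-1}M\mathbf{1}$ shows the remaining pieces $n^{-1}M\diag(\delta)$ and $\diag(\delta)^2$ obey the same two bounds. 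Feeding $\|\cdot\|_F = O(n^{-3/2})$, $\|\cdot\|_{\mathrm{op}} = O(n^{-2})$, variance $\alpha_n^{-1}$ and range $\alpha_n^{-1}$ into the quadratic inequality at $t = \tfrac12 C(\alpha_n^3 n)^{-1/2}$ gives an exponent $\Omega(n^{4/3})$ at the sparsest admissible rate, again $\gg n$.

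Finally I would union-bound over the $n^2$ entries and the $O(1)$ constituent terms; taking $C$ large forces every exponent above $n\ln 5$, so the failure probability is $O(n^2 e^{-cn}) = O(5^{-n})$ (consistent with the $25^{-N} = 5^{-n}$ in the text, as $N = n/2$), and the stated bound $\max_{jk}|\Delta_{jk}| \le C(\alpha_n^3 n)^{-1/2}$ holds on the complement. The hard part will be the quadratic term: one must carry out the decoupling carefully and verify the operator- and Frobenius-norm estimates for \emph{every} piece of $R^2$, including those built from the random degrees $\delta_i$, because it is exactly the gap between $\|\cdot\|_{\mathrm{op}} = O(n^{-2})$ and $\|\cdot\|_F = O(n^{-3/2})$ --- a consequence of the Fourier geometry $\|a\|\,\|b\| = O(1)$ --- that lifts the range-controlled Bernstein exponent from a naive $o(n)$ to $\Omega(n)$. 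The sparsity condition \eqref{repeat-alpha} is used only to keep this exponent (and its linear analogue) safely above $n$.
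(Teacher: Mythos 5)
Your proposal is correct, but it follows a genuinely different route from the paper. The paper never centers the matrix or works entry-by-entry with the independent variables: it expands the bracket in \eqref{def-Delta} into six terms, each linear or quadratic in the differences $\tl\bA-\bA$ and $\tl D-D$, and bounds every entry of $F^{-1}S_iF$ \emph{deterministically} by the cut norm, via $|(F^{-1}(\tl\bA-\bA)F)_{ij}|\le 4n^{-2}\cut{\tl A-A}$, taking as its only probabilistic input the Gu\'edon--Vershynin estimate $\cut{\tl A-A}\lesssim \alpha_n^{-1/2}n^{3/2}$ (itself Bernstein plus a union bound over sign vectors, which is where the $1-O(5^{-n})$ probability comes from). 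Its substitute for your Hanson--Wright step is a trick dual to yours: in $n^{-3}F^\ast \tl A(\tl A-A)F$ the random left factor is absorbed into the vector $x=n^{-1}(\alpha_n\tl A)\operatorname{col}_i(F^\ast)$, which satisfies $\|x\|_\infty\le 1$ \emph{deterministically} because $0\le\alpha_n\tl a_{ij}\le 1$; the same cut-norm bound then applies to $x^T(\tl A-A)y$, and the prefactor $\alpha_n^{-1}$ is exactly what degrades $(\alpha_n n)^{-1/2}$ to $(\alpha_n^3 n)^{-1/2}$. So the paper buys a single concentration event followed by elementary linear algebra --- no decoupling, no kernel-norm bookkeeping, and the quadratic terms require no new probability --- while your route buys standard per-entry tools (Bernstein plus Hanson--Wright) and, incidentally, admits a sharper level for the quadratic fluctuation (your exponents stay above $n$ down to $t\sim\alpha_n^{-2}n^{-1}\le(\alpha_n^3n)^{-1/2}$), at the cost of carrying out the decoupling and the operator/Frobenius-norm estimates for every piece of $R^2$, as you acknowledge. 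Two small repairs your write-up needs: first, $\E[R^2]$ is \emph{not} exactly diagonal --- the cross term $n^{-1}\E[M\diag(\delta)]$ has off-diagonal entries $n^{-2}\E[M_{pq}^2]=O(\alpha_n^{-1}n^{-2})$ --- though after conjugation by $F$ these contribute only $O((\alpha_n n)^{-1})$, which is within budget, so nothing breaks; second, since $F$ is complex, your kernels and coefficients are complex, so Bernstein and Hanson--Wright must be applied to real and imaginary parts separately, costing only a factor of $2$ in the union bound.
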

\begin{cor}\label{cor.Bern}
  For a given $\epsilon>0$ with high probability, for all sufficiently large $n$,
  $$
\max_{1 \leq j,k \leq n} |\Delta_{jk}|  \le \epsilon,
  $$
  provided $M$ in \eqref{repeat-alpha} is large enough.
 \end{cor}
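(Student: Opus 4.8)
The plan is to bound each entry $\Delta_{jk}$ of the matrix in \eqref{def-Delta} separately and then union bound over the $n^2$ index pairs, exploiting that the columns $\phi_k=(\omega^{-(p-1)(k-1)})_p$ of $F$ have \emph{unimodular} entries and norm $\sqrt n$. Writing $P=\bA-D-\kappa$, $\tilde P=\tilde\bA-\tilde D-\kappa$, and setting $X=\tilde A-A$ (symmetric, mean zero, with independent above-diagonal entries $\xi_{pq}:=X_{pq}$ each bounded by $\alpha_n^{-1}$ and of variance at most $\alpha_n^{-1}$), the centered perturbation is
\[
P^2-\tilde P^2=-(PE+EP+E^2),\qquad E:=\tilde P-P=n^{-1}\big(X-\diag(X\mathbf{1})\big),
\]
which annihilates $\mathbf{1}$. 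Since $F^{-1}=n^{-1}F^{\ast}$ and $\|\phi_k\|=\sqrt n$,
\[
\Delta_{jk}=n^{-1}\phi_j^{\ast}\big(P^2-\tilde P^2\big)\phi_k
=-n^{-1}\Big[(P\phi_j)^{\ast}E\phi_k+\phi_j^{\ast}E(P\phi_k)+(E\phi_j)^{\ast}(E\phi_k)\Big].
\]
First I would record the elementary fact that $\|P\|$ and the individual entries of $P\phi_k$ are $O(1)$, since $\bA$ and $D$ have row sums bounded by $1$ and $\kappa$ is fixed; this is used throughout.

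The two linear-in-$E$ terms are the main contribution. Using $(E\phi_k)_p=n^{-1}\sum_q\xi_{pq}\big((\phi_k)_q-(\phi_k)_p\big)$, each is a weighted sum $\sum_{p<q}c_{pq}\xi_{pq}$ of independent, mean-zero, bounded variables whose coefficients are $c_{pq}=O(n^{-2})$. The crucial point is that the differences $(\phi_k)_q-(\phi_k)_p$ are bounded by $2$, so the potentially large row-sum (degree) fluctuations of $X$ enter only through these bounded differences and never as an unaveraged maximum. Hence the variance proxy is $\sum_{p<q}|c_{pq}|^2\operatorname{Var}(\xi_{pq})=O(\alpha_n^{-1}n^{-2})$ and the maximal summand is $O(\alpha_n^{-1}n^{-2})$, so Bernstein's inequality on the real and imaginary parts gives, at the scale $t=C(\alpha_n^3n)^{-1/2}$, a per-entry tail of order $\exp(-c\,\alpha_n^{-2}n)\le\exp(-cn)$; either regime of Bernstein delivers exponential decay here. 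Equivalently, this linear piece can be read off the cut-norm estimate $\cut{\tilde A-A}\lesssim n^{3/2}\alpha_n^{-1/2}$ of \cite{GueVer2016}, since each term is at most $Cn^{-2}\cut{\tilde A-A}\le C(\alpha_n n)^{-1/2}$, which is already below the target $(\alpha_n^3n)^{-1/2}$ because $\alpha_n\le 1$.

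The genuinely delicate term is the quadratic one, $\Delta^{(2)}_{jk}=-n^{-1}(E\phi_j)^{\ast}(E\phi_k)$, governed by $\|E\phi_k\|^2$, whose expectation is $O(\alpha_n^{-1})$. The obstacle is that one \emph{cannot} pass through the operator norm, bounding $\|E\phi_k\|\le\|E\|\sqrt n$: because the entries of $X$ are only bounded by $\alpha_n^{-1}$, the diagonal degree fluctuations make $\max_p|\sum_q\xi_{pq}|=O(n\alpha_n^{-1})$ and hence $\|E\|=O(\alpha_n^{-1})$, which overshoots what is needed by a factor $(n/\alpha_n)^{1/2}$. The resolution is to concentrate the \emph{specific} quadratic form $\|E\phi_k\|^2$ in the independent variables $\{\xi_{pq}\}$ directly --- via a Hanson--Wright/Bernstein estimate for quadratic forms, equivalently the concentration inequality for the adjacency matrix of the $W$-random graph used in \cite{GueVer2016} --- keeping it below the larger scale $n\cdot C(\alpha_n^3n)^{-1/2}$ (a factor $(n/\alpha_n)^{1/2}$ above its mean) with probability $1-O(5^{-n})$. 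Here the spread-out, unimodular structure of $\phi_k$ is again essential: it is only along such directions that $E$ behaves like an operator of norm $O((\alpha_n n)^{-1/2})$ rather than $O(\alpha_n^{-1})$, so that $|\Delta^{(2)}_{jk}|=O(n^{-1}\alpha_n^{-1})=o\big((\alpha_n^3 n)^{-1/2}\big)$.

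Finally I would assemble the pieces: a union bound of the per-entry failures $\exp(-cn)$ over the $n^2$ pairs $(j,k)$ (and over real and imaginary parts) costs only a polynomial factor, absorbed by enlarging the constant in $t$, and the residual exponential rate produces the advertised $1-O(5^{-n})$ (equivalently $1-O(25^{-N})$ with $n=2N$). The hypothesis $\alpha_n\ge Mn^{-1/3}$ in \eqref{repeat-alpha} keeps the heavy-tailed entries $\alpha_n^{-1}\le M^{-1}n^{1/3}$ small enough for both the Bernstein step and the quadratic-form concentration to achieve $e^{-cn}$ decay at the target scale, and it is precisely the threshold at which $(\alpha_n^3 n)^{-1/2}$ can be driven below any prescribed $\epsilon$ once $M$ is large, which is what yields Corollary \ref{cor.Bern}. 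I expect the quadratic term --- and specifically establishing the quadratic-form concentration at the $e^{-cn}$ level while deliberately avoiding the too-large operator norm of $E$ --- to be the main technical obstacle.
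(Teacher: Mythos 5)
Your proposal is correct in substance, and on the decisive step it takes a genuinely different route from the paper. What coincides: the corollary is deduced exactly as in the paper, from the entrywise bound $\max_{j,k}|\Delta_{jk}|\le C(\alpha_n^3n)^{-1/2}$ of Lemma \ref{lem.Bern} together with \eqref{repeat-alpha}, which gives $(\alpha_n^3 n)^{-1/2}\le M^{-3/2}\le \epsilon/C$ for $M$ large; and your treatment of the linear-in-$E$ terms (Bernstein over the independent entries with $O(n^{-2})$ coefficients, or equivalently the bound $Cn^{-2}\|\tilde A-A\|_{\infty\to 1}\le C(\alpha_n n)^{-1/2}$ via \eqref{GueVer}) is in effect how the paper estimates the non-quadratic terms in \eqref{6terms}. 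The difference is the quadratic term. The paper never forms a chaos: it writes $\tilde A^2-A^2=\tilde A(\tilde A-A)+(\tilde A-A)A$, cf.\ \eqref{new-matrix}, so each summand contains exactly one centered random factor, and the uncentered factor is absorbed into the test vector, which is deterministically bounded in sup norm because $\alpha_n\tilde A$ has entries in $[0,1]$ (this is the role of $x=n^{-1}(\alpha_n\tilde A)\operatorname{col}_i(F^\ast)$ in \eqref{new-term}); the cut-norm bound \eqref{GueVer} then applies verbatim, and the $\alpha_n^{-1}$ paid for this rescaling is precisely the origin of the rate $(\alpha_n^3n)^{-1/2}$. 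You instead center first and face the order-two chaos $(E\phi_j)^\ast(E\phi_k)$, which you propose to control by Hanson--Wright at the relaxed scale, a factor $(n/\alpha_n)^{1/2}$ above the mean of $\|E\phi_k\|^2$. This does close, and since you deferred it as the main obstacle, here is the check: writing $\|X\phi_k\|^2$ as a form in the independent variables $\{\xi_{pq}\}_{p<q}$, the diagonal part $2\sum_{p<q}\xi_{pq}^2$ is a sum of independent bounded variables handled by Bernstein, while the off-diagonal part has coefficient matrix $M$ with $\|M\|_F^2\lesssim n^3$, $\|M\|\lesssim n$, and $\|\xi_{pq}\|_{\psi_2}\lesssim\alpha_n^{-1}$, so at $t\sim n^{5/2}\alpha_n^{-3/2}$ the Hanson--Wright exponent is $\min\bigl(t^2/(K^4\|M\|_F^2),\,t/(K^2\|M\|)\bigr)\gtrsim\min\bigl(n^2\alpha_n,\,n^{3/2}\alpha_n^{1/2}\bigr)\ge\min\bigl(Mn^{5/3},M^{1/2}n^{4/3}\bigr)\gg n$; the union bound then goes through. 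Your diagnosis that one cannot route the argument through $\|E\|$ at the $5^{-n}$ probability level is also correct. What each approach buys: the paper's algebraic splitting needs a single, order-one concentration input (the Gu\'edon--Vershynin cut-norm bound) and no chaos machinery, which is why it transfers directly to first-order operators with the improved rate of Remark \ref{rem.final}; your version requires a quadratic-form inequality but keeps the perturbation centered, which makes the size and origin of each contribution more transparent.

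Two inaccuracies to fix, neither fatal. First, the needed quadratic-form concentration is \emph{not} ``equivalently'' the inequality of \cite{GueVer2016}: that result bounds the bilinear form $\max_{x,y\in\{-1,1\}^n}|x^{\rm T}(\tilde A-A)y|$, an order-one statement, and it cannot by itself control $\|E\phi_k\|^2$; you genuinely need Hanson--Wright there (or the paper's splitting, which removes the need). Second, your intermediate claim $|\Delta^{(2)}_{jk}|=O(n^{-1}\alpha_n^{-1})$, i.e.\ that $\|E\phi_k\|^2$ stays at the order of its mean, is stronger than what Hanson--Wright yields at the $5^{-n}$ probability level: at the mean scale $t\sim n^2\alpha_n^{-1}$ the exponent degenerates to $\sim n\alpha_n^2$, which is only $n^{1/3}$ at the sparse end of \eqref{repeat-alpha}. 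Only the relaxed bound $|\Delta^{(2)}_{jk}|\le C(\alpha_n^3n)^{-1/2}$ is available --- but that is exactly what Lemma \ref{lem.Bern} asserts and all that Corollary \ref{cor.Bern} requires.
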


We precede the proof of Lemma \ref{lem.Bern} with a few comments. By construction of $\tl A$,
$\tl a_{ij},\; 1\le i< j\le n$ are independent
random variables, $\E\tl A=\E A$, and
\begin{equation}\label{def-p}
  p\doteq \frac{2}{n(n-1)} \sum_{i<j} \operatorname{Var}{\alpha_n\tl a_{ij}} =\frac{2}{n(n-1)} \sum_{i<j}  \alpha_n a_{ij}
  \left(1-\alpha_na_{ij}\right).
\end{equation}
Note that 
\begin{equation*}
p\simeq \left\{ \begin{array}{ll}\frac{1}{2}\int_Q W(1-W)dx+o(1), & \alpha_n\equiv 1 \qquad \mbox{(dense case)},\\
                  \alpha_n\frac{1}{2}\int_Q Wdx +o(1), &   \alpha_n\searrow 0\qquad \mbox{(sparse case)}.
                \end{array}
                \right.
\end{equation*}
In either case,
\begin{equation}\label{p-asymptotics}
  C_1 \alpha_n\le p\le C_2 \alpha_n
  \end{equation}
for appropriate positive $C_1, C_2$ independent of $n$.

  The proof of Lemma \ref{lem.Bern} relies on the estimates of 
  $\cut{\tl\bA -\bA}$ and $\cut{\tl D -D}$ based on the Bernstein inequality
  (cf.~\cite[Theorem~4.3]{GueVersh2016}).
  Here, the $\infty\to 1$ norm of $A\in\R^{n\times n}$ is defined as follows
\begin{equation}\label{def-cut}
  \cut{A} \doteq \max_{x,y\in \{-1,1\}^n} \left| \sum_{i,j=1}^n a_{ij} x_i y_j\right|.
\end{equation}
  In particular, for $X\in \{\bA, D\}$, we have
\begin{equation}\label{GueVer}
  \cut{\tl X -X}\le  3 \alpha_n^{-1} p^{1/2} n^{3/2} \simeq \alpha_n^{-1/2} n^{3/2}.
\end{equation}
holding with probability at least $1-e^3 5^{-n}$ provided
$$
p>\frac{9}{n}.
$$
For $X=\bA$ \eqref{GueVer} follows from Lemma~4.1 in \cite{GueVersh2016} and \eqref{p-asymptotics}
For $X= D$ \eqref{GueVer} is proved by following the same steps as in the
proof of  Lemma~4.1 in \cite{GueVersh2016}.

\begin{proof}[Proof of Lemma~\ref{lem.Bern}]
  We rewrite  \eqref{def-Delta} as follows
  \begin{align}\nonumber
    \Delta
           & = F^{-1}\left( \tl\bA^2 -\bA^2\right) F-2\kappa F^{-1} \left( \tl\bA -\bA\right) F\\
     \label{6terms}
           &-  F^{-1} \left( \tl\bA -\bA\right) \tl D F -  F^{-1} \bA \left( \tl D - D\right) F\\
    \nonumber
    &- F^{-1}D \left( \tl\bA -\bA\right) F -  F^{-1} \left( \tl D - D\right)\bA F.
  \end{align}

  Denote the six terms in the order as they appear on the right hand side of \eqref{6terms} by
  $F^{-1} S_i F,$ $i\in [6]$. We claim for each $i\in [6]$,
  \begin{equation}\label{claim}
    \left\| F^{-1} S_i F\right\|_\infty \le C (\alpha_n^3 n)^{-1/2}
    \end{equation}
    holding with probability $1-O(5^{-n})$. Once we verify \eqref{claim}, the proof will be complete.
    We verify \eqref{claim} only for $i=1,2$, as the remaining terms are estimated in the same
    manner.

    We start with the second term on the right hand side of \eqref{6terms}
    \begin{align*}
      \left|\left(F^{-1} \left( \tl\bA -\bA\right) F\right)_{ij}\right| & =n^{-2}
\left| \sum_{l,k} F_{il} \left( \tl A - A\right)_{lk} F_{kj}\right|
  \le 4 n^{-2} \max_{x,y\in\{-1,1\}^n} \left| \sum_{l,k} x_l \left( \tl A - A\right)_{lk} y_k\right|\\
           &=n^{-2} \|A-\tl A\|_{\infty\to 1} \le C (\alpha_n n)^{-1/2},
\end{align*}
where we used  $F^{-1}=n F^\ast$, $|F_{kl}|=1$, and \eqref{GueVer}.

We now turn to the first term
$$
F^{-1}\left( \tl\bA^2 -\bA^2\right) F= n^{-3} F^\ast \left( \tl A^2 - A^2\right) F
$$
In this case, we need to estimate,
  \begin{equation} \label{new-matrix}
     n^{-3} F^\ast \left( \tl A^2 - A^2\right) F =
     n^{-3} F^\ast \tl A \left( \tl A - A\right) F + n^{-3} F^\ast \left( \tl A - A\right) A F.
  \end{equation}
  We estimate the first term on the right hand side of \eqref{new-matrix}. The second term is dealt with similarly.
  Let
   \begin{equation} \label{new-term}
     E=n^{-3} F^\ast \tl A \left( \tl A - A\right) F= \alpha_n^{-1} n^{-3} F^\ast (\alpha_n\tl A) \left( \tl A - A\right) F
   \end{equation}
   Denote
   \begin{align*}
     x& =\operatorname{col}_i \left( n^{-1} \alpha_n \tl A F\right)= n^{-1} (\alpha_n\tl A) \operatorname{col}_i (F^\ast),\\
     y& =\operatorname{col}_j (F).
   \end{align*}
   Note $\|x\|_\infty\le 1$.
   Thus,
   \begin{align*}
     \left|E_{ij}\right| & = \alpha_n^{-1} n^{-2} \left|\sum_{i,j}  x_i  \left( \tl A - A\right)_{ij} y_j\right|\\
                         & \le 4\alpha_n^{-1} n^{-2} \left\| \tl A - A\right\|_{\infty\rightarrow 1}
                           \le C \frac{1}{(\alpha_n^3 n)^{1/2}}.
   \end{align*}

   This completes the analysis of $S_1$, the first term on the right hand side of \eqref{6terms}. The remaining terms
   are analyzed similarly and result in either $O\left((\alpha_nn)^{-1/2}\right)$ bound as for $S_2$ above or
   in $O\left((\alpha_n^3 n)^{1/2}\right)$ bound for $S_1$. Thus, $\|\Delta\|_\infty=O\left((\alpha_n^3 n)^{1/2}\right)$
   as claimed.
     \end{proof}

   \begin{rem}\label{rem.final} Note that $O\left((\alpha_n^3 n)^{1/2}\right)$ come from quadratic terms like $S_2$.
     In the second order nonlocal spatial operator in \eqref{nonshe} is replaced with the first order operator
     \eqref{def-K_S}, as one encounters in the neural field type models, then the bound on $\|\Delta\|_\infty$ can
     be improved to $O\left((\alpha_n n)^{1/2}\right)$. This means that the results of this paper would hold for
     $\alpha_n=O(n^{-1})$, i.e., for graphs of \textit{bounded degree}.
     \end{rem}

  \vskip 0.2cm
\noindent
{\bf Acknowledgements.} This work was partially supported by
NSF grant DMS 2009233 (to GSM) and NSERC Discovery grant (to DEP).

\bibliographystyle{amsplain}

\def\cprime{$'$} \def\cprime{$'$} \def\cprime{$'$}
\providecommand{\bysame}{\leavevmode\hbox to3em{\hrulefill}\thinspace}
\providecommand{\MR}{\relax\ifhmode\unskip\space\fi MR }
\providecommand{\MRhref}[2]{%
  \href{http://www.ams.org/mathscinet-getitem?mr=#1}{#2}
}
\providecommand{\href}[2]{#2}

\end{document}